%
%
%
%
\RequirePackage{fix-cm}
\documentclass[twocolumn]{svjour3}          
\smartqed  
%
%
%
%
%
%
\usepackage{marvosym}
\usepackage{graphicx}
\usepackage{balance,enumitem}  
\usepackage{xcolor}
\usepackage{multirow}
\usepackage{tabularx,xspace,url}
\usepackage{amsmath}\allowdisplaybreaks
\usepackage{amssymb}
\usepackage[numbers,sort]{natbib}
\usepackage[vlined,linesnumbered,ruled,norelsize]{algorithm2e}
\usepackage[title]{appendix}
\usepackage[caption=false,labelformat=simple]{subfig}

\graphicspath{{figs/}}

\newcommand{\etal}{{et al.\@}\xspace}
\newcommand{\ie}{{i.e.,}\xspace}

\newcommand{\Ele}{\ensuremath{\mathcal{E}}}
\newcommand{\dom}{\operatorname{dom}}
\newcommand{\argmax}{\mathop{\arg\max}}

\newcommand{\E}{{\mathbb E}\xspace}
\newcommand{\R}{\mathcal{R}}

\newcommand{\e}{{\ensuremath{\mathrm{e}}}}
\newcommand{\ag}{\ensuremath{\mathrm{ag}}}

\newcommand{\avg}{\ensuremath{\mathrm{avg}}}
\newcommand{\OPT}{\operatorname{OPT}}
\newcommand{\Cov}{\operatorname{Cov}}

\newcommand{\lb}{F^l}
\newcommand{\ub}{F^u}
\newcommand{\maxMC}{\operatorname{maxMC}}

\newcommand{\EO}{\E_{\omega}}
\newcommand{\spara}[1]{\vspace{2mm}\noindent\textbf{#1.}}
\newcommand{\abs}[1]{\lvert#1\rvert}

\newcommand{\sharpP}{\ensuremath{\#\mathrm{P}}}

\newcommand{\PequalNP}{\ensuremath{\mathrm{P}\!=\!\mathrm{NP}}\xspace}

\newcommand{\adapone}{\textsc{AdaptIM-1}\xspace}

\newcommand{\adapim}{\textsc{AdaptIM}\xspace}
\newcommand{\adaimm}{\textsc{AdaIMM}\xspace}

\newcommand{\expepic}{\textsc{EptAIM}\xspace}
\newcommand{\expopim}{\textsc{EptAIM-N}\xspace}
\newcommand{\fixaim}{\textsc{FixAIM}\xspace}
\newcommand{\worepic}{\textsc{WstAIM}\xspace}

\newcommand{\imm}{\textsc{IMM}\xspace} 
\newcommand{\ssa}{\textsc{SSA}\xspace}
\newcommand{\dssa}{\textsc{D-SSA}\xspace}

\newcommand{\AG}{\textsc{AdaptGreedy}\xspace}
\newcommand{\EP}{\textsc{EPIC}\xspace}
\newcommand{\MC}{\textsc{MaxCover}\xspace}

\newcommand{\OPIMC}{\textsc{OPIM-C}\xspace}

\title{Efficient Approximation Algorithms for Adaptive Influence Maximization
}





\author{Keke~Huang\textsuperscript{\#1} \and Jing~Tang\textsuperscript{\#2} \and Kai~Han\textsuperscript{3} \and Xiaokui~Xiao\textsuperscript{4} \and Wei~Chen\textsuperscript{5}  \and Aixin~Sun\textsuperscript{1} \and Xueyan~Tang\textsuperscript{1} \and Andrew~Lim\textsuperscript{2}\thanks{\textsuperscript{\#}\,Keke Huang and Jing Tang have contributed equally. \newline * Corresponding author: Jing Tang.}}


\authorrunning{K. Huang, J. Tang, K. Han, X. Xiao, W. Chen, A. Sun, X. Tang, and A. Lim} 

\institute{
Keke Huang \at
\email{khuang005@ntu.edu.sg}
\and
Jing Tang (\Letter) \at
\email{isejtang@nus.edu.sg}
\and
Kai Han \at
\email{hankai@ustc.edu.cn}
\and
Xiaokui Xiao \at
\email{xkxiao@nus.edu.sg}
\and
Wei Chen \at
\email{weic@microsoft.com}
\and
Aixin Sun \at
\email{axsun@ntu.edu.sg}
\and
Xueyan Tang \at
\email{asxytang@ntu.edu.sg}
\and
Andrew Lim \at
\email{isealim@nus.edu.sg}
\at
\begin{description}
	\item[\textsuperscript{1}] School of Computer Science and Engineering, Nanyang Technological University, Singapore
	\item[\textsuperscript{2}] Department of Industrial Systems Engineering and Management, National University of Singapore, Singapore
	\item[\textsuperscript{3}] School of Computer Science and Technology, University of Science and Technology of China, China
	\item[\textsuperscript{4}] School of Computing, National University of Singapore, Singapore
	\item[\textsuperscript{5}] Microsoft Research, China	
\end{description}
}

\date{}
\journalname{Accepted by The VLDB Journal.}

\begin{document}
\maketitle

\begin{abstract}
Given a social network $G$ and an integer $k$, the influence maximization (IM) problem asks for a seed set $S$ of $k$ nodes from $G$ to maximize the expected number of nodes influenced via a propagation model. The majority of the existing algorithms for the IM problem are developed only under the {\it non-adaptive} setting, \ie where all $k$ seed nodes are selected in one batch without observing how they influence other users in real world. In this paper, we study the {\it adaptive} IM problem where the $k$ seed nodes are selected in batches of equal size $b$, such that the $i$-th batch is identified after the actual influence results of the former $i-1$ batches are observed. In this paper, we propose the first practical algorithm for the adaptive IM problem that could provide the {\it worst-case} approximation guarantee of $1-\e^{\rho_b(\varepsilon-1)}$, where $\rho_b=1-(1-1/b)^b$ and $\varepsilon \in (0, 1)$ is a user-specified parameter. In particular, we propose a general framework \AG that could be instantiated by any existing non-adaptive IM algorithms with expected approximation guarantee. Our approach is based on a novel randomized policy that is applicable to the general adaptive stochastic maximization problem, which may be of independent interest. In addition, we propose a novel non-adaptive IM algorithm called \EP which not only provides strong expected approximation guarantee, but also presents superior performance compared with the existing IM algorithms. Meanwhile, we clarify some existing misunderstandings in recent work and shed light on further study of the adaptive IM problem. We conduct experiments on real social networks to evaluate our proposed algorithms comprehensively, and the experimental results strongly corroborate the superiorities and effectiveness of our approach.
\keywords{Social Networks \and Influence Maximization \and Adaptive Influence Maximization \and Adaptive Stochastic Optimization \and Approximation Algorithms}
\end{abstract}

\begin{sloppy}
\section{Introduction} \label{sec:intro}

The proliferation of online social networks such as Facebook and Twitter has motivated considerable research on viral marketing as an optimization problem. For example, an advertiser could provide a few individuals (referred to as ``seed nodes'') in a social network with free product samples, in exchange for them to spread the good words about the product, so as to create a large cascade of influence on other social network users via word-of-mouth recommendations. This phenomenon has been firstly formulated as {\it Influence Maximization (IM)} problem in \cite{Kempe_maxInfluence_2003}, which aims to select a number of seed nodes to maximize the influence propagation created.

Formally, the input to IM consists of a social network $G = (V, E)$, a budget $k$, and an influence model $M$. The influence model $M$ captures the uncertainty of influence propagation in $G$, and it defines a set of \textit{realizations}, each of which represents a possible scenario of the influence propagation among the nodes in $G$. The problem seeks to {\it activate} (\ie~influence) a seed set $S$ of $k$ nodes that can maximize the expected number of influenced individuals over all realizations.

A plethora of techniques have been proposed for IM \cite{LeskovecKGFVG07, Goyal_SIMPATH_2011, OhsakaAYK14, Tang_TIM_2014, Tang_IMM_2015, GalhotraAR16, Nguyen_DSSA_2016, Borgs_RIS_2014, Kempe_maxInfluence_2003, Tang_infMax_2017, Arora_debunk_2017, Huang_SSA_2017, OhsakaSFK17, Tang_IMhop_2018, Tang_OPIM_2018}. Almost all techniques, however, require that the seed set $S$ be decided before the influence propagation process, which means that they work in a ``non-adaptive'' manner. In other words, if an advertiser has $k$ product samples, she would have to commit all samples to $k$ chosen social network users before observing how they may influence other users. In practice, however, an advertiser could employ a more {\it adaptive} strategy to disseminate the product samples. For example, she may choose to give out half of the samples, and then wait for a while to find out which users are influenced; after that, she could examine the set $U$ of users that have not been influenced, and then disseminate the remaining samples to $k/2$ users that have a large influence on $U$. This strategy is likely to be more effective than giving out all $k$ samples all at once, since the dissemination of the second batch of products is optimized using the knowledge obtained from the first batch's results.

In fact, the above adaptive approach has been applied in HEALER~\cite{Yadav_DIM_2016}, a software agent deployed in practice since 2016, which recommends sequential intervention plans for homeless shelters. HEALER aims to raise awareness about HIV among homeless youth by maximizing the spread of awareness in the social network of the target population. It chooses people as the seed nodes, who are ``activated'' by participating the intervention plans for HIV. The choices of seed nodes are adaptive, \ie~they are selected in batches and the choice of a batch depends on the observed results of all previous batches.

Golovin and Krause~\cite{Golovin_adaptive_2011} are the first to study IM under the adaptive setting, assuming that the $k$ seed nodes are chosen in a {\it sequential} manner, such that the selection of the $(i+1)$-th node is performed after the influence of the first $i$ nodes has been observed. Specifically, they consider that (i) the social network conforms to a realization $\phi$ that is generated by independently same every edge in graph $G$ (according to the {\em independent cascade model}), but (ii) $\phi$ is not known to the advertiser before the selection of the first seed node. Then, after the $i$-th seed node $v_i$ is chosen, the part of $\phi$ relevant to $\{v_1, v_2, \ldots, v_i\}$ (\ie~the nodes that they can influence in $\phi$) is revealed to the advertiser, based on which she can (i) eliminate the realizations that contradict what she observes, and (ii) select the next seed node as one that has a large expected influence over the remaining realizations.

Golovin and Krause~\cite{Golovin_adaptive_2011} propose a simple greedy algorithm for adaptive IM that returns a seed set $S$ whose influence is at least $1-1/\e$ of the optimum under the case that only one seed is selected in each batch (\ie~$b=1$). Nevertheless, the algorithm requires knowing the {\it exact} expected influence of every node, which is impractical since the computation of expected spread is \sharpP-hard in general \cite{Chen_MIA_2010,Chen_LDAG_2010}. Vaswani and Lakshmanan~\cite{Vaswani_adapIM_2016} extend Golovin \etal's model by allowing selecting $b\geq 1$ seed nodes in each batch, and by accommodating errors in the estimation of expected spreads. Their method returns an $(1-\e^{-(1-1/\e)^2/\eta})$-approximation under this setting, where $\eta$ is certain number bigger than $1$. However, this relaxed approach is still impractical in that its requirement on the accuracy of expected spread estimation cannot be met by any existing algorithms (see Section~\ref{sec:prelim-existing} for a discussion).

To mitigate the above defects, there are two recent papers for adaptive IM, \ie our preliminary work~\cite{Han_AIM_2018} and Sun \etal's paper~\cite{Sun_MRIM_2018}. Han \etal \cite{Han_AIM_2018} propose the first practical algorithm \adapim. Meanwhile, Sun \etal \cite{Sun_MRIM_2018} propose another approximation algorithm \adaimm for a variant of the adaptive IM problem, referred to as {\it Multi-Round Influence Maximization (MRIM)}. These two algorithms are claimed to provide the same worst-case approximation guarantee of $1-\e^{(1-1/\e)(\varepsilon-1)}$ with high probability, where $\varepsilon\in (0,1)$ is a user-specified parameter. Unfortunately, both of their theoretical analyses on the approximation guarantee contain some gaps that invalidate their claims. We shall elaborate these misclaims in Section~\ref{sec:revisited}.
	
\spara{Contribution} Motivated by the deficiency of existing techniques and misunderstandings, we conduct an intensive study on the adaptive IM problem, and propose the first practical solution. Meanwhile, we derive a rigorous theoretical analysis that clarifies existing confusing points and lays a solid foundation for further study. Specifically, our contributions include the following.

First, we propose a novel randomized policy that can provide strong theoretical guarantees for the general adaptive stochastic maximization problem, which may be of independent interest. This new solution can be adopted in many other settings apart from adaptive IM, e.g., active learning~\cite{Cuong_active_2013}, active inspection~\cite{Hollinger_active_2013}, optimal information gathering \cite{Chen_sequential_2015}, which are special cases of adaptive stochastic maximization. In particular, our policy imposes far fewer constraints than the existing solutions \cite{Golovin_adaptive_2011}, which are more applicable. The derivation of approximation results requires a non-trivial extension of the existing theoretical results on adaptive algorithms \cite{Golovin_adaptive_2011}, and some new techniques like Azuma-Hoeffding inequality~\cite{Mitzenmacher_Martingales_2005}. In addition, we propose a framework \AG for adaptive IM that enables us to construct strong approximation solutions using existing non-adaptive IM methods as building blocks. In particular, we prove that \AG achieves a {\it worst-case} approximation guarantee of $1-\e^{\rho_b(\varepsilon-1)}$ with high probability when the number of adaptive rounds is reasonably large, where $\varepsilon\in (0,1)$ is a user-specified parameter and $\rho_b=1-(1-1/b)^b$ is set by the batch size $b$. Moreover, we show that \AG can also provide an {\it expected} approximation guarantee of $1-\e^{\rho_b(\varepsilon-1)}$. Meanwhile, our analyses uncover some potential gaps in two recent works \cite{Han_AIM_2018,Sun_MRIM_2018} and shed light on the future work of the adaptive IM problem.

Second, we conduct an in-depth analysis on how \AG could be instantiated with the state-of-the-art non-adaptive IM algorithms. The overall approximation guarantee of \AG relies on the {\it expected} approximation guarantee of the non-adaptive IM algorithm used by \AG. However, existing non-adaptive IM algorithms do not benefit \AG in this regard, as there is no known result on their expected approximation guarantees. Motivated by this fact, we develop a new non-adaptive IM method, \EP, that provides an attractive expected approximation ratio by utilizing martingale stopping theorem \cite{Mitzenmacher_Martingales_2005}. We establish \AG's performance guarantee instantiated with \EP.

Third, we conduct extensive experiments to test the performance of \AG and \EP, and the experimental results strongly corroborate the effectiveness and efficiency of our approach.

\section{Preliminaries} \label{sec:prelim}

\begin{figure*}[!t]
	\centering
	\subfloat[A social network]{\includegraphics[width=0.23\linewidth]{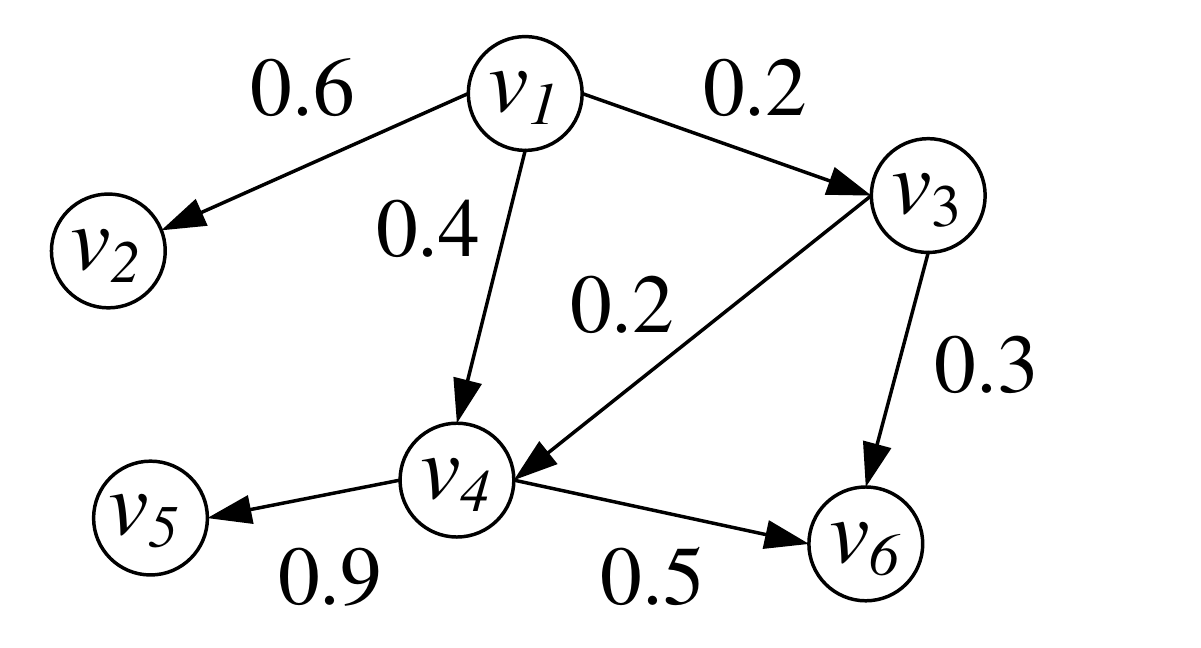}\label{fig:social-network}}\hfill
	\subfloat[Realization $\phi_1$]{\includegraphics[width=0.23\linewidth]{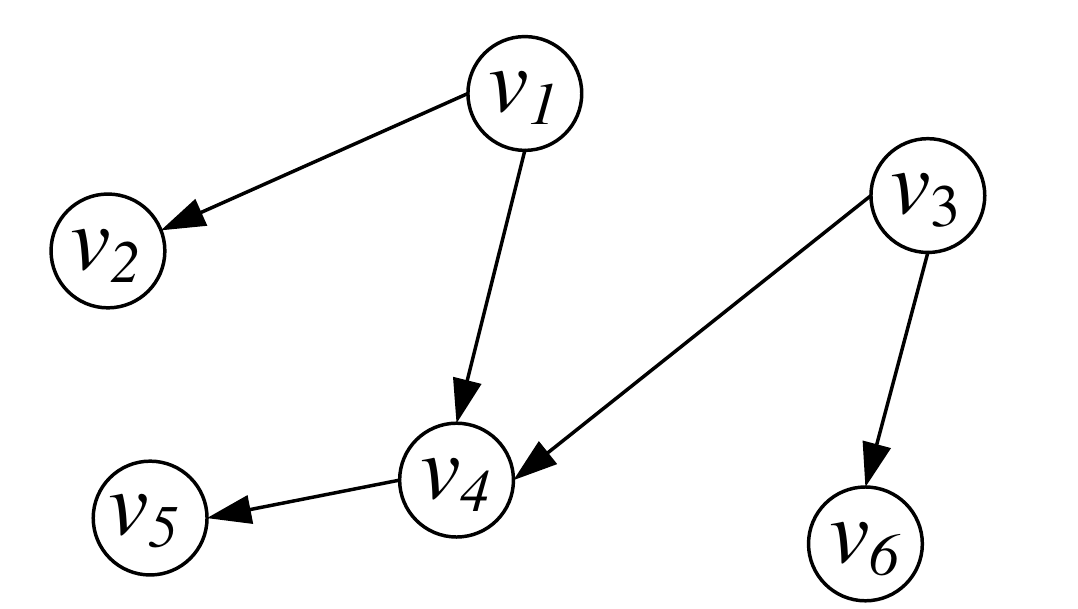}\label{fig:p1}}\hfill
	\subfloat[Realization $\phi_2$]{\includegraphics[width=0.23\linewidth]{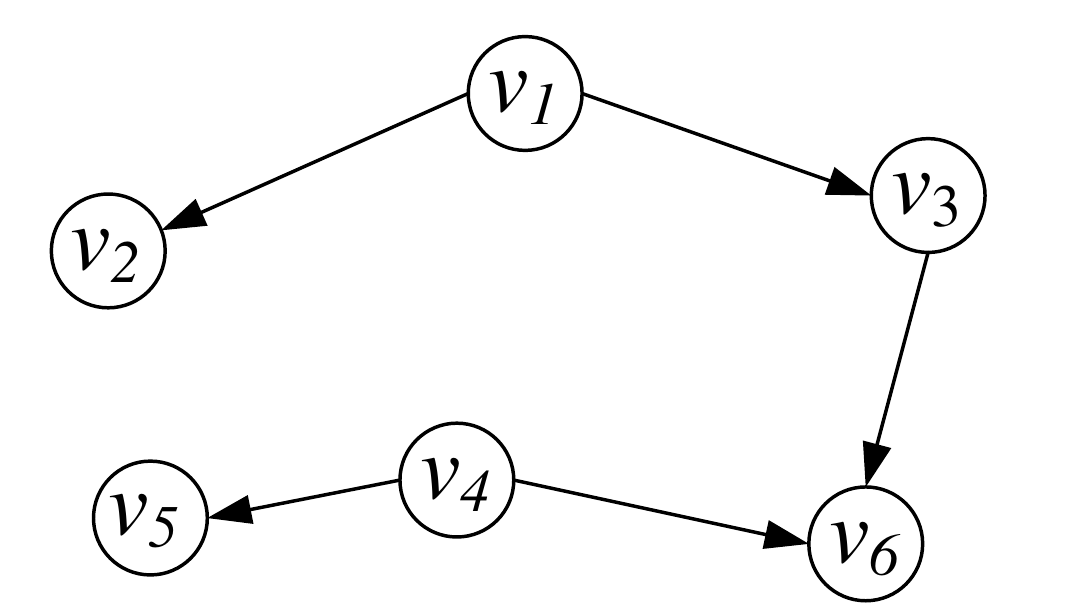}\label{fig:p2}}\hfill
	\subfloat[Realization $\phi_3$]{\includegraphics[width=0.23\linewidth]{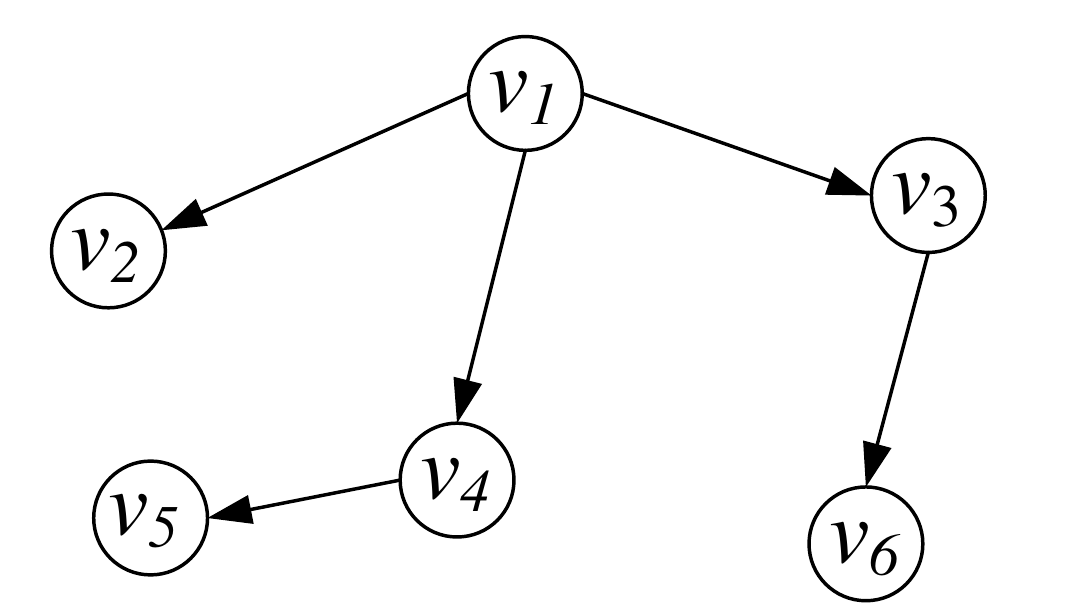}\label{fig:p3}}
	\caption{A social network and three of its realizations.}\label{fig:prelim-world}
\end{figure*}

\begin{figure*}[!t]
	\centering
	\vspace{-0.2in}
	\subfloat[$v_1$ as the first seed]{\includegraphics[width=0.23\linewidth]{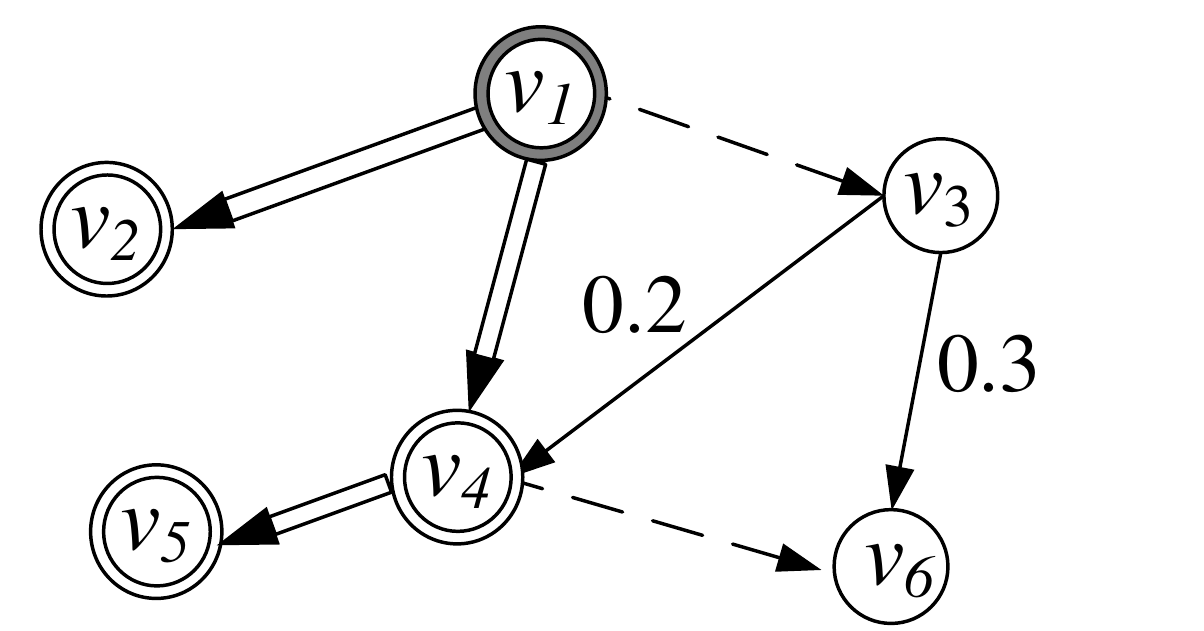}\label{fig:Adaptive1}}\hfill
	\subfloat[Second residual graph]{\includegraphics[width=0.2\linewidth]{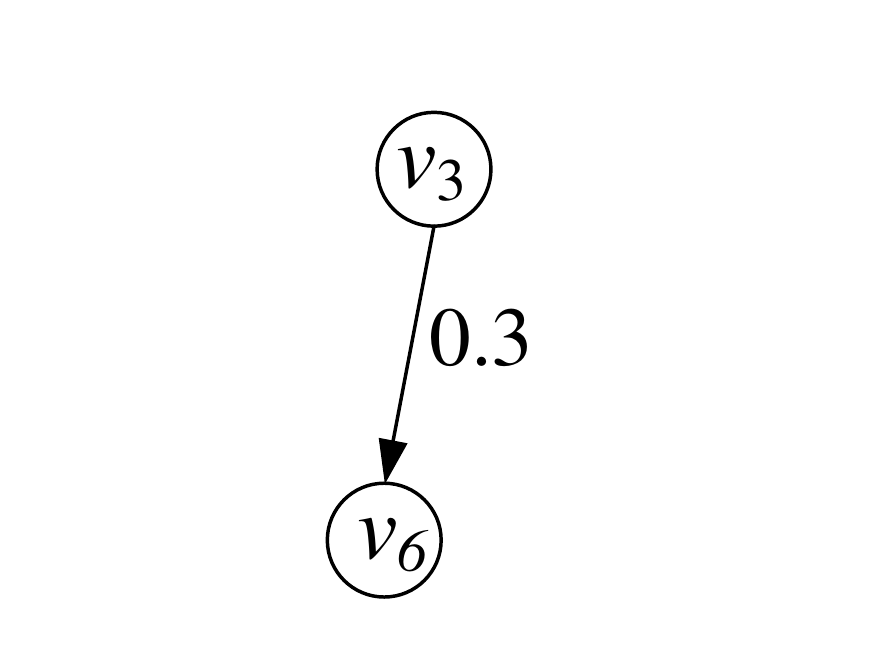}\label{fig:Adaptive2}}\hfill
	\subfloat[$v_3$ as the second seed]{\includegraphics[width=0.22\linewidth]{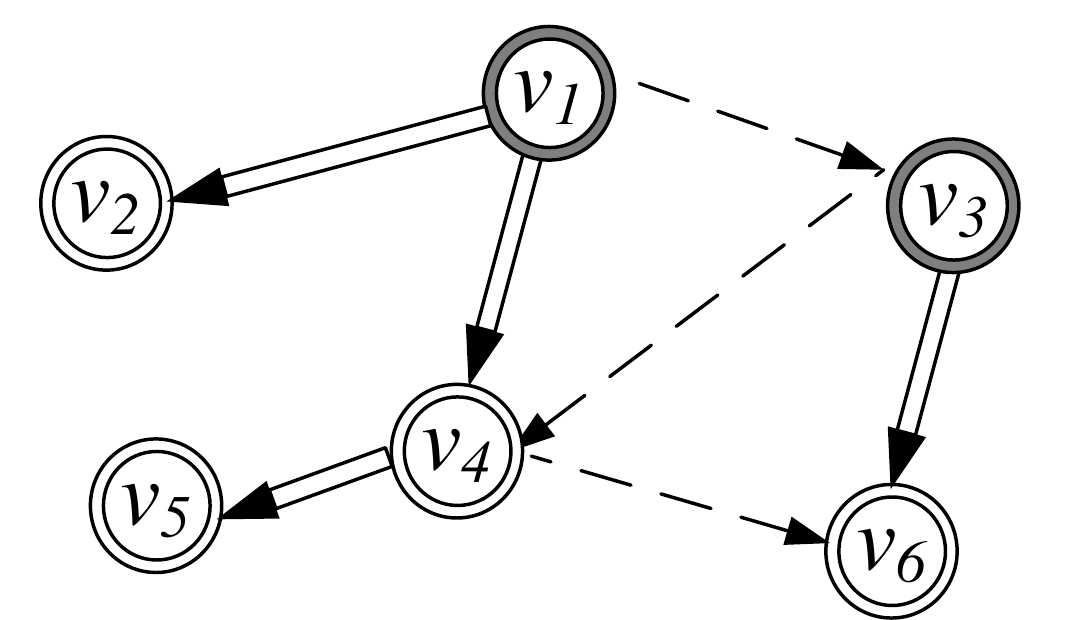}\label{fig:Adaptive3}}\hfill
	\subfloat[Non-adaptive IM]{\includegraphics[width=0.23\linewidth]{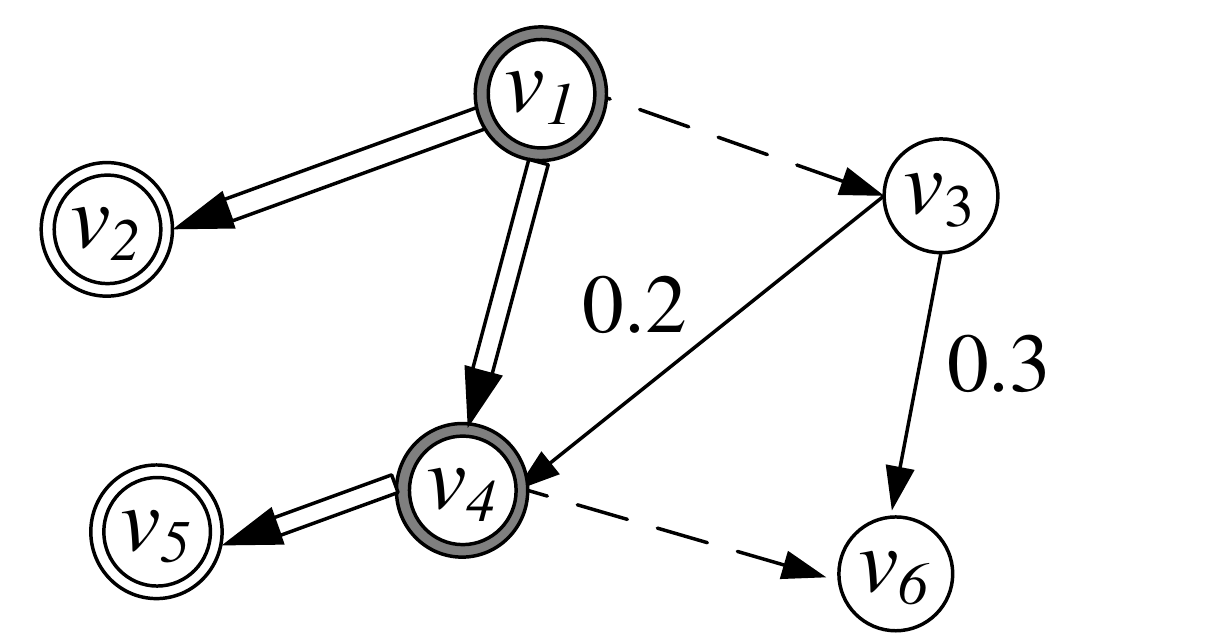}\label{fig:non-adaptive}}
	\caption{Adaptive vs. non-adaptive seed selection with $k=2$.}\label{fig:prelim-IM}
\end{figure*}

\subsection{IM and Realization} \label{sec:prelim-IM}

Let $G = (V, E)$ be a social network with a node set $V$ and an edge set $E$, such that $|V| = n$ and $|E| = m$. We assume that the propagation of influence on $G$ follows the {\it independent cascade (IC)} model \cite{Kempe_maxInfluence_2003}, in which each edge $(u, v)$ in $G$ is associated with a probability $p(u, v)$, and the influence propagation process is defined as a discrete-time stochastic process as follows. At timestamp $0$, we activate a set $S$ of {\it seed nodes}. Then, at each subsequent timestamp $t$, each node $u$ that is newly activated at timestamp $t-1$ has a chance to activate each of its neighbors $v$, such that the probability of activation equals $p(u, v)$. After that, $u$ stays active, but cannot activate any other nodes. The propagation process terminates when no node is newly activated at a certain timestamp, and the total number of nodes activated then is defined as the {\it influence spread} of $S$, denoted as $I_G(S)$. The {\it vanilla} influence maximization (IM) problem asks for a seed set $S$ of $k$ nodes that maximizes the expected value of influence spread $\E[I_G(S)]$.

As demonstrated in \cite{Kempe_maxInfluence_2003}, the IC model also has an interpretation based on {\it realization}. Specifically, a realization $\phi$ represents a \textit{live-edge} graph \cite{Kempe_maxInfluence_2003} generated by removing each edge $(u, v)$ in $G$ independently with $1 - p(u, v)$ probability. For example, \figurename~\ref{fig:prelim-world} shows a social network and three of its realizations. We use $\Phi$ to denote a random realization. For any seed set $S$, let $I_\phi(S)$ be the number of nodes in $\phi$ (including those in $S$) that can be reached from $S$ via a directed path starting from $S$, and $\E_{\Phi}[I_\Phi(S)]$ be the expectation over all realizations. It is shown in \cite{Kempe_maxInfluence_2003} that 
\begin{equation*}
	\E_{\Phi}[I_\Phi(S)] = \E[I_G(S)].
\end{equation*}
In other words, if we are to address the vanilla IM problem, it suffices to identify a seed set $S$ whose expected spread over all realizations is the largest.



\subsection{Adaptive IM} \label{sec:prelim-adaptive}

Suppose that the influence propagation on $G$ conforms to a \textit{realization} $\phi$, i.e., for any seed set $S$, the nodes that it can influence are exactly the nodes that it can reach in $\phi$. The {\em adaptive influence maximization (IM)} problem \cite{Golovin_adaptive_2011} considers that $\phi$ is unknown in advance, but can be partially revealed after we choose some nodes as seeds. For example, consider the social network in \figurename~\ref{fig:social-network}, and suppose that the realization is $\phi_1$, as shown in \figurename~\ref{fig:p1}. Assume that we choose $v_1$ as the first seed node. In that case, we can observe $v_1$'s influence on $v_2$ and $v_4$, since $v_1$ has two outgoing edges $(v_1, v_2)$ and $(v_1, v_4)$ in $\phi_1$. Similarly, we can observe $v_4$'s influence on $v_5$. In addition, we can also observe that $v_1$ (resp.\ $v_4$) cannot influence $v_3$ (resp.\ $v_6$), as $\phi_1$ does not contain an edge from $v_1$ to $v_3$ (resp.\ $v_4$ to $v_6$). \figurename~\ref{fig:Adaptive1} shows the results of the influence propagation from $v_1$, with each double-line (dashed-line) arrow denoting a successful (resp.\ failed) step of influence.

In general, after choosing a partial set $S'$ of seed nodes, we can learn all nodes that $S'$ can reach in $\phi$, as well as the out-edges of those nodes in $\phi$. 
This is referred to as the full-adoption feedback model in~\cite{Golovin_adaptive_2011}.
This enables us to optimize the choices of the remaining seed nodes since we can focus on the nodes that have not been influenced by $S'$. For instance, consider that selecting another seed node based on the result in \figurename~\ref{fig:Adaptive1}. In that case, we can omit the nodes that have been influenced (i.e., $v_1$, $v_2$, $v_4$, and $v_5$), and focus on the subgraph induced by the remaining nodes, as shown in \figurename~\ref{fig:Adaptive2}. Based on this, we can choose $v_3$ as the second seed node, which yields the result in \figurename~\ref{fig:Adaptive3}, where we have $6$ nodes influenced in total. In contrast, if we are to non-adaptively choose two seed nodes from the social network in \figurename~\ref{fig:social-network}, we may end up choosing $v_1$ and $v_4$, in which case we would obtain the result in \figurename~\ref{fig:non-adaptive} when the realization is $\phi_1$ in \figurename~\ref{fig:p1}. In other words, we can only influence $4$ nodes instead of $6$ nodes.

Assume that we are to choose $k$ seed nodes in $r$ batches of equal size $b = k/r$, and that we are allowed to observe the influence propagation in $\phi$ for $r$ times in total, once after the selection of each batch. The adaptive IM problem asks for a seed selection policy that could generate the next seed set given the feedback of previous seed sets to maximize the expected influence spread over all realizations. Observe that when $b = k$ (i.e., $r = 1$), the problem degenerates to the vanilla IM problem.

We aim to develop algorithms for adaptive IM that provide non-trivial guarantees in terms of both accuracy (i.e., the expected influence of $\bigcup_i S_i$) and efficiency (i.e., the time required to identify $S_i$). We do not consider the ``waiting time'' required to observe the influence of a seed node batch $S_i$ before the selection of the next batch $S_{i+1}$, since it is independent of the algorithms used. That is, we target at helping the advertiser to identify $S_{i+1}$ as quickly as possible after the effects of $S_i$ have been observed.

\begin{table}[!t]
	\centering
    \caption{Frequently used notations}
    \label{tbl:prelim-notations}
    \vspace{2mm}
    \setlength{\tabcolsep}{0.5em} 
    \renewcommand{\arraystretch}{1.2}
	\begin{tabular}{|c|m{6.1cm}|}\hline
        \textbf{Notation} & \multicolumn{1}{c|}{\textbf{Description}} \\ \hline
        $G=(V,E)$ & a social network with node set $V$ and edge set $E$\\ \hline
        $n,m$ & the numbers of nodes and edges in $G$, respectively\\ \hline
        $k$ & the total number of selected seed nodes \\ \hline
        $b$ & the number of nodes selected in each batch \\ \hline
        $G_i$ & the $i$-th residual graph \\ \hline
        $n_i,m_i$ & the numbers of nodes and edges in $G_i$, respectively \\ \hline
        $S_i$ & the seed set selected from $G_i$  \\ \hline
        $S_i^o$ & the optimal seed set in $G_i$  \\ \hline
        $\rho_b$ & approximation guarantee for \MC with $\rho_b = 1-(1-1/b)^b$. \\ \hline
        $\OPT_{k,b}$ & the optimal expected influence spread of $k$ seed nodes under the setting of selecting $b$ nodes in each batch\\ \hline
        $\OPT_b(G_i)$ & the optimal expected influence spread of $b$ seed nodes in $G_i$\\ \hline
        $I_G(S)$ & the number of nodes activated by $S$ in $G$\\ \hline
        $\Cov_{\R}(S)$ & the number of RR-sets in $\R$ that overlap $S$ \\ \hline
        $F_{\R}(S)$ & the fraction of RR-sets in $\R$ that overlap $S$ \\ \hline
        $\E[I(S)]$ & the expected spread of seed set $S$ \\ \hline 
    \end{tabular}
    \vspace{-0.1in}
\end{table}

Table~\ref{tbl:prelim-notations} lists the notations that are frequently used in the remainder of the paper.

\subsection{Existing Solutions} \label{sec:prelim-existing}

The first solution to adaptive IM is by \cite{Golovin_adaptive_2011}. It assumes that $b = 1$ (i.e., each batch consists of only one seed node), and adopts a greedy approach as follows. Given $G$, it first identifies the node $v_1$ whose expected spread $\E[I_G(\{v_1\})]$ on $G$ is the largest, and selects it as the first seed. Then, it observes the nodes that are influenced by $v_1$ (which are in accordance to the realization $\phi_0$), and removes them from $G$. Let $G_2$ denote the subgraph of $G$ induced by the remaining nodes. After that, for the $i$-th ($i > 1$) batch, it (i) selects the node $v_i$ with the maximum expected spread $\E[I_{G_i}(\{v_i\})]$ on $G_i$, (ii) observes the influence of $v_i$ on $G_i$, and then (iii) generates a new graph $G_{i+1}$ by removing from $G_i$ those nodes that are influenced by $v_i$. For convenience, we refer to $G_i$ as the $i$-th {\it residual graph}, and let $G_1 = G$.

Let $\OPT_{k, b}$ denote the expected spread of the optimal solution to the adaptive IM problem parameterized with $k$ and $b$. Golovin et al.~\cite{Golovin_adaptive_2011} show that the above greedy approach returns a solution whose expected spread is at least $(1-1/\e) \cdot \OPT_{k, 1}$. This approximation guarantee, however, cannot be achieved in polynomial time because (i) in the $i$-th batch, it requires identifying a node $v_i$ with the maximum largest expected spread $\E[I_{G_i}(\{v_i\})]$ on $G_i$, but (ii) computing the exact expected spread of a node in the IC model is \sharpP-hard in general \cite{Chen_MIA_2010}.

To remedy the above deficiency, Vaswani~and Lakshmanan \cite{Vaswani_adapIM_2016} propose a relaxed approach that allows errors in the estimation of expected spreads. In particular, they assume that for any node set $S$ and any residual graph $G_i$, we can derive an estimation $\tilde{\E}[I_{G_i}(S)]$ of $\E[I_{G_i}(S)]$, such that
\begin{equation} \label{eqn:prelim-alpha}
\alpha^\bot \cdot \E[I_{G_i}(S)] \le \tilde{\E}[I_{G_i}(S)] \le \alpha^\top \cdot \E[I_{G_i}(S)],
\end{equation}
with $\alpha^\top/\alpha^\bot$ bounded from above by a parameter $\eta$. They show that, by feeding such estimated expected spreads to the greedy approach in \cite{Golovin_adaptive_2011}, it can achieve an approximation guarantee of $1 - \e^{-1/\eta}$. In addition, they show that the greedy approach can be extended to the case when $b > 1$, with one simple change: in the $i$-th batch, instead of selecting only one node, we select a size-$b$ seed set $S_i$ whose estimated expected spread on $G_i$ is at least $1 - 1/\e$ fraction of the largest estimated expected spread on $G_i$. In that case, they show that the resulting approximation guarantee is ${1-\e^{-{(1-1/\e)^2}/{\eta}}}$.

Unfortunately, the accuracy requirement in Equation~\eqref{eqn:prelim-alpha} is still impractical as no existing algorithm for evaluating expected spread can meet the requirement. Indeed, as computing $\E[I_{G_i}(S)]$ is \sharpP-hard, the existing algorithms can only derive $\tilde{\E}[I_{G_i}(S)]$ in a probabilistic manner, which implies that both $\alpha^\bot$ and $\alpha^\top$ are random numbers depending on $G_i$. As $G_i$ is also random, it is hard to derive a meaningful fixed upper bound $\eta$ for $\alpha^\top/\alpha^\bot$. Therefore, we think that the approximation ratio proposed in \cite{Vaswani_adapIM_2016} only has theoretical value and cannot be implemented in practice.

Motivated by those defects of previous work, recently, \adapim \cite{Han_AIM_2018} and \adaimm \cite{Sun_MRIM_2018} algorithms are proposed for the adaptive IM problem. These two algorithms are claimed to provide an approximation guarantee of $1-\e^{(1-1/\e)(\varepsilon-1)}$ with $1-\delta$ probability where $\varepsilon, \delta \in (0,1)$. Unfortunately, both of the theoretical analyses contain some gaps which make their claims invalid. The detailed analyses are presented in Section~\ref{sec:revisited}.

\section{Our Solution} \label{sec:solu}
Fundamentally, adaptive IM is based on adaptive submodular optimization \cite{Golovin_adaptive_2011}. In this section, we first present a \textit{randomized} adaptive greedy policy to address the general optimization problem, and analyze the corresponding theoretical guarantees. Our solutions generalize the results of Golovin and Krause \cite{Golovin_adaptive_2011}, and thus it may be of independent interest. Finally, we propose a general framework \AG upon which we can build specific algorithms with seed selection algorithms to address the adaptive IM problem.

\subsection{Notations and Definitions}
Let $\Ele$ be a finite set of \textit{items} (e.g., a set of node sets), and $O$ be a set of possible \textit{states} (e.g., the activation statuses of nodes). A \textit{realization} is a function $\phi\colon \Ele \mapsto O$ mapping every item $e$ to a state $o$. We use $\Phi$ to denote a random realization. Let $p(\phi):=\Pr[\Phi=\phi]$ be the probability distribution over all realizations. We sequentially select an item $e$, and then observe its state $\Phi(e)$. Based on the observation, we would choose the next item and get to see its state, and so on. We use $\psi$, referred to as \textit{partial realization}, to represent the relation such that $\psi:= \{(e,o)\colon \psi(e)=o\}$ for any $\psi \subseteq \Ele \times O$. Let $\dom(\psi)$ denote the domain of $\psi$ such that $\dom(\psi):= \{e\colon \exists o, (e,o)\in \psi\}$. A partial realization $\psi$ is \textit{consistent with} a realization $\phi$, referred to as $\phi\sim \psi$, if for every $e\in \dom(\psi)$, $\psi(e)=\phi(e)$. Furthermore, we say $\psi\subseteq \psi^\prime$, \ie~$\psi$ is a \textit{subrealization} of $\psi^\prime$, if there exists some $\phi$ such that $\phi\sim\psi$ and $\phi\sim\psi^\prime$, and $\dom(\psi)\subseteq \dom(\psi^\prime)$.

A \textit{policy} $\pi$ is an adaptive strategy for selecting items in $\Ele$ based on current partial realization $\psi$. In this paper, we consider a \textit{randomized} policy that selects items following certain distribution. To explicitly reveal the randomness of a randomized policy, we denote $\pi(\omega)$ as a random policy chosen from a set of all possible deterministic policies with respect to a random variable $\omega$. Intuitively, $\omega$ represents all random source of the randomized policy. In addition, let $\pi(\omega,\psi)$ be the item picked by policy $\pi(\omega)$ under partial realization $\psi$. We denote $\Ele(\pi(\omega),\phi)$ as the set of items selected by $\pi(\omega)$ under realization $\phi$. We consider a utility function $f\colon 2^{\Ele}\times O^{\Ele}\mapsto \mathbb{R}_{\geq 0}$ depending on the picked items and their states. Then, the expected utility of a policy $\pi(\omega)$ is $f_{\avg}(\pi(\omega)):= \E_{\Phi}[f(\Ele(\pi(\omega),\Phi),\Phi)]$. The goal of the \textit{adaptive stochastic maximization} problem is to find a randomized policy $\pi^\ast$ such that
\begin{align*}
	&\pi^\ast \in \argmax_{\pi} \E_\omega[f_{\avg}(\pi(\omega))] \\
	&\text{s.t. } \abs{\Ele(\pi(\omega),\phi)}\leq r \text{ for all $\omega$ and all $\phi$},
\end{align*}


In addition, for any partial realization $\psi$, let $\Delta(e\mid \psi)$ and $\Delta(\pi(\omega)\mid \psi)$ denote the \textit{conditional marginal benefit} of an item $e$ and a policy $\pi(\omega)$ conditioned on observing partial realization $\psi$, defined as
\begin{align}
\Delta(e\mid \psi)
&:=\E_{\Phi}\big[f(\dom(\psi)\cup\{e\},\Phi)\mid \Phi\sim \psi\big]\nonumber\\
&\mathrel{\phantom{:=}}\mathop{-}\E_{\Phi}\big[f(\dom(\psi),\Phi)\mid \Phi\sim \psi\big],\label{eqn:Delta}\\
\Delta(\pi(\omega)\mid \psi)
&:=\E_{\Phi}\big[f(\dom(\psi)\cup\Ele(\pi(\omega),\Phi),\Phi)\mid \Phi\sim \psi\big]\nonumber\\
&\mathrel{\phantom{:=}}\mathop{-}\E_{\Phi}\big[f(\dom(\psi),\Phi)\mid \Phi\sim \psi\big].\label{eqn:Delta_pi}
\end{align}
We are now ready to introduce the notations of monotonicity and submodularity to the adaptive setting:
\begin{definition}[Adaptive Monotonicity]\label{def:monotonicity}
	A function $f$ is adaptive monotone with respect to the realization distribution $p(\phi)$ if for all $\psi$ with $\Pr[\Phi\sim\psi]>0$ and all $e\in \Ele$, we have 
	\begin{equation*}
		\Delta(e\mid \psi)\geq 0.
	\end{equation*}
\end{definition}

\begin{definition}[Adaptive Submodularity]\label{def:submodularity}
	A function $f$ is adaptive submodular with respect to the realization distribution $p(\phi)$ if for all $\psi\subseteq \psi^\prime$ and $e\in \Ele\setminus \dom(\psi^\prime)$, we have 
	\begin{equation*}
	\Delta(e\mid \psi)\geq \Delta(e\mid \psi^\prime).
	\end{equation*}
\end{definition}
\spara{Remark} Note that a deterministic policy is a special randomized policy. Meanwhile, the solution of any randomized policy is a convex combination of solutions of deterministic policies. Thus, the optimal solution of any randomized policy can always be achieved by some deterministic policy. As a consequence, for any randomized policy $\pi$ and any deterministic policy $\pi^\prime$, it holds that $\max_{\pi} \EO[f_{\avg}(\pi(\omega))] = \max_{\pi^\prime} f_{\avg}(\pi^\prime)$.

\subsection{Adaptive Greedy Policy}\label{sec:AGP}
A policy $\pi(\omega)$ is called an \textit{$\alpha$-approximate greedy policy} if for all $\psi$, it always picks an item such that
\begin{equation*}
	\Delta(\pi(\omega,\psi)\mid \psi)\geq\alpha\max_{e}\Delta(e\mid \psi).
\end{equation*}
Golovin and Krause \cite{Golovin_adaptive_2011} show that when the utility function $f$ is adaptive monotone and adaptive submodular, an $\alpha$-approximate greedy policy $\pi$ can achieve an approximation ratio of $(1-\e^{-\alpha})$ for the adaptive stochastic maximization problem, \ie~$\EO[f_{\avg}(\pi(\omega))]\geq (1-\e^{-\alpha})\EO[f_{\avg}(\pi^\ast(\omega))]$ for all policies $\pi^\ast$. However, in some applications, we would construct a randomized policy that may perform arbitrary worse (with low probability). For example, if the true value of $\Delta(e\mid \psi)$ is difficult to obtain, a policy maximizes an estimate of $\Delta(e\mid \psi)$ using sampling method may perform arbitrary worse in terms of maximizing $\Delta(e\mid \psi)$ (e.g.,~with some probability, even though very small, all the state-of-the-art IM algorithms may perform arbitrary worse). Such a randomized policy is not an $\alpha$-approximate greedy policy, for which Golovin and Krause's theoretical results \cite{Golovin_adaptive_2011} are not applicable.

Inspired by Golovin and Krause's work \cite{Golovin_adaptive_2011}, we call a randomized policy $\pi^{\ag}$ an \textit{expected} $\alpha$-approximate greedy policy if it selects an item with $\alpha$-approximation to the best greedy selection in expectation, \ie~ 
\begin{equation*}
\EO[\Delta(\pi^{\ag}(\omega,\psi)\mid \psi)]\geq\alpha\max_{e}\Delta(e\mid \psi),
\end{equation*}
where the expectation is taken over the internal randomness of policy. For convenience, let $\xi(\pi^{\ag}(\omega),\psi)$ denote the random approximation ratio obtained by the policy $\pi^{\ag}(\omega)$ on $\psi$, \ie
\begin{equation*}
	\xi(\pi^{\ag}(\omega),\psi):=\frac{\Delta(\pi^{\ag}(\omega,\psi)\mid \psi)}{\max_{e}\Delta(e\mid \psi)}.
\end{equation*}
Then, an expected $\alpha$-approximate greedy policy $\pi^{\ag}$ can be described as
\begin{equation*}
\EO[\xi(\pi^{\ag}(\omega),\psi)]\geq\alpha.
\end{equation*}
In the following, we show that such an expected $\alpha$-approximate greedy policy have strong theoretical guarantees. 

\subsection{Approximation Guarantees}
We consider a general version of randomized policy $\pi^{\ag}$ that can return an expected $\alpha_i$-approximate solution for the $i$-th item selection under every partial realization $\psi_{i-1}$, where $\psi_{i-1}$ represents a partial realization after we pick the first $(i-1)$ items, \ie~for every $\psi_{i-1}$, $\EO[\xi(\pi^{\ag}(\omega),\psi_{i-1})]\geq\alpha_i$. For a conventional version of $\pi^{\ag}$, one may set $\alpha_i=\alpha$ for every $i$, while for the general version of $\pi^{\ag}$, $\alpha_i$'s can be distinct.

Let $\Psi_{i}(\pi(\omega),\phi)$ represent a random partial realization after the policy $\pi(\omega)$ picks the first $i$ items under the realization $\phi$. For simplicity, we omit $\pi^{\ag}$ in $\Psi_{i}(\pi^{\ag}(\omega),\phi)$ when policy $\pi^{\ag}$ is used, \ie~$\Psi_{i}(\omega,\phi):=\Psi_{i}(\pi^{\ag}(\omega),\phi)$. Then, given any realization $\phi$ and any partial realization $\psi_{i-1}$ such that $\phi\sim \psi_{i-1}$, policy $\pi^{\ag}$ satisfies
\begin{equation}\label{eqn:greedy_policy}
	\EO\big[\xi(\pi^{\ag}(\omega),\psi_{i-1})\mid \Psi_{i-1}(\omega,\phi)=\psi_{i-1}\big]\geq \alpha_i,
\end{equation}
which describes that $\pi^{\ag}$ always returns an expected $\alpha_i$-approximate solution for the $i$-th item selection under every partial realization $\psi_{i-1}$ no matter what items are chosen by $\pi^{\ag}$ in the first $(i-1)$ rounds. 

To facilitate the analysis that follows, we define the notions of ``policy truncation'' and ``policy concatenation'', which are conceptual operations performed by a policy.

\begin{definition}[Policy Truncation]\label{def:policytruncation}
	For any adaptive policy $\pi$, the policy truncation $\pi_{i}$ denotes an adaptive policy that performs exactly the same as $\pi$, except that $\pi_{i}$ only selects the first $i$ items for any $i\leq r$.
\end{definition}

\begin{definition}[Policy Concatenation]\label{def:policyconcat}
	For any two adaptive policy $\pi$ and $\pi'$, the policy concatenation $\pi\oplus\pi'$ denotes an adaptive policy that first executes the policy $\pi$, and then executes $\pi'$ from a fresh start as if any knowledge on the feedback obtained while running $\pi$ is ignored.
\end{definition}

\subsubsection{Expected Approximation Guarantee}\label{sec:expected-approximation}
The following theorem shows a concept of \textit{expected} approximation guarantee for policy $\pi^{\ag}$. 

\begin{theorem} \label{thm:greedy-ept-approx}
	If $f$ is adaptive monotone and adaptive submodular, and $\pi^{\ag}$ returns an expected $\alpha_i$-approximate solution for the $i$-th item selection under every partial realization $\psi_{i-1}$, then the policy achieves an expected approximation guarantee of $1-\e^{-\alpha}$, where $\alpha=\frac{1}{r}\sum_{i=1}^r \alpha_i$ and $r$ is the total number of items selected, \ie~for all policies $\pi^\ast$, we have
	\begin{equation}
		\EO[f_{\avg}(\pi^{\ag}(\omega))]\geq (1-\e^{-\alpha})\EO[f_{\avg}(\pi^\ast(\omega))].
	\end{equation} 
\end{theorem}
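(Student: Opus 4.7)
The plan is to extend the Golovin--Krause analysis of $\alpha$-approximate greedy policies by carrying the policy randomness $\omega$ through the expectation and replacing the uniform decay factor $(1-\alpha/r)^r$ by $\prod_i(1-\alpha_i/r)$, which we control via an AM--GM step at the very end.

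\textbf{Step 1 (per-round contraction).} By the Remark after Definition~\ref{def:submodularity}, it suffices to compare $\pi^{\ag}$ against an arbitrary \emph{deterministic} policy $\pi^\ast$. I would first prove that for every $i=0,1,\ldots,r-1$,
\[
f_{\avg}(\pi^\ast) - \EO[f_{\avg}(\pi^{\ag}_i(\omega))] \le \frac{r}{\alpha_{i+1}}\Bigl(\EO[f_{\avg}(\pi^{\ag}_{i+1}(\omega))] - \EO[f_{\avg}(\pi^{\ag}_i(\omega))]\Bigr).
\]
Fix $\omega$; by adaptive monotonicity and the standard ``OPT-vs-greedy'' inequality for adaptive submodular functions applied to the concatenation $\pi^{\ag}_i(\omega)\oplus\pi^\ast$,
\[
f_{\avg}(\pi^\ast)-f_{\avg}(\pi^{\ag}_i(\omega))\le r\cdot \E_{\Phi}\Bigl[\max_{e}\Delta(e\mid\Psi_i(\omega,\Phi))\Bigr].
\]
Conditioning on $\Psi_i=\psi_i$, the expected-greedy hypothesis~\eqref{eqn:greedy_policy} yields $\alpha_{i+1}\max_e\Delta(e\mid\psi_i)\le \EO[\Delta(\pi^{\ag}(\omega,\psi_i)\mid\psi_i)]$; substituting and integrating over $\psi_i$ identifies the right-hand side with $\tfrac{r}{\alpha_{i+1}}(\EO[f_{\avg}(\pi^{\ag}_{i+1}(\omega))] - \EO[f_{\avg}(\pi^{\ag}_i(\omega))])$ once the outer expectation over $\omega$ is also taken.

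\textbf{Step 2 (telescoping and AM--GM).} Writing $\delta_i := f_{\avg}(\pi^\ast) - \EO[f_{\avg}(\pi^{\ag}_i(\omega))]$, Step~1 rearranges to $\delta_{i+1}\le (1-\alpha_{i+1}/r)\,\delta_i$. Iterating from $i=0$ (WLOG $f(\emptyset,\cdot)\equiv 0$, so $\delta_0=f_{\avg}(\pi^\ast)$) gives $\delta_r\le f_{\avg}(\pi^\ast)\prod_{i=1}^{r}(1-\alpha_i/r)$. Since each $\alpha_i\in[0,1]$ and hence $1-\alpha_i/r\in[0,1]$, AM--GM applied to the $r$ numbers $\{1-\alpha_i/r\}$ whose arithmetic mean is $1-\alpha/r$ yields $\prod_i(1-\alpha_i/r)\le(1-\alpha/r)^r\le \e^{-\alpha}$. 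Rearranging $\delta_r\le \e^{-\alpha}f_{\avg}(\pi^\ast)$ and passing to a randomized $\pi^\ast$ via the Remark after Definition~\ref{def:submodularity} gives the claimed inequality $\EO[f_{\avg}(\pi^{\ag}(\omega))]\ge(1-\e^{-\alpha})\EO[f_{\avg}(\pi^\ast(\omega))]$.

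\textbf{Main obstacle.} The technical pain point is Step~1: the partial realization $\Psi_i$ is a function of both $\omega$ and $\Phi$, so substituting the \emph{pointwise} expected-greedy guarantee~\eqref{eqn:greedy_policy} into an inequality that already contains nested expectations over $\Phi$ and $\omega$ must be done conditionally on $\Psi_i=\psi_i$ and then integrated; one must also verify that the deterministic Golovin--Krause OPT-vs-greedy lemma lifts faithfully to the randomized realizations of $\pi^{\ag}_i(\omega)$ under adaptive monotonicity and submodularity. The closing AM--GM inequality is the algebraic novelty that lets distinct per-round ratios $\alpha_i$ (rather than a uniform $\alpha$) aggregate into the clean exponent $\alpha=\frac{1}{r}\sum_i\alpha_i$ promised by the theorem.
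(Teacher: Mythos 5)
Your proposal is correct and follows essentially the same route as the paper: the per-round contraction of Step~1 is exactly Lemma~\ref{lemma:opt_greedy} (proved, as you anticipate, by fixing $\omega$, invoking the deterministic OPT-vs-greedy bound of Lemma~\ref{lemma:opt_max}, and then conditioning on the partial realization before integrating over $\omega$ and $\Phi$), and Step~2 is the same telescoping argument. The only cosmetic difference is at the very end, where the paper bounds each factor directly via $1-\alpha_i/r\le \e^{-\alpha_i/r}$ so that the product telescopes to $\e^{-\alpha}$ without needing your AM--GM detour; both closings are valid.
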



Note that if a policy is an $\alpha$-approximate greedy policy, it must also be an expected $\alpha$-approximate greedy policy. Thus, our results generalize those given by Golovin and Krause \cite{Golovin_adaptive_2011}. The proof of Theorem~\ref{thm:greedy-ept-approx} requires extensions of the theoretical results developed for adaptive stochastic maximization \cite{Golovin_adaptive_2011}. In the following, we first introduce some lemmas that are useful for proving Theorem~\ref{thm:greedy-ept-approx}.
\begin{lemma}\label{lemma:marginal_max}
	For any deterministic adaptive policy $\pi$ and any $i\geq 0$, we have
	\begin{equation*}
	f_{\avg}(\pi_{i+1})-f_{\avg}(\pi_{i})\leq \E_{\Phi}\big[\max_{e}\Delta(e\mid \Psi_{i}(\pi,\Phi))\big].
	\end{equation*}
\end{lemma}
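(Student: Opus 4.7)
The plan is to decompose the left-hand side via the tower property of expectation, conditioning on the partial realization $\Psi_i(\pi,\Phi)$ that the deterministic policy $\pi$ produces after selecting its first $i$ items, and then to recognize each conditional increment as the marginal benefit $\Delta(\cdot\mid\psi_i)$ defined in Equation~\eqref{eqn:Delta}.

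Concretely, since $\pi$ is deterministic, for every partial realization $\psi_i$ reachable by $\pi$, the $(i+1)$-th item selected by $\pi$ depends only on $\psi_i$; call it $e_\pi(\psi_i)$. Moreover, for such deterministic $\pi$ the event $\{\Psi_i(\pi,\Phi)=\psi_i\}$ coincides with $\{\Phi\sim\psi_i\}$ on every reachable $\psi_i$, because the trajectory of $\pi$ up to step $i$ is entirely determined by $\Phi$'s states on $\dom(\psi_i)$. Hence for any $\phi\sim\psi_i$, we have $\dom(\Psi_i(\pi,\phi))=\dom(\psi_i)$ and $\dom(\Psi_{i+1}(\pi,\phi))=\dom(\psi_i)\cup\{e_\pi(\psi_i)\}$. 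Applying the tower property together with these identities gives
\begin{align*}
f_{\avg}(\pi_{i+1})-f_{\avg}(\pi_{i})
&=\E_{\Phi}\big[f(\dom(\Psi_{i+1}(\pi,\Phi)),\Phi)-f(\dom(\Psi_i(\pi,\Phi)),\Phi)\big]\\
&=\sum_{\psi_i}\Pr[\Psi_i(\pi,\Phi)=\psi_i]\cdot \Delta(e_\pi(\psi_i)\mid \psi_i)\\
&\leq\sum_{\psi_i}\Pr[\Psi_i(\pi,\Phi)=\psi_i]\cdot \max_{e}\Delta(e\mid \psi_i)\\
&=\E_{\Phi}\big[\max_{e}\Delta(e\mid \Psi_i(\pi,\Phi))\big],
\end{align*}
where the second equality invokes Equation~\eqref{eqn:Delta} together with the coincidence of events noted above, and the inequality uses the trivial bound $\Delta(e_\pi(\psi_i)\mid\psi_i)\leq\max_{e}\Delta(e\mid\psi_i)$.

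The only subtle point, which is more bookkeeping than a true obstacle, is checking that the conditional expectation produced by the tower property matches exactly $\Delta(e_\pi(\psi_i)\mid\psi_i)$; this relies on the identification $\{\Psi_i(\pi,\Phi)=\psi_i\}=\{\Phi\sim\psi_i\}$ for reachable $\psi_i$ (unreachable $\psi_i$ trivially contribute zero to the sum). Once this is established, the rest is an immediate consequence of the definition of $\Delta$ and the upper bound by the pointwise maximum.
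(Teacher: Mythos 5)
Your proposal is correct and follows essentially the same route as the paper: condition on the partial realization $\Psi_{i}(\pi,\Phi)$ after $i$ steps, identify the conditional increment with $\Delta(\pi(\psi_i)\mid\psi_i)$ via the definition in Equation~\eqref{eqn:Delta}, and bound it pointwise by $\max_e\Delta(e\mid\psi_i)$ before recombining the expectation. The paper writes the decomposition as an expectation over the random partial realization $\Psi_i^\pi$ rather than an explicit sum, and leaves implicit the identification $\{\Psi_i(\pi,\Phi)=\psi_i\}=\{\Phi\sim\psi_i\}$ for reachable $\psi_i$ that you spell out, but the arguments are the same.
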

\begin{proof}[Lemma~\ref{lemma:marginal_max}]
	Let $p(\psi_{i}^{\pi}):=\Pr[\Psi_{i}(\pi,\Phi)=\psi_{i}^{\pi}]$ be the probability of partial realization $\psi_{i}^{\pi}$ being observed after $\pi$ picks $i$ items over all realizations. We use $\Psi_{i}^{\pi}$ to denote such a random partial realization with respect to the probability distribution $p(\psi_{i}^{\pi})$.
	Then,
	\begin{align*}
	&f_{\avg}(\pi_{i+1})-f_{\avg}(\pi_{i})\\
	&=\E_{\Phi}[f(\Ele(\pi_{i+1},\Phi),\Phi)-f(\Ele(\pi_{i},\Phi),\Phi)]\\
	&=\E_{\Psi_{i}^{\pi}}\big[\E_{\Phi}[f(\Ele(\pi_{i+1},\Phi),\Phi)-f(\Ele(\pi_{i},\Phi),\Phi)\mid \Phi\sim \Psi_{i}^{\pi}]\big]\\
	&=\E_{\Psi_{i}^{\pi}}\big[\Delta(\pi(\Psi_{i}^{\pi})\mid \Psi_{i}^{\pi})\big]\\
	&\leq \E_{\Psi_{i}^{\pi}}\big[\max_{e}\Delta(e\mid \Psi_{i}^{\pi})\big]\\
	&=\E_{\Phi}\big[\max_{e}\Delta(e\mid \Psi_{i}(\pi,\Phi))\big],
	\end{align*}
	where the inequality is because $\Delta(\pi(\Psi_{i}^{\pi})\mid \Psi_{i}^{\pi})\leq \max_{e}\Delta(e\mid \Psi_{i}^{\pi})$ for every $\Psi_{i}^\pi$.
	\qed
\end{proof}

\begin{lemma}\label{lemma:step_submodular}
	Given any deterministic adaptive policy $\pi$ and any $i\leq j$, we have
	\begin{equation*}
	\E_{\Phi}\big[\max_{e}\Delta(e\mid \Psi_{i}(\pi,\Phi))\big]\geq \E_{\Phi}\big[\max_{e}\Delta(e\mid \Psi_{j}(\pi,\Phi))\big].
	\end{equation*}
\end{lemma}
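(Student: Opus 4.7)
The plan is to reduce the inequality to a pointwise statement over realizations and then apply adaptive submodularity. First, I would observe that for any fixed realization $\phi$ and any $i\leq j$, the deterministic policy $\pi$ picks the first $i$ items as a prefix of the first $j$ items it picks, so $\dom(\Psi_{i}(\pi,\phi))\subseteq \dom(\Psi_{j}(\pi,\phi))$, and the two partial realizations agree on the common domain (both match $\phi$). Hence $\Psi_{i}(\pi,\phi)\subseteq \Psi_{j}(\pi,\phi)$ in the sense of Section~3.1.

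Next, fix this $\phi$ and pick any $e\in \Ele\setminus\dom(\Psi_{j}(\pi,\phi))$. Applying Definition~\ref{def:submodularity} (adaptive submodularity) to the pair $\Psi_{i}(\pi,\phi)\subseteq \Psi_{j}(\pi,\phi)$ gives
\begin{equation*}
\Delta(e\mid \Psi_{i}(\pi,\phi))\geq \Delta(e\mid \Psi_{j}(\pi,\phi)).
\end{equation*}
Let $e^\ast$ attain $\max_{e}\Delta(e\mid \Psi_{j}(\pi,\phi))$. Then
\begin{equation*}
\max_{e}\Delta(e\mid \Psi_{j}(\pi,\phi))=\Delta(e^\ast\mid \Psi_{j}(\pi,\phi))\leq \Delta(e^\ast\mid \Psi_{i}(\pi,\phi))\leq \max_{e}\Delta(e\mid \Psi_{i}(\pi,\phi)).
\end{equation*}
Since this holds for every $\phi$, taking expectation over $\Phi$ on both sides and invoking monotonicity of expectation yields the claim.

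The only mildly subtle point is making the containment $\Psi_{i}(\pi,\phi)\subseteq \Psi_{j}(\pi,\phi)$ rigorous, since a priori the maxima may be taken over different admissible item sets when their domains differ; this is handled by noting $\{e\notin \dom(\Psi_{j}(\pi,\phi))\}\subseteq \{e\notin \dom(\Psi_{i}(\pi,\phi))\}$, so enlarging the set over which the max on the right is evaluated only increases its value, which is precisely what the argument above exploits. Adaptive monotonicity (Definition~\ref{def:monotonicity}) guarantees non-negativity of all marginal terms so there is no sign issue. I do not anticipate a real obstacle here; the lemma is essentially an in-expectation ``diminishing returns in feedback'' statement that is immediate from pointwise adaptive submodularity, and the determinism of $\pi$ is what makes the containment of partial realizations hold on every sample path.
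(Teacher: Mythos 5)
Your proof is correct and uses the same key step as the paper: the containment $\Psi_{i}(\pi,\phi)\subseteq \Psi_{j}(\pi,\phi)$ along each sample path of the deterministic policy, adaptive submodularity applied to the maximizer $e^\ast$ at step $j$, and then bounding $\Delta(e^\ast\mid \Psi_{i}(\pi,\phi))$ by the max at step $i$. The only difference is presentational: you argue pointwise in $\phi$ and then take $\E_{\Phi}$, whereas the paper routes the identical comparison through nested conditional expectations over the distribution of partial realizations; your pointwise version is, if anything, slightly cleaner, and your handling of the case $e^\ast\in\dom(\Psi_{j}(\pi,\phi))$ via adaptive monotonicity is the right way to close the small gap left by the restricted domain in Definition~\ref{def:submodularity}.
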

\begin{proof}[Lemma~\ref{lemma:step_submodular}]
	Again, let $\Psi_{i}^{\pi}$ and $\Psi_{j}^{\pi}$ denote random partial realizations with respect to the probability distribution $p(\psi_{i}^{\pi})$ and $p(\psi_{j}^{\pi})$, respectively. In addition, For every realization $\phi$ and any $i\leq j$, according to the nature of policy $\pi$, we have $\Psi_{i}(\pi,\phi)\subseteq \Psi_{j}(\pi,\phi)$. Thus, we can partition $\psi_{j}^{\pi}$ based on $\psi_{i}^{\pi}$. Then,
	\begin{align*}
		&\E_{\Phi}\big[\max_{e}\Delta(e\mid \Psi_{i}(\pi,\Phi))\big]\\
		&=\E_{\Psi_{j}^{\pi}}\big[\max_{e}\Delta(e\mid \Psi_{j}^{\pi})\big]\\
		&=\E_{\Psi_{i}^{\pi}}\Big[\E_{\Psi_{j}^{\pi}}\big[\max_{e}\Delta(e\mid \Psi_{j}^{\pi})\mid \Psi_{i}^{\pi}\subseteq \Psi_{j}^{\pi}\big]\Big]\\
		&=\E_{\Psi_{i}^{\pi}}\Big[\E_{\Psi_{j}^{\pi}}\big[\Delta(e^\ast(\Psi_{j}^{\pi})\mid \Psi_{j}^{\pi})\mid \Psi_{i}^{\pi}\subseteq \Psi_{j}^{\pi}\big]\Big]\\
		&\leq \E_{\Psi_{i}^{\pi}}\Big[\E_{\Psi_{j}^{\pi}}\big[\Delta(e^\ast(\Psi_{j}^{\pi})\mid \Psi_{i}^{\pi})\mid \Psi_{i}^{\pi}\subseteq \Psi_{j}^{\pi}\big]\Big]\\
		&\leq \E_{\Psi_{i}^{\pi}}\Big[\E_{\Psi_{j}^{\pi}}\big[\max_{e}\Delta(e\mid \Psi_{i}^{\pi})\mid \Psi_{i}^{\pi}\subseteq \Psi_{j}^{\pi}\big]\Big]\\
		&=\E_{\Psi_{i}^{\pi}}\big[\max_{e}\Delta(e\mid \Psi_{i}^{\pi})\big]\\
		&=\E_{\Phi}\big[\max_{e}\Delta(e\mid \Psi_{j}(\pi,\Phi))\big],
	\end{align*}
	where $e^\ast(\Psi_{j}^{\pi}):=\argmax_{e}\Delta(e\mid \Psi_{j}^{\pi})$ for each $\Psi_{j}^{\pi}$. The first inequality is due to the adaptive submodularity of $f$, and the second inequality is because $\Delta(e^\ast(\Psi_{j}^{\pi})\mid \Psi_{i}^{\pi})\leq \max_{e}\Delta(e\mid \Psi_{i}^{\pi})$ for each  $\Psi_{i}^{\pi}$.
	\qed
\end{proof}

Using Lemma~\ref{lemma:marginal_max} and Lemma~\ref{lemma:step_submodular}, we can build a quantitative relationship between any policy $\pi$ and the optimal adaptive policy, as shown by Lemma~\ref{lemma:opt_max}.

\begin{lemma}\label{lemma:opt_max}
	For any deterministic adaptive policy $\pi$, any deterministic policy $\pi^\ast$ selecting $r$ items, and any $0\leq i\leq r$, we have
	\begin{equation*}
	f_{\avg}(\pi^{\ast})-f_{\avg}(\pi_{i})\leq r\cdot \E_{\Phi}\big[\max_{e}\Delta(e\mid \Psi_{i}(\pi,\Phi))\big].
	\end{equation*}
\end{lemma}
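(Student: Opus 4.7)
The plan is to bound $f_{\avg}(\pi^{\ast}) - f_{\avg}(\pi_i)$ by comparing both quantities to the utility of the concatenated policy $\pi_i \oplus \pi^{\ast}$ (using Definition~\ref{def:policyconcat}). By adaptive monotonicity of $f$, adding the extra items chosen by $\pi^{\ast}$ on top of $\pi_i$ can only increase the utility, so $f_{\avg}(\pi^{\ast}) \leq f_{\avg}(\pi_i \oplus \pi^{\ast})$. Hence it suffices to prove the stronger inequality
\begin{equation*}
f_{\avg}(\pi_i \oplus \pi^{\ast}) - f_{\avg}(\pi_i) \leq r \cdot \E_{\Phi}\bigl[\max_{e}\Delta(e\mid \Psi_{i}(\pi,\Phi))\bigr].
\end{equation*}

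To do so, I would telescope over the $r$ selections made by $\pi^{\ast}$ after $\pi_i$ has already been executed. Writing $\pi^{\ast}_j$ for the truncation of $\pi^{\ast}$ to its first $j$ items (Definition~\ref{def:policytruncation}), we get
\begin{equation*}
f_{\avg}(\pi_i \oplus \pi^{\ast}) - f_{\avg}(\pi_i) = \sum_{j=1}^{r}\bigl[f_{\avg}(\pi_i \oplus \pi^{\ast}_{j}) - f_{\avg}(\pi_i \oplus \pi^{\ast}_{j-1})\bigr].
\end{equation*}
Each term in this sum is an expected marginal benefit of adding one additional item to the running partial realization, and I would argue that it can be bounded by $\E_{\Phi}[\max_{e}\Delta(e \mid \Psi_{i}(\pi,\Phi))]$. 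The argument is: conditioning on the partial realization $\psi$ built up from both $\pi_i$ (yielding $\Psi_i(\pi,\Phi)$) and the first $j-1$ items of $\pi^{\ast}$, the marginal benefit of the $j$-th item chosen by $\pi^{\ast}$ is at most $\Delta(\text{item} \mid \psi)$. Since $\Psi_i(\pi,\Phi) \subseteq \psi$, adaptive submodularity (Definition~\ref{def:submodularity}) upgrades this to $\Delta(\text{item} \mid \Psi_i(\pi,\Phi))$, which is then at most $\max_e \Delta(e \mid \Psi_i(\pi,\Phi))$. Taking expectations over all the randomness and summing over $j = 1, \ldots, r$ yields the factor of $r$.

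The main technical obstacle will be to formalize the telescoping cleanly given that $\pi^{\ast}$'s item selections depend on the partial realization it adaptively observes during the $\pi^{\ast}$-phase, while the bound is stated in terms of $\Psi_i(\pi,\Phi)$ observed during the $\pi$-phase; both phases operate on the same underlying $\Phi$. The cleanest way is to condition first on $\Psi_i(\pi,\Phi) = \psi_i$, and then, inside this conditioning, handle the $\pi^{\ast}$-phase by a second induction/telescoping over $j$ using the tower property, mirroring the chain of conditional expectations used in the proofs of Lemmas~\ref{lemma:marginal_max} and \ref{lemma:step_submodular}. Once the inner conditional bound $\sum_{j} \Delta(\cdot \mid \psi_i \cup \psi^{\pi^{\ast}}_{j-1}) \leq r \cdot \max_e \Delta(e \mid \psi_i)$ is secured via adaptive submodularity, averaging over $\Psi_i(\pi,\Phi)$ completes the proof.
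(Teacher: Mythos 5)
Your proposal follows essentially the same route as the paper's proof: pass to the concatenation $\pi_i\oplus\pi^{\ast}$ via adaptive monotonicity, telescope over the $r$ selections of $\pi^{\ast}$, and bound each increment by $\E_{\Phi}[\max_e\Delta(e\mid\Psi_i(\pi,\Phi))]$ using adaptive submodularity --- which is exactly what the paper does by invoking Lemma~\ref{lemma:marginal_max} and Lemma~\ref{lemma:step_submodular} on the truncations $\hat{\pi}_{i+j}=\pi_i\oplus\pi^{\ast}_j$. The only cosmetic difference is that the paper packages your ``inner conditional bound'' into those two lemmas (plus an explicit decision-tree argument for $f_{\avg}(\pi^{\ast})\leq f_{\avg}(\pi^{\ast}\oplus\pi_i)=f_{\avg}(\pi_i\oplus\pi^{\ast})$), whereas you sketch it inline.
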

\begin{proof}[Lemma~\ref{lemma:opt_max}]
	Each deterministic policy $\pi$ can be associated with a decision tree $T^{\pi}$ in a natural way. Each node in the decision tree is a partial realization $\psi$ such that the policy picks item $\pi(\psi)$ and the children of $\psi$ will be observed under respective realizations. Furthermore, each node $\psi$ is associated with a reward $r(\psi):=\Pr[\Phi\sim\psi]\cdot \Delta(\pi(\psi)\mid \psi)$ which is nonnegative due to the adaptive monotonicity of $f$, \ie~$\Delta(e\mid \psi)\geq 0$ for every $e$ and every $\psi$. Then, we can get that $f_{\avg}(\pi)=\sum_{\psi\in T^{\pi}}r(\psi)$. In addition, it is easy to see that $T^{\pi}\subseteq T^{\pi\oplus\pi^\prime}$. Thus, $f_{\avg}(\pi\oplus\pi^\prime)-f_{\avg}(\pi)=\sum_{\psi\in (T^{\pi\oplus\pi^\prime}\setminus T^{\pi})}r(\psi)\geq 0$. Meanwhile, it is easy to verify that $f_{\avg}(\pi\oplus\pi^\prime)=f_{\avg}(\pi^\prime\oplus\pi)$, since $\pi\oplus\pi^\prime$ and $\pi^\prime\oplus\pi$ pick the same items under every realization.
	
	For rotational convenience, let $\hat{\pi}:=\pi_{i}\oplus\pi^{\ast}$ and $\hat{\pi}_{i+j}:=\pi_{i}\oplus\pi^{\ast}_j$ for any $j\geq 0$. Thus, we have
	\begin{align*}
	f_{\avg}(\pi^{\ast})-f_{\avg}(\pi_{i})
	&\leq f_{\avg}(\hat{\pi})-f_{\avg}(\pi_{i})\\
	&=\sum_{j=1}^{r} \Big(f_{\avg}(\hat{\pi}_{i+j})-f_{\avg}(\hat{\pi}_{i+j-1})\Big)\\
	&\leq\sum_{j=1}^{r} \E_{\Phi}\big[\max_{e}\Delta(e\mid \Psi_{i+j-1}({\hat{\pi}},\Phi))\big]\\
	&\leq\sum_{j=1}^{r} \E_{\Phi}\big[\max_{e}\Delta(e\mid \Psi_{i}({\hat{\pi}},\Phi))\big]\\
	&=r\cdot \E_{\Phi}\big[\max_{e}\Delta(e\mid \Psi_{i}({{\pi}},\Phi))\big].
	\end{align*}
	The first inequality is due to the adaptive monotonicity of $f$ as discussed above. The second inequality is by Lemma~\ref{lemma:marginal_max} while the third inequality is by Lemma~\ref{lemma:step_submodular}. The final equality is because $\hat{\pi}_i=\pi_i$.\qed
\end{proof}

Then, we are able to establish a relationship between our proposed randomized policy $\pi^\ag$ and any randomized policy in the following lemma.

\begin{lemma}\label{lemma:opt_greedy}
	Let $\pi^{\ag}$ be a randomized policy that returns an expected $\alpha_i$-approximate solution for the $i$-th item selection under every partial realization $\psi_{i-1}$. For any $0\leq i< r$ and any randomized policy $\pi^{\ast}$, we have
	\begin{equation}\label{eqn:toprove1}
	\begin{split}
		&\EO\big[f_{\avg}(\pi^{\ast}(\omega))-f_{\avg}(\pi^{\ag}_{i}(\omega))\big]\\
		&\leq (1-\tfrac{\alpha_i}{r})\cdot \EO\big[f_{\avg}(\pi^{\ast}(\omega))-f_{\avg}(\pi^{\ag}_{i-1}(\omega))\big],
	\end{split}
	\end{equation}
	where the expectation is over the randomness of policy.
\end{lemma}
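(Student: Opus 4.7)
The plan is to sandwich the marginal gain $\EO[f_{\avg}(\pi^{\ag}_i(\omega)) - f_{\avg}(\pi^{\ag}_{i-1}(\omega))]$ between an upper bound coming from Lemma~\ref{lemma:opt_max} applied to $\pi^{\ag}$ and a lower bound coming from the expected $\alpha_i$-approximate greedy property, and then rearrange. For clarity I will treat the internal randomness of $\pi^{\ast}$ and of $\pi^{\ag}$ as two independent sources, writing $\omega$ for either, since the final inequality is unaffected by this relabeling.

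For the upper bound, fix $\omega$ so that $\pi^{\ag}(\omega)$ is a concrete deterministic policy, and apply Lemma~\ref{lemma:opt_max} with $\pi = \pi^{\ag}(\omega)$ and the deterministic realization $\pi^{\ast}(\omega')$ of $\pi^{\ast}$:
\[ f_{\avg}(\pi^{\ast}(\omega')) - f_{\avg}(\pi^{\ag}_{i-1}(\omega)) \leq r \cdot \E_{\Phi}\bigl[\max_e \Delta(e\mid \Psi_{i-1}(\omega,\Phi))\bigr]. \]
Averaging over both $\omega'$ and $\omega$ yields
\[ \EO[f_{\avg}(\pi^{\ast}(\omega))] - \EO[f_{\avg}(\pi^{\ag}_{i-1}(\omega))] \leq r \cdot \EO\E_{\Phi}\bigl[\max_e \Delta(e\mid \Psi_{i-1}(\omega,\Phi))\bigr]. \]

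For the matching lower bound, I will show
\[ \EO[f_{\avg}(\pi^{\ag}_i(\omega)) - f_{\avg}(\pi^{\ag}_{i-1}(\omega))] \geq \alpha_i \cdot \EO\E_{\Phi}\bigl[\max_e \Delta(e\mid \Psi_{i-1}(\omega,\Phi))\bigr]. \]
For a fixed $\omega$, the telescoping identity that appeared inside the proof of Lemma~\ref{lemma:marginal_max} rewrites $f_{\avg}(\pi^{\ag}_i(\omega)) - f_{\avg}(\pi^{\ag}_{i-1}(\omega))$ as $\E_{\Phi}[\Delta(\pi^{\ag}(\omega,\Psi_{i-1}(\omega,\Phi))\mid \Psi_{i-1}(\omega,\Phi))]$. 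Then, for each fixed $\phi$, I partition by the partial realization $\psi_{i-1}$ produced by the first $i-1$ picks, apply Equation~\eqref{eqn:greedy_policy}, which guarantees $\EO[\xi(\pi^{\ag}(\omega),\psi_{i-1})\mid \Psi_{i-1}(\omega,\phi)=\psi_{i-1}]\geq \alpha_i$, and pull the $\psi_{i-1}$-measurable constant $\max_e \Delta(e\mid \psi_{i-1})$ outside that conditional expectation. Summing the weighted conditional inequalities over $\psi_{i-1}$, integrating over $\phi$, and swapping the order of $\E_{\Phi}$ and $\EO$ via Fubini delivers the desired lower bound.

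Combining the two bounds gives $\EO[f_{\avg}(\pi^{\ag}_i(\omega)) - f_{\avg}(\pi^{\ag}_{i-1}(\omega))]\geq (\alpha_i/r)\bigl(\EO[f_{\avg}(\pi^{\ast}(\omega))] - \EO[f_{\avg}(\pi^{\ag}_{i-1}(\omega))]\bigr)$, and subtracting both sides from $\EO[f_{\avg}(\pi^{\ast}(\omega))] - \EO[f_{\avg}(\pi^{\ag}_{i-1}(\omega))]$ yields exactly \eqref{eqn:toprove1}. The main obstacle I anticipate is the second step: Equation~\eqref{eqn:greedy_policy} is phrased as a conditional expectation given a specific $\psi_{i-1}$ and realization $\phi$, whereas the quantity I must bound involves the random history $\Psi_{i-1}(\omega,\Phi)$ through both the policy's internal coins and the stochastic realization $\Phi$. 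The careful bookkeeping is to condition first on $\phi$, then on $\psi_{i-1}$, invoke the expected-greedy guarantee at that innermost level, and only then integrate back out; at this point one must verify that the policy's choice at step $i$ depends on the past only through $\psi_{i-1}$, so that the conditional expectation over $\omega$ really does reduce to what \eqref{eqn:greedy_policy} controls.
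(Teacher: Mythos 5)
Your proposal is correct and follows essentially the same route as the paper's own proof: the upper bound comes from fixing $\omega$, applying Lemma~\ref{lemma:opt_max}, and averaging, while the lower bound is obtained by rewriting the marginal gain as $\E_{\Phi}\big[\Delta(\pi^{\ag}_i(\omega)\mid \Psi_{i-1}(\omega,\Phi))\big]$, swapping $\EO$ and $\E_{\Phi}$, conditioning on the partial realization $\Psi_{i-1}(\omega,\phi)=\psi_{i-1}$ for each fixed $\phi$, and invoking \eqref{eqn:greedy_policy} at that innermost level. The bookkeeping subtlety you flag at the end is exactly the point the paper handles by introducing the distribution $p(\psi_{i-1}^{\phi})$ over partial realizations induced by $\omega$, so no gap remains.
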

\begin{proof}[Lemma~\ref{lemma:opt_greedy}]
	We first fix the randomness of $\omega$. Then, $\pi^{\ast}(\omega)$ and $\pi^{\ag}(\omega)$ are deterministic policies. By Lemma~\ref{lemma:opt_max}, we have
	\begin{align*}
	&f_{\avg}(\pi^{\ast}(\omega))-f_{\avg}(\pi^{\ag}_{i-1}(\omega))\\
	&\leq r\cdot \E_{\Phi}\big[\max_{e}\Delta(e\mid \Psi_{i-1}(\omega,\Phi))\big].
	\end{align*}
	Taking the expectation over the randomness of $\omega$ gives
	\begin{equation}\label{eqn:right_relation}
	\begin{split}
	&\EO\big[f_{\avg}(\pi^{\ast}(\omega))-f_{\avg}(\pi^{\ag}_{i-1}(\omega))\big]\\
	&\leq r\cdot \EO\big[\E_{\Phi}\big[\max_{e}\Delta(e\mid \Psi_{i-1}(\omega,\Phi))\big]\big].
	\end{split}
	\end{equation}
	
	On the other hand, by definition, we have
	\begin{align*}
	&\EO\big[f_{\avg}(\pi^{\ag}_{i}(\omega))-f_{\avg}(\pi^{\ag}_{i-1}(\omega))\big]\\
	&=\EO\big[\E_{\Phi}\big[\Delta(\pi^{\ag}_i(\omega)\mid \Psi_{i-1}(\omega,\Phi))\big]\big]\\
	&=\E_{\Phi}\big[\EO\big[\Delta(\pi^{\ag}_i(\omega)\mid \Psi_{i-1}(\omega,\Phi))\big]\big]
	\end{align*}
	In addition, for any given realization $\phi$, let $\Psi_{i-1}^{\phi}$ be the random partial realization following the probability distribution $p(\psi_{i-1}^{\phi}):=\Pr[\Psi_{i-1}(\omega,\phi)=\psi_{i-1}^{\phi}]$ over the randomness of $\omega$. Then,
	\begin{align*}
	&\EO\big[\Delta(\pi^{\ag}_i(\omega)\mid \Psi_{i-1}(\omega,\phi))\big]\\
	&=\E_{\Psi_{i-1}^{\phi}}\big[\EO\big[\Delta(\pi^{\ag}_i(\omega)\mid \Psi_{i-1}^{\phi})\mid \Psi_{i-1}(\omega,\phi)=\Psi_{i-1}^{\phi}\big]\big]\\
	&\geq \E_{\Psi_{i-1}^{\phi}}\big[\alpha_i\cdot \max_{e}\Delta(e\mid \Psi_{i-1}^{\phi})\big]\\
	&=\alpha_i\cdot \E_{\Psi_{i-1}^{\phi}}\big[\max_{e}\Delta(e\mid \Psi_{i-1}^{\phi})\big]\\
	&=\alpha_i\cdot\EO\big[\max_{e}\Delta(e\mid \Psi_{i-1}(\omega,\phi))\big].
	\end{align*}
	Therefore,
	\begin{equation}\label{eqn:left_relation}
	\begin{split}
		&\EO\big[f_{\avg}(\pi^{\ag}_{i}(\omega))-f_{\avg}(\pi^{\ag}_{i-1}(\omega))\big]\\
		&\geq\alpha_i\cdot\E_{\Phi}\big[\EO\big[\max_{e}\Delta(e\mid \Psi_{i-1}(\omega,\Phi))\big]\big].
	\end{split}
	\end{equation}
	
	Combing \eqref{eqn:right_relation} and \eqref{eqn:left_relation} yields
	\begin{align*}
		&\EO\big[f_{\avg}(\pi^{\ast}(\omega))-f_{\avg}(\pi^{\ag}_{i-1}(\omega))\big]\\
		&\leq \tfrac{r}{\alpha_i}\cdot \EO\big[f_{\avg}(\pi^{\ag}_{i}(\omega))-f_{\avg}(\pi^{\ag}_{i-1}(\omega))\big].
	\end{align*}
	Rearranging it completes the proof.
	\qed
\end{proof}

Finally, we are ready to prove Theorem~\ref{thm:greedy-ept-approx}.
\begin{proof}[Theorem~\ref{thm:greedy-ept-approx}]
	Note that for any $x$ such that $0\leq x\leq 1$, we have $1-x\leq \e^{-x}$. Therefore, recursively applying Lemma~\ref{lemma:opt_greedy} gives
	\begin{align*}
	&\EO\big[f_{\avg}(\pi^{\ast}(\omega))-f_{\avg}(\pi^{\ag}_{r}(\omega))\big]\\
	&\leq \e^{-\alpha_i/r}\cdot \EO\big[f_{\avg}(\pi^{\ast}(\omega))-f_{\avg}(\pi^{\ag}_{r-1}(\omega))\big]\\
	&\leq \cdots\\
	&\leq \e^{-\sum_{i=1}^r\alpha_i/r}\cdot \EO\big[f_{\avg}(\pi^{\ast}(\omega))-f_{\avg}(\pi^{\ag}_{0}(\omega))\big]\\
	&= \e^{-\alpha}\cdot \EO\big[f_{\avg}(\pi^{\ast}(\omega))\big]
	\end{align*}
	Rearranging it completes the proof.
	\qed
\end{proof}

\subsubsection{Worst-case Approximation Guarantee}
In what follows, we derive another concept of \textit{worst-case} approximation guarantee for policy $\pi^{\ag}$. To begin with, we first provide a \textit{random} approximation guarantee for $\pi^{\ag}$ as follows.

\begin{lemma}\label{lemma:frame-greedy-random}
Let $X_i(\omega)$ be the overall random approximation for the $i$-th item selection achieved by $\pi^{\ag}(\omega)$ with respect to $\omega$, \ie 
\begin{equation*}
X_i(\omega):=\frac{\E_{\Phi}\big[\Delta(\pi^{\ag}_i(\omega)\mid \Psi_{i-1}(\omega,\Phi))\big]}{\E_{\Phi}\big[\max_{e}\Delta(e\mid \Psi_{i-1}(\omega,\Phi))\big]}.
\end{equation*}
Then, $\pi^{\ag}(\omega)$ achieves a random approximation guarantee of $1-\e^{-X(\omega)}$, where ${X(\omega)}=\frac{1}{r}\sum_{i=1}^{r}X_i(\omega)$.
\end{lemma}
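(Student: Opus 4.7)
\textbf{Proof plan for Lemma~\ref{lemma:frame-greedy-random}.} The plan is to mimic the proof of Theorem~\ref{thm:greedy-ept-approx}, but to condition on the internal randomness $\omega$ throughout rather than taking an outer expectation over it. The key observation is that once $\omega$ is fixed, $\pi^{\ag}(\omega)$ is a deterministic policy, so the existing deterministic lemmas (Lemma~\ref{lemma:marginal_max}, Lemma~\ref{lemma:step_submodular}, and Lemma~\ref{lemma:opt_max}) apply directly to it. The quantity $X_i(\omega)$ is then precisely the ratio needed to carry out the one-step greedy argument sample-path-wise in $\omega$.

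First I would fix a realization $\omega$ and an arbitrary deterministic competitor policy $\pi^\ast$ selecting $r$ items (the general case for randomized $\pi^\ast$ follows because the optimum over randomized policies coincides with the optimum over deterministic ones, as noted in the Remark after Definition~\ref{def:submodularity}). Applying Lemma~\ref{lemma:opt_max} to the deterministic policy $\pi^{\ag}(\omega)$ at step $i-1$ yields
\begin{equation*}
f_{\avg}(\pi^\ast)-f_{\avg}(\pi^{\ag}_{i-1}(\omega))\le r\cdot \E_{\Phi}\bigl[\max_{e}\Delta(e\mid \Psi_{i-1}(\omega,\Phi))\bigr].
\end{equation*}
On the other hand, by telescoping and the definition of the marginal benefit in \eqref{eqn:Delta_pi}, the one-step gain of $\pi^{\ag}(\omega)$ equals
\begin{equation*}
f_{\avg}(\pi^{\ag}_{i}(\omega))-f_{\avg}(\pi^{\ag}_{i-1}(\omega))=\E_{\Phi}\bigl[\Delta(\pi^{\ag}_i(\omega)\mid \Psi_{i-1}(\omega,\Phi))\bigr],
\end{equation*}
which by the very definition of $X_i(\omega)$ equals $X_i(\omega)\cdot \E_{\Phi}[\max_{e}\Delta(e\mid \Psi_{i-1}(\omega,\Phi))]$.

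Combining these two identities gives the per-step contraction
\begin{equation*}
f_{\avg}(\pi^\ast)-f_{\avg}(\pi^{\ag}_{i}(\omega))\le \Bigl(1-\tfrac{X_i(\omega)}{r}\Bigr)\bigl(f_{\avg}(\pi^\ast)-f_{\avg}(\pi^{\ag}_{i-1}(\omega))\bigr),
\end{equation*}
which is the sample-path analogue of Lemma~\ref{lemma:opt_greedy}. Since $0\le X_i(\omega)/r\le 1$, using $1-x\le \e^{-x}$ and iterating from $i=1$ to $r$ (noting $f_{\avg}(\pi^{\ag}_0(\omega))=0$ and $\pi^{\ag}_r(\omega)=\pi^{\ag}(\omega)$) yields
\begin{equation*}
f_{\avg}(\pi^\ast)-f_{\avg}(\pi^{\ag}(\omega))\le \e^{-\sum_{i=1}^r X_i(\omega)/r}\cdot f_{\avg}(\pi^\ast)=\e^{-X(\omega)}\cdot f_{\avg}(\pi^\ast),
\end{equation*}
so $f_{\avg}(\pi^{\ag}(\omega))\ge (1-\e^{-X(\omega)})f_{\avg}(\pi^\ast)$, as desired.

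I expect this proof to be mostly mechanical; no step introduces a genuinely new idea beyond what is already used in Theorem~\ref{thm:greedy-ept-approx}. The only subtlety is conceptual rather than technical: one must be careful that $X_i(\omega)$ is defined as the ratio of two \emph{$\Phi$-expectations} taken with $\omega$ fixed, not as an $\omega$-expectation of the per-partial-realization ratio $\xi(\pi^{\ag}(\omega),\psi_{i-1})$. This is what makes the above identity for the one-step gain exact (rather than merely a bound in expectation), and it is what permits the contraction to be stated pathwise in $\omega$. Ensuring that this definitional point is tracked correctly through the telescoping argument is the only place where care is needed.
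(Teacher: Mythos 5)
Your proposal is correct and follows essentially the same route as the paper's own proof: the paper likewise writes the one-step gain as $X_i(\omega)\cdot \E_{\Phi}[\max_{e}\Delta(e\mid \Psi_{i-1}(\omega,\Phi))]$, combines it with Lemma~\ref{lemma:opt_max} to obtain the same per-step contraction, and then iterates as in Theorem~\ref{thm:greedy-ept-approx}. Your explicit remark that the argument is pathwise in $\omega$ (with the $\Phi$-expectations making the one-step identity exact) is precisely the point the paper leaves implicit.
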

\begin{proof}[Lemma~\ref{lemma:frame-greedy-random}]
	By definition, we have
	\begin{align*}
	&f_{\avg}(\pi^{\ag}_{i}(\omega))-f_{\avg}(\pi^{\ag}_{i-1}(\omega))\\
	&=\E_{\Phi}\big[\Delta(\pi^{\ag}_i(\omega)\mid \Psi_{i-1}(\omega,\Phi))\big]\\
	&=X_i(\omega)\cdot \E_{\Phi}\big[\max_{e}\Delta(e\mid \Psi_{i-1}(\omega,\Phi))\big].
	\end{align*}
	Together with Lemma~\ref{lemma:opt_max}, we have
	\begin{align*}
	&f_{\avg}(\pi^{\ast}(\omega))-f_{\avg}(\pi^{\ag}_{i}(\omega))\\
	&\leq \Big(1-\tfrac{X_i(\omega)}{r}\Big)\cdot \Big(f_{\avg}(\pi^{\ast}(\omega))-f_{\avg}(\pi^{\ag}_{i-1}(\omega))\Big).
	\end{align*}
	Using a similar argument for the proof of Theorem~\ref{thm:greedy-ept-approx} immediately concludes Lemma~\ref{lemma:frame-greedy-random}.
	\qed
\end{proof}

Note that in Lemma~\ref{lemma:frame-greedy-random}, for any given realization $\phi$, the conditional expected marginal benefit of policy $\pi^{\ag}_i(\omega)$ based on $\Psi_{i-1}(\omega,\phi)$ equals to that of item $\pi^{\ag}(\omega,\Psi_{i-1}(\omega,\phi))$. The random approximation guarantee in Lemma~\ref{lemma:frame-greedy-random} is crucial in providing worst-case theoretical guarantees. 

To this end, a simple and intuitive idea is to show that $X_i(\omega)\geq \alpha_i$ with high probability for every $i$. Then, by a union bound for $r$ rounds of item selection, we can obtain a worst-case approximation guarantee. However, it is hard to derive such a non-trivial worst-case approximation guarantee, as the probability that $X_i(\omega)< \alpha_i$ could be large even though the probability of $\xi(\pi^{\ag}(\omega),\psi_{i-1})<\alpha_i$ on any given $\psi_{i-1}$ is small, where $\xi(\pi^{\ag}(\omega),\psi_{i-1})=\frac{\Delta(\pi^{\ag}_i(\omega)\mid \psi_{i-1})}{\max_{e}\Delta(e\mid \psi_{i-1})}$. To explain, the number of possible $\psi_{i-1}$'s can be as large as an exponential scale size, e.g.,~$O(2^m)$ realizations of influence propagation where $m$ is the number of edges in $G$. Once there exists one instance of $\psi_{i-1}$ such that $\xi(\pi^{\ag}(\omega),\psi_{i-1})<\alpha_i$, it is possible that $X_i(\omega)< \alpha_i$. In other words, to ensure that $X_i(\omega)\geq \alpha_i$, one sufficient way is to guarantee that $\xi(\pi^{\ag}(\omega),\psi_{i-1})\geq\alpha_i$ for every $\psi_{i-1}$. However, such a requirement is too stringent to satisfy. Unfortunately, two recent papers \cite{Han_AIM_2018} and \cite{Sun_MRIM_2018} claim that if $\Pr[\xi(\pi^{\ag}(\omega),\psi_{i-1})<\alpha_i]\leq \delta_i$ for every $\psi_i$, then $\Pr[X_i(\omega)<\alpha_i]\leq \delta_i$. This misclaim makes their approximation guarantees invalid. More details are presented in Section~\ref{sec:revisited}. 

On the other hand, the strategy that demands every $X_i(\omega) \geq \alpha_i$ with high probability is also overly conservative. For example, suppose that there exists one $X_j(\omega)$ satisfying $X_j(\omega) < \alpha_j$, \ie~it fails to achieve the overall $\alpha_j$-approximation in the $j$-th item selection. Even in that case, the overall approximation ratio of $\pi^{\ag}(\omega)$ could still be better than $1-\e^{-\alpha}$, as long as there exists another $X_i(\omega)$ satisfying $X_i(\omega) \geq \alpha_i + \alpha_j - X_j(\omega)$. In other words, the deficiency of one round can be compensated, as long as there exists other rounds whose quality is above the bar by a sufficient margin.

Formally, as the approximation ratio $X_i(\omega)$ in each round of $\pi^{\ag}(\omega)$ is a random variable, the overall approximation guarantee of $\pi^{\ag}(\omega)$, namely, $1 - \e^{-X(\omega)}$, depends on the mean of all $r$ variables. Intuitively, when $r$ is sizable, $X(\omega)=\frac{1}{r}\sum_{i=1}^r X_i(\omega)$ should be concentrated to its expectation, i.e., $\EO[X(\omega)]$. Note that  $\EO[X_i(\omega)] \geq \alpha_i$ holds if $\EO[\xi(\pi^{\ag}(\omega),\psi_{i-1})] \geq \alpha_i$ holds for every $\psi_{i-1}$, which is exactly the requirement of $\pi^{\ag}(\omega)$ for each round of item selection. That is, instead of formulating the approximation ratio of $\pi^{\ag}(\omega)$ based on the worst-case guarantee of each selected item, we might derive it based on each selected item's expected approximation ratio.

To make the above idea work, the distance between $X(\omega)$ and its expectation $\EO[X(\omega)]$ is to be bounded with high probability. However, there is a challenge that we need to address. As the selection of the $i$-th item is dependent on the results of the first $(i-1)$ items, the random variables $X_1(\omega), X_2(\omega), \dotsc, X_r(\omega)$ are {\it correlated}, making it rather non-trivial to derive concentration results for $\frac{1}{r}\sum_{i=1}^r X_i(\omega)$. We circumvent this issue with a theoretical analysis by leveraging Azuma-Hoeffding inequality for martingales \cite{Mitzenmacher_Martingales_2005}.

\begin{definition}[Martingale~\cite{Mitzenmacher_Martingales_2005}]
	A sequence of random variables $Z_0,Z_1,\dots$ is a martingale with respect to the sequence $Y_0,Y_1,\dots$ if, for all $i\geq 0$, the following conditions hold:
	\begin{itemize}
		\item $Z_i$ is a function of $Y_0,Y_1,\dots,Y_i$;
		\item $\E[\abs{Z_i}]<\infty$;
		\item $\E[Z_{i+1}\mid Y_0,\dotsc,Y_{i}]=Z_{i}$.
	\end{itemize}
\end{definition}

\begin{lemma}[Azuma-Hoeffding Inequality~\cite{Mitzenmacher_Martingales_2005}]\label{lma:azuma}
	Let $Z_0,Z_1,\dots$ be a martingale with respect to the sequence of random variables $Y_0,Y_1,\dots$ such that
	\begin{equation*}
		B_i\leq Z_i-Z_{i-1}\leq B_i+c_i
	\end{equation*}
	for some constants $c_i$ and for some random variables $B_i$ that may be functions of $Y_0,\dots,Y_{i-1}$. Then, for any $r\geq 0$ and any $\lambda>0$,
	\begin{equation*}
	\Pr\big[Z_r \geq Z_0+\lambda \big]\leq \e^{-2{\lambda^2}/{\sum_{i=1}^{r}c_i^2}}.
	\end{equation*}
\end{lemma}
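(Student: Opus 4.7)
The plan is to follow the classical exponential-moment (Chernoff-style) argument for martingales with bounded increments. First I would write $Z_r - Z_0 = \sum_{i=1}^{r}(Z_i - Z_{i-1})$ as a telescoping sum and then, for any $t>0$, apply Markov's inequality to the nonnegative random variable $\e^{t(Z_r-Z_0)}$:
\begin{equation*}
\Pr[Z_r \geq Z_0 + \lambda] \leq \e^{-t\lambda}\,\E\bigl[\e^{t(Z_r - Z_0)}\bigr].
\end{equation*}
The goal then reduces to bounding the joint moment generating function (MGF), which I would do by peeling off one martingale increment at a time using the tower property: conditioning on $Y_0,\dotsc,Y_{r-1}$ freezes $\e^{t(Z_{r-1}-Z_0)}$, so
\begin{equation*}
\E\bigl[\e^{t(Z_r-Z_0)}\bigm|Y_0,\dotsc,Y_{r-1}\bigr] = \e^{t(Z_{r-1}-Z_0)}\,\E\bigl[\e^{t(Z_r-Z_{r-1})}\bigm|Y_0,\dotsc,Y_{r-1}\bigr].
\end{equation*}

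The heart of the proof is to control each conditional MGF of the increment $D_i := Z_i - Z_{i-1}$. Given $Y_0,\dotsc,Y_{i-1}$, the random variable $B_i$ is a constant, so $D_i$ lies almost surely in an interval $[B_i, B_i + c_i]$ of length $c_i$, and the martingale property forces $\E[D_i \mid Y_0,\dotsc,Y_{i-1}] = 0$. This is exactly the setting of Hoeffding's lemma, which I would prove in an auxiliary step by bounding $\e^{tx}$ on a bounded interval by its secant line (convexity), taking expectations, and then showing via a second-order Taylor expansion of the resulting log-MGF that
\begin{equation*}
\E\bigl[\e^{tD_i}\bigm|Y_0,\dotsc,Y_{i-1}\bigr] \leq \e^{t^2 c_i^2/8}.
\end{equation*}

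Iterating the conditioning from $i=r$ down to $i=1$ then gives the unconditional bound $\E[\e^{t(Z_r-Z_0)}] \leq \e^{t^2 \sum_{i=1}^{r} c_i^2/8}$, so that
\begin{equation*}
\Pr[Z_r \geq Z_0 + \lambda] \leq \exp\Bigl(-t\lambda + \tfrac{t^2}{8}\textstyle\sum_{i=1}^{r}c_i^2\Bigr).
\end{equation*}
The right-hand side is a quadratic in $t$ which is minimized by $t = 4\lambda/\sum_{i=1}^{r}c_i^2$; substituting this optimal $t$ yields the desired tail bound $\e^{-2\lambda^2/\sum_{i=1}^{r}c_i^2}$.

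The main obstacle is the Hoeffding lemma step: one has to be careful that $B_i$ is allowed to depend on $Y_0,\dotsc,Y_{i-1}$, but this is precisely why conditioning first matters — after conditioning, $B_i$ becomes a deterministic shift and only the width $c_i$ (which is a fixed constant) enters the MGF bound. Everything else (Markov's inequality, the telescoping, the tower property, and the final optimization over $t$) is mechanical once this conditional MGF estimate is in hand.
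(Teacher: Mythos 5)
The paper does not prove this lemma at all --- it is imported verbatim from Mitzenmacher and Upfal's textbook as a known result --- and your argument is precisely the standard proof given there: Markov's inequality on the exponential moment, the tower property to peel off one increment at a time, Hoeffding's lemma applied to each conditionally centered increment confined to an interval of length $c_i$ (yielding the factor $\e^{t^2c_i^2/8}$), and optimization at $t=4\lambda/\sum_{i=1}^{r}c_i^2$ to get $\e^{-2\lambda^2/\sum_{i=1}^{r}c_i^2}$. Your proof is correct, including the key observation that conditioning on $Y_0,\dotsc,Y_{i-1}$ turns $B_i$ into a deterministic shift so that only the width $c_i$ enters the moment-generating-function bound.
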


Based on the above Azuma-Hoeffding inequality, we provide a concentration bound for possibly correlated random variables as follows.
\begin{corollary}\label{corollary:azuma}
	Let $Y_1,\dotsc, Y_r$ be any sequence of random variables and $Y_i^\prime$ be a function of $Y_1,\dotsc, Y_i$ satisfying $\abs{Y_i^\prime}\leq \beta$ and $\E[Y_i^\prime\mid Y_1,\cdots,Y_{i-1}]\leq \gamma_i$ for every $i=1,\dots,r$. Then, we have
	\begin{equation}
	\Pr\left[\sum\nolimits_{i=1}^r Y_i^\prime \ge \sum\nolimits_{i=1}^r\gamma_i+\sqrt{r}\beta\lambda \right]\leq \e^{-\lambda^2/2}.
	\end{equation}
\end{corollary}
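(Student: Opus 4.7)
The plan is to construct an auxiliary martingale from the $Y_i'$, apply Azuma--Hoeffding, and then translate the resulting deviation bound back to the original sum using the hypothesis that $\E[Y_i'\mid Y_1,\ldots,Y_{i-1}]\le\gamma_i$.

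Concretely, I would define $Z_0=0$ and
\begin{equation*}
Z_i \;=\; \sum_{j=1}^{i}\bigl(Y_j' - \E[Y_j'\mid Y_1,\ldots,Y_{j-1}]\bigr), \qquad i=1,\ldots,r.
\end{equation*}
The first task is to verify that $\{Z_i\}$ is a martingale with respect to $\{Y_i\}$: each $Z_i$ is a function of $Y_1,\ldots,Y_i$, we have $|Z_i|\le 2i\beta<\infty$, and by construction $\E[Z_{i+1}-Z_i\mid Y_1,\ldots,Y_i]=\E[Y_{i+1}'\mid Y_1,\ldots,Y_i]-\E[Y_{i+1}'\mid Y_1,\ldots,Y_i]=0$. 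This is routine.

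Next I would obtain bounded-increment constants suitable for Lemma~\ref{lma:azuma}. Because $|Y_i'|\le\beta$, the quantity $\E[Y_i'\mid Y_1,\ldots,Y_{i-1}]$ also lies in $[-\beta,\beta]$ and is a function of $Y_1,\ldots,Y_{i-1}$. Setting
\begin{equation*}
B_i \;=\; -\beta - \E[Y_i'\mid Y_1,\ldots,Y_{i-1}],
\end{equation*}
which depends only on $Y_1,\ldots,Y_{i-1}$, the increment satisfies $B_i\le Z_i-Z_{i-1}\le B_i+2\beta$, so we may take $c_i=2\beta$. Azuma--Hoeffding with $\lambda'=\sqrt{r}\,\beta\lambda$ then yields
\begin{equation*}
\Pr\bigl[Z_r \ge \sqrt{r}\,\beta\lambda\bigr] \;\le\; \exp\!\left(-\frac{2(\sqrt{r}\beta\lambda)^2}{\sum_{i=1}^{r}(2\beta)^2}\right) \;=\; \e^{-\lambda^2/2}.
\end{equation*}

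Finally, I would transfer this to $\sum_{i=1}^{r}Y_i'$. Since $\E[Y_i'\mid Y_1,\ldots,Y_{i-1}]\le\gamma_i$ for every $i$,
\begin{equation*}
Z_r \;=\; \sum_{i=1}^{r}Y_i' - \sum_{i=1}^{r}\E[Y_i'\mid Y_1,\ldots,Y_{i-1}] \;\ge\; \sum_{i=1}^{r}Y_i' - \sum_{i=1}^{r}\gamma_i,
\end{equation*}
so the event $\bigl\{\sum_{i=1}^{r}Y_i'\ge\sum_{i=1}^{r}\gamma_i+\sqrt{r}\beta\lambda\bigr\}$ is contained in $\{Z_r\ge \sqrt{r}\beta\lambda\}$, giving the claimed bound.

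The only subtle point, and really the only thing that is not bookkeeping, is handling the one-sided hypothesis $\E[Y_i'\mid Y_1,\ldots,Y_{i-1}]\le\gamma_i$: the straightforward $Z_i=\sum_{j\le i}(Y_j'-\gamma_j)$ only produces a supermartingale, and directly bounding its increments uniformly costs a factor $2$ because $\gamma_i$ is deterministic rather than adapted. Centering by the conditional mean instead, and letting $B_i$ absorb that conditional mean as a predictable quantity, is what gives both a genuine martingale and the tight per-step range of length $2\beta$ needed to land on $\e^{-\lambda^2/2}$ rather than a weaker constant.
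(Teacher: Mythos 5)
Your proposal is correct and follows essentially the same route as the paper's own proof: the identical centered martingale $Z_i=\sum_{j\le i}(Y_j'-\E[Y_j'\mid Y_1,\dots,Y_{j-1}])$, the same predictable lower bound $B_i=-\beta-\E[Y_i'\mid Y_1,\dots,Y_{i-1}]$ with $c_i=2\beta$, the same application of Lemma~\ref{lma:azuma} with deviation $\sqrt{r}\beta\lambda$, and the same final event containment via $\E[Y_i'\mid Y_1,\dots,Y_{i-1}]\le\gamma_i$. Nothing is missing.
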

\begin{proof}[Corollary~\ref{corollary:azuma}]
	Let $Z_0=Y_0^\prime=0$ and for every $i=1,\dots,r$,
	\begin{equation*}
		Z_i:=\sum\nolimits_{j=1}^i(Y_j^\prime-\E[Y_j^\prime\mid Y_1,\dots,Y_{j-1}]).
	\end{equation*}
	Then, it is easy to verify that $\E[\abs{Z_i}]<\infty$ and $\E[Z_i\mid Y_1,\dots,Y_{i-1}]=Z_{i-1}$, which indicates that $Z_i$ is a martingale. In addition, let $B_i:=-\beta-\E[Y_i^\prime\mid Y_1,\dots,Y_{i-1}]$. As $\abs{Y_i^\prime}\leq \beta$, we can get that $Z_i-Z_{i-1}=Y_i^\prime-\E[Y_i^\prime\mid Y_1,\dots,Y_{i-1}]$ is in the range of $[B_i,B_i+2\beta]$. Thus, according to Lemma~\ref{lma:azuma},
	\begin{equation}\label{eqn:inequality1}
		\Pr\Big[Z_r\geq\sqrt{r}\beta\lambda\Big]\leq \e^{-2r\beta^2{\lambda^2}/{\sum_{i=1}^{r}(2\beta)^2}}=\e^{-\lambda^2/2}.
	\end{equation}
	On the other hand, as $\E[Y_i^\prime\mid Y_1,\cdots,Y_{i-1}]\leq \gamma_i$ for every  $i=1,\dots,r$, we have $Z_r\geq \sum_{i=1}^r(Y_i^\prime-\gamma_i)$. As a consequence
	\begin{equation}\label{eqn:inequality2}
		\Pr\left[\sum\nolimits_{i=1}^r(Y_i^\prime-\gamma_i)\geq \sqrt{r}\beta\lambda\right]\leq \Pr\Big[Z_r\geq \sqrt{r}\beta\lambda\Big].
	\end{equation}
	Combining \eqref{eqn:inequality1} and \eqref{eqn:inequality2} completes the proof. \qed
\end{proof}

Recall that we have shown in Lemma~\ref{lemma:frame-greedy-random} that the approximation guarantee of $\pi^{\ag}$ is determined by the summation of the approximation factors, \ie $X(\omega)=\frac{1}{r}\sum_{i=1}^{r}X_i(\omega)$, and these approximation factors could be correlated. According to Corollary~\ref{corollary:azuma}, we can get a bound on their summation, based on which we can derive the overall approximation guarantee as follows. 

\begin{theorem}\label{thm:greedy-wst-approx}
Without loss of generality, suppose that policy $\pi^{\ag}$ returns $\xi(\pi^{\ag}(\omega),\psi)$-approximate solution satisfying $c_1\leq \xi(\pi^{\ag}(\omega),\psi)\leq c_2$ for every $\omega$ and $\psi$.\footnote{Note that $c_1=0$ and $c_2=1$ can always satisfy the requirement. Thus, we can always find some $c_1$ and $c_2$ such that $0\leq c_1\leq c_2\leq 1$.} For any given $\delta \in (0,1)$, let $\alpha^\prime=\frac{1}{r} \sum_{i=1}^r \alpha_i - (c_2-c_1)\cdot\sqrt{1/(2r)\cdot\ln (1/\delta)}$. If $f$ is adaptive monotone and adaptive submodular, and $\pi^{\ag}$ returns an expected $\alpha_i$-approximate solution for the $i$-th item selection under every partial realization $\psi_{i-1}$, then $\pi^{\ag}$ achieves the worst-case approximation ratio $1-\e^{-\alpha^\prime}$ with a probability of at least $1-\delta$.
\end{theorem}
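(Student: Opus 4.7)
The plan is to combine Lemma~\ref{lemma:frame-greedy-random} with the martingale concentration inequality in Corollary~\ref{corollary:azuma}. Lemma~\ref{lemma:frame-greedy-random} already gives the random guarantee $1-\e^{-X(\omega)}$ with $X(\omega)=\frac{1}{r}\sum_{i=1}^r X_i(\omega)$, so it suffices to prove that $X(\omega)\ge \alpha'$ with probability at least $1-\delta$. Because $\Psi_{i-1}(\omega,\Phi)$ depends on all earlier random bits, the variables $X_1(\omega),\dotsc,X_r(\omega)$ are correlated and a plain Chernoff/Hoeffding bound does not apply; this is precisely the setting for which Corollary~\ref{corollary:azuma} was built.

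The first step I would carry out is to establish the one-step bound $\E[X_i(\omega)\mid \omega_{1:i-1}]\ge \alpha_i$, where $\omega=(\omega_1,\dots,\omega_r)$ decomposes the policy's randomness by round. The key observation is that, after fixing $\omega_{1:i-1}$, the denominator $\E_{\Phi}[\max_e \Delta(e\mid \Psi_{i-1}(\omega,\Phi))]$ of $X_i(\omega)$ is a constant independent of $\omega_i$, while the numerator $\E_{\Phi}[\Delta(\pi^{\ag}(\omega_i,\Psi_{i-1})\mid \Psi_{i-1})]$ can be rewritten by swapping the $\omega_i$ and $\Phi$ expectations (using that $\omega_i$ is independent of $\omega_{1:i-1}$ and $\Phi$) and then invoking the hypothesis $\EO[\xi(\pi^{\ag}(\omega),\psi_{i-1})]\ge \alpha_i$ pointwise in $\psi_{i-1}$. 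Dividing the resulting numerator bound by the denominator yields $\E[X_i(\omega)\mid \omega_{1:i-1}]\ge \alpha_i$.

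Next, I would fit $X_i$ into the template of Corollary~\ref{corollary:azuma}. Set $\beta:=(c_2-c_1)/2$ and $Y_i':=(c_1+c_2)/2-X_i(\omega)$, which is a function of $\omega_{1:i}$. Since $c_1\le X_i(\omega)\le c_2$, we have $|Y_i'|\le \beta$, and the previous step gives $\E[Y_i'\mid \omega_{1:i-1}]\le (c_1+c_2)/2-\alpha_i=:\gamma_i$. Feeding these $\beta$ and $\gamma_i$ into Corollary~\ref{corollary:azuma} and choosing $\lambda=\sqrt{2\ln(1/\delta)}$ yields
\begin{equation*}
\Pr\Bigl[\sum_{i=1}^r Y_i' \ge \sum_{i=1}^r \gamma_i + (c_2-c_1)\sqrt{r\ln(1/\delta)/2}\Bigr]\le \delta.
\end{equation*}
Rearranging the inequality inside the probability, the $(c_1+c_2)/2$ offsets cancel and what remains is exactly the event $X(\omega)\le \alpha'$. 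Thus $X(\omega)\ge \alpha'$ with probability at least $1-\delta$, and composing this with Lemma~\ref{lemma:frame-greedy-random} gives the worst-case approximation ratio $1-\e^{-\alpha'}$ at the same confidence level.

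The main obstacle I expect is the first step: translating a hypothesis stated round by round and per partial realization into the cleanly-indexed inequality $\E[X_i(\omega)\mid \omega_{1:i-1}]\ge \alpha_i$ on a ratio of $\Phi$-expectations, with care that $\omega_i$ is the only new randomness in step $i$ and that the denominator is $\omega_i$-free. Once that accounting is in place, the remainder is a direct application of the martingale inequality already developed, plus the monotonicity and submodularity hypotheses needed for Lemma~\ref{lemma:frame-greedy-random} to be invoked.
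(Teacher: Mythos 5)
Your proposal is correct and follows essentially the same route as the paper's proof: establish the conditional one-step bound $\E[X_i(\omega)\mid\text{past}]\ge\alpha_i$ by exploiting that the denominator of $X_i$ is fixed once the first $i-1$ rounds of randomness are fixed, then feed a centered, bounded transform of $X_i$ into Corollary~\ref{corollary:azuma} with $\lambda=\sqrt{2\ln(1/\delta)}$ and compose with Lemma~\ref{lemma:frame-greedy-random}. The only cosmetic differences are your choice of conditioning variable ($\omega_{1:i-1}$ rather than the realized mappings $\Psi_j(\omega,\cdot)$) and the affine normalization $Y_i'=(c_1+c_2)/2-X_i(\omega)$ with $\beta=(c_2-c_1)/2$ in place of the paper's rescaled version with $\beta=1/2$; both yield the identical bound.
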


\begin{proof}[Theorem~\ref{thm:greedy-wst-approx}]
For every algorithm randomness $\omega$ and every realization $\phi$, as defined before, $\Psi_{i}(\omega, \phi)$ represents the
	partial realization corresponding to the first $i$ steps of running the adaptive greedy policy $\pi^{\ag}(\omega)$ on realization $\phi$,
	for $i=1, 2, \ldots, r$.
Then $\Psi_i(\omega, \cdot)$ is a mapping from all realizations to partial realizations that has $i$ items in the domain.
Let $A_i$ be any fixed mapping from all realizations to partial realizations that has $i$ items in the domain and
	is consistent with the corresponding realization, i.e., $\phi\sim A_i(\phi)$ for all $\phi$ and $\abs{\dom(A_i(\phi))} = i$.
Let $\Omega_{i-1}$ be the distribution of $\omega$ conditional on $\Psi_j(\omega, \cdot)=A_{j}$ for every $j<i$. 
That is, $\Omega_{i-1}$ is the probability subspace in which the adaptive greedy policy $\pi^{\ag}$ generates the first $i-1$ steps
	exactly according to $A_1, A_2, \ldots, A_{i-1}$.
For any $\omega$ sampled from $\Omega_{i-1}$, by the above definition, we have that for every realization $\phi$, 
	$\Psi_j(\omega, \phi)=A_{j}(\phi)$.
Note that, to be precise, we would include $A_1, A_2, \ldots, A_{i-1}$ in the notation $\Omega_{i-1}$, such as
	$\Omega_{i-1}(A_1, A_2, \ldots, A_{i-1})$, but for simplicity we choose the shorter notation.
Then we have
	\begin{align}
&\EO[X_i(\omega)\mid \omega\sim \Omega_{i-1}] \nonumber \\
&=\EO\bigg[\frac{\E_{\Phi}\big[\Delta(\pi^{\ag}_i(\omega)\mid \Psi_{i-1}(\omega,\Phi))\big]}{\E_{\Phi}\big[\max_{e}\Delta(e\mid \Psi_{i-1}(\omega,\Phi))\big]}\,\mathrel{\bigg|}\,\omega\sim \Omega_{i-1}\bigg] \nonumber \\
&=\EO\bigg[\frac{\E_{\Phi}\big[\Delta(\pi^{\ag}_i(\omega)\mid A_{i-1}(\Phi))\big]}{\E_{\Phi}\big[\max_{e}\Delta(e\mid  A_{i-1}(\Phi)\big]}\,\mathrel{\bigg|}\,\omega\sim \Omega_{i-1}\bigg] \nonumber \\
&=\frac{\EO\Big[\E_{\Phi}\big[\Delta(\pi^{\ag}_i(\omega)\mid A_{i-1}(\Phi))\big]\,\mathrel{\Big|}\,\omega\sim \Omega_{i-1}\Big]}{\E_{\Phi}\big[\max_{e}\Delta(e\mid A_{i-1}(\Phi))\big]} \nonumber \\
&=\frac{\E_{\Phi}\Big[\EO\big[\Delta(\pi^{\ag}_i(\omega)\mid A_{i-1}(\Phi))\,\mathrel{\big|}\,\omega\sim \Omega_{i-1}\big]\Big]}{\E_{\Phi}\big[\max_{e}\Delta(e\mid A_{i-1}(\Phi))\big]} \nonumber \\
&\geq\frac{\E_{\Phi}\big[\alpha_i\cdot\max_{e}\Delta(e\mid A_{i-1}(\Phi))\big]}{\E_{\Phi}\big[\max_{e}\Delta(e\mid A_{i-1}(\Phi))\big]}
=\alpha_i,  \label{eq:alphai}
\end{align}
	where the inequality is by the requirement of $\pi^{\ag}$ in \eqref{eqn:greedy_policy}. 
	
Note that by definition, $\EO[X_i(\omega)\mid \omega\sim \Omega_{i-1}]$ represents $\EO[X_i(\omega)\mid \Psi_j(\omega, \cdot)=A_{j},\forall j < i]$.
Inequality~\eqref{eq:alphai} means that for any fixed mappings $A_1, \ldots, A_{i-1}$, 
	$\EO[X_i(\omega)\mid \Psi_j(\omega, \cdot)=A_{j},\forall j < i] \ge \alpha_i$.
Omitting $A_1, \ldots, A_{i-1}$, we have $\EO[X_i(\omega)\mid \Psi_j(\omega, \cdot),\forall j < i] \ge \alpha_i$.
	
Next, by letting $Y_i^\prime(\omega)=1/2-(X_i(\omega)-c_1)/(c_2-c_1)$, we have $\E[Y_i^\prime(\omega)\mid \Psi_j(\omega, \cdot),\forall j < i]\leq 1/2-(\alpha_i-c_1)/(c_2-c_1)$. Meanwhile, it is easy to obtain that $c_1\leq X_i(\omega)\leq c_2$ as $c_1\leq \xi(\pi^{\ag}(\omega),\psi)\leq c_2$ for every $\omega$ and $\psi$. Thus, we also have $|Y_i^\prime(\omega)|\leq 1/2$.

	Hence, by treating $\Psi_j(\omega, \cdot)$ as the random variable $Y_j$ in Corollary~\ref{corollary:azuma}, 
	we can apply Corollary~\ref{corollary:azuma} and obtain
\begin{align*}
&\Pr[X(\omega)\leq\alpha^\prime]\\
&=\Pr\bigg[X(\omega)\leq \frac{1}{r}\sum_{i=1}^{r}\alpha_i - (c_2-c_1)\cdot\sqrt{\frac{1}{2r}\ln \frac{1}{\delta}}\bigg]\\
&=\Pr\bigg[\sum_{i=1}^r Y_i^\prime(\omega)\geq \sum_{i=1}^{r}\big(\frac{1}{2}-\frac{\alpha_i-c_1}{c_2-c_1}\big) +\sqrt{r} \frac{1}{2}\sqrt{2\ln \frac{1}{\delta}}\bigg]\\
&\leq \e^{-2\ln\frac{1}{\delta}/2}=\delta,
\end{align*}
which completes the proof.\qed
\end{proof}

Note that the worst-case approximation ratio of $\pi^{\ag}$ (Theorem~\ref{thm:greedy-wst-approx}) is worse than its expected approximation guarantee (Theorem~\ref{thm:greedy-ept-approx}), where the overall approximation factor of the latter is larger by an additive factor of $(c_2-c_1)\cdot\sqrt{1/(2r)\cdot\ln (1/\delta)}$ than the former. Furthermore, as $\pi^{\ag}$ is a $c_1$-approximate greedy policy, according to Golovin and Krause \cite{Golovin_adaptive_2011}, $\pi^{\ag}$ achieves an approximation ratio of $(1-\e^{-c_1})$. Thus, the worst-case approximation ratio is meaningful only when $\alpha^\prime\geq c_1$, \ie~$\delta \geq \e^{-\frac{2\left( \sum_{i=1}^r (\alpha_i-c_1)\right)^2}{r(c_2-c_1)^2}}$.

\subsection{Solution Framework for Adaptive IM}\label{sec:frame}

\begin{algorithm}[t]\label{alg:frame-greedy}
	\setlength{\hsize}{0.94\linewidth}
	\caption{\AG}
	\KwIn{social network $G$, seed set size $k$, batch number $r$, approximation error $\varepsilon_1, \varepsilon_2, \cdots,\varepsilon_r$}
	\KwOut{adaptively selected seed sets $S_1,\cdots,S_r$}
	$b \leftarrow k/r$\;
	$G_1 \leftarrow G$\;
	$\rho_b \leftarrow 1-(1-1/b)^b$\; \label{line:alpha}
	\For{$i\leftarrow1$ \KwTo $r$ \label{ln:iterationstart1}}{
		Identify a size-$b$ seed set $S_i$ from $G_i$, such that $S_i$ achieves an expected approximation ratio of at least $\rho_b(1-\varepsilon_i)$ on $G_i$\;\label{line:seed-identify}
		Observe the influence of $S_i$ in $G_i$\;
		Remove all nodes in $G_i$ that are influenced by $S_i$, and denote the resulting graph as $G_{i+1}$\;
	}
	\Return{$S_1,\cdots,S_r$}
\end{algorithm}

The adaptive IM under the IC model satisfies the adaptive monotonicity and adaptive submodularity \cite{Golovin_adaptive_2011}. Based on the expected $\alpha$-approximate greedy policy $\pi^{\ag}$, we propose a general framework \AG (\ie~Algorithm~\ref{alg:frame-greedy}) upon which we can build specific algorithms with seed selection algorithms to address the adaptive IM problem. At the first glance, \AG may seem similar to Vaswani and Lakshmanan's method~\cite{Vaswani_adapIM_2016}, since both techniques (i) adaptively select seed nodes in $r$ batches and (ii) do not require exact computation of expected spreads. However, there is a crucial difference between the two: Vaswani and Lakshmanan's method requires that the expected spread of {\it every} node set should be estimated with a small {\it fixed} relative error with respect to its own expectation, whereas \AG just requires an expected approximation ratio of $\rho_b(1-\varepsilon_i)$ with respect to $\OPT_b(G_i)$ (Line~\ref{line:seed-identify} in Algorithm~\ref{alg:frame-greedy}), where $\OPT_b(G_i)$ denotes the maximum expected spread of any size-$b$ seed set on $G_i$ and $\rho_b=1-(1-1/b)^b$. Note that $\rho_b$ is the approximation ratio achieved by a greedy algorithm for \MC (which is a building block for IM), and this factor cannot be further improved by any polynomial time algorithm unless \PequalNP \cite{Feige_setcover_1998}. The error requirement of \AG is much more lenient than that of Vaswani and Lakshmanan's method, and it can be achieved by several state-of-the-art solutions \cite{Tang_TIM_2014,Tang_IMM_2015,Nguyen_DSSA_2016,Tang_OPIM_2018} for vanilla influence maximization, i.e., it admits practical implementations. 

In addition, \AG is {\it flexible} in that it allows each batch of seed nodes $S_i$ to be selected with different approximation guarantee $\rho_b(1-\varepsilon_i)$, whereas the existing solutions (e.g., \cite{Golovin_adaptive_2011}) for adaptive IM require that all seed sets $S_1, \ldots, S_r$ should be processed with identical accuracy assurance. Therefore, \AG is a general framework for the adaptive IM problem. According to Theorem~\ref{thm:greedy-ept-approx} and Theorem~\ref{thm:greedy-wst-approx}, \AG can provide the following theoretical guarantees.

\begin{theorem} \label{thm:frame-greedy-approx}
	If \AG returns an expected $\rho_b(1-\varepsilon_i)$-approximate solution in the $i$-th batch of seed selection, then it achieves an expected approximation guarantee of $1-\e^{\rho_b({\varepsilon}-1)}$, where ${\varepsilon}=\frac{1}{r}\sum_{i=1}^r \varepsilon_i$, $\rho_b = 1-(1-1/b)^b$, $r$ is the number of batches and $b$ is the batch size.
	
	Meanwhile, \AG  also achieves a worst-case approximation guarantee of $1-\e^{\rho_b({\varepsilon}^\prime-1)}$ with a probability of at least $1-\delta$, where $\varepsilon^\prime=\frac{1}{r} \sum_{i=1}^r \varepsilon_i + \sqrt{1/(2r)\cdot\ln (1/\delta)}$.
\end{theorem}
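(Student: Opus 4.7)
The plan is to reduce Theorem~\ref{thm:frame-greedy-approx} to a direct invocation of Theorem~\ref{thm:greedy-ept-approx} (expected part) and Theorem~\ref{thm:greedy-wst-approx} (worst-case part). Namely, I view \AG as an instantiation of the randomized adaptive greedy policy $\pi^{\ag}$ from Section~\ref{sec:AGP} in which the ``items'' of the abstract framework are identified with size-$b$ seed batches, the ``state'' of each item is the propagation pattern it triggers, the number of selection rounds equals $r$, and the utility function $f$ is the expected influence spread. Since adaptive IM under the IC model is both adaptive monotone and adaptive submodular \cite{Golovin_adaptive_2011}, the structural hypotheses of the two theorems are automatically met; in particular, the marginal gain $\Delta(S_i\mid \psi_{i-1})$ in the abstract notation specializes to $\E[I_{G_i}(S_i)]$ because nodes already influenced by earlier batches have been deleted from $G_i$, so $\max_e \Delta(e\mid \psi_{i-1}) = \OPT_b(G_i)$.

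For the expected-case bound, I first argue that Line~\ref{line:seed-identify} of Algorithm~\ref{alg:frame-greedy} makes \AG an expected $\alpha_i$-approximate greedy policy with $\alpha_i = \rho_b(1-\varepsilon_i)$. This requires nothing beyond unpacking the requirement on the non-adaptive subroutine together with the specialization of $\Delta$ just described. Plugging $\alpha_i = \rho_b(1-\varepsilon_i)$ into Theorem~\ref{thm:greedy-ept-approx} gives $\alpha = \frac{1}{r}\sum_{i=1}^r \alpha_i = \rho_b(1-\varepsilon)$, and hence $\E_\omega[f_{\avg}(\pi^{\ag}(\omega))] \ge (1-e^{-\rho_b(1-\varepsilon)})\E_\omega[f_{\avg}(\pi^\ast(\omega))]$, which is exactly $1-e^{\rho_b(\varepsilon-1)}$.

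For the worst-case bound, the same reduction passes the hypotheses of Theorem~\ref{thm:greedy-wst-approx} with $\alpha_i = \rho_b(1-\varepsilon_i)$, and the key step is choosing the deterministic range $[c_1,c_2]$ on the per-round random ratio $\xi(\pi^{\ag}(\omega),\psi)$ appropriately. Since the non-adaptive subroutine used inside \AG is a \MC-approximate IM algorithm whose output value is bounded above by $\rho_b\cdot \OPT_b(G_i)$ (this is the very regime the framework is designed for, as no polynomial-time algorithm can surpass the $\rho_b$ factor under \PequalNP), the natural choice is $c_1 = 0$ and $c_2 = \rho_b$, so that $c_2 - c_1 = \rho_b$. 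Substituting into Theorem~\ref{thm:greedy-wst-approx} yields
\begin{equation*}
\alpha^\prime \;=\; \rho_b(1-\varepsilon) - \rho_b\sqrt{\tfrac{1}{2r}\ln\tfrac{1}{\delta}} \;=\; \rho_b(1-\varepsilon^\prime),
\end{equation*}
with $\varepsilon^\prime = \frac{1}{r}\sum_{i=1}^r\varepsilon_i + \sqrt{\frac{1}{2r}\ln\frac{1}{\delta}}$, and the worst-case ratio $1-e^{\rho_b(\varepsilon^\prime-1)}$ holds with probability at least $1-\delta$ as claimed.

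The main obstacle is the bookkeeping for the worst-case step, specifically justifying the tighter range $c_2-c_1 = \rho_b$ rather than the trivial $c_2-c_1 = 1$; using the trivial range would yield the same qualitative result but with an extra $1/\rho_b$ factor inside the square-root term, weakening the stated constants. Apart from this, the proof is largely a translation between the abstract policy notation of Section~\ref{sec:AGP} and the residual-graph notation of Section~\ref{sec:frame}, together with a one-line application of each of Theorems~\ref{thm:greedy-ept-approx} and \ref{thm:greedy-wst-approx}.
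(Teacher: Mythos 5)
Your overall strategy coincides with the paper's: Theorem~\ref{thm:frame-greedy-approx} is obtained by instantiating the abstract policy of Section~\ref{sec:AGP} with size-$b$ batches as items, so that $\Delta(S_i\mid\psi_{i-1})=\E[I_{G_i}(S_i)]$ and $\max_e\Delta(e\mid\psi_{i-1})=\OPT_b(G_i)$, and then invoking Theorems~\ref{thm:greedy-ept-approx} and~\ref{thm:greedy-wst-approx} with $\alpha_i=\rho_b(1-\varepsilon_i)$. The expected-case half of your argument is correct and is exactly the paper's route.

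The worst-case half has a genuine gap, and it sits precisely at the step you identify as the main obstacle. You justify $c_2=\rho_b$ by asserting that the subroutine's ``output value is bounded above by $\rho_b\cdot\OPT_b(G_i)$'' because no polynomial-time algorithm can beat the $\rho_b$ factor. This conflates a worst-case guarantee with a pointwise bound on realized quality: inapproximability limits what an algorithm can \emph{promise} over all instances, not the value it actually attains on a particular $G_i$. The greedy \MC solution may well coincide with (or nearly match) $S_i^o$, in which case $\xi(\pi^{\ag}(\omega),\psi_{i-1})=\E[I_{G_i}(S_i)]/\OPT_b(G_i)$ equals $1$; hence the only range valid ``for every $\omega$ and $\psi$,'' as Theorem~\ref{thm:greedy-wst-approx} requires, is $c_1=0$, $c_2=1$. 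With that choice the theorem delivers $\alpha'=\rho_b(1-\varepsilon)-\sqrt{\ln(1/\delta)/(2r)}$, i.e.\ $\varepsilon'=\varepsilon+\tfrac{1}{\rho_b}\sqrt{\ln(1/\delta)/(2r)}$, which is weaker than the stated bound by the factor $1/\rho_b\ge 1$ inside the deviation term. The paper itself gives no explicit derivation of $c_1,c_2$ for this instantiation (it simply cites the two general theorems), so you have correctly isolated the one nontrivial constant, but the argument you supply for it does not hold: either the weaker $\varepsilon'$ must be accepted, or one needs a different (and currently missing) argument confining the per-round ratio to an interval of length $\rho_b$.
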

\section{Instantiations of \AG} \label{sec:batch}

In this section, we first present a naive instantiation of \AG using the state-of-the-art non-adaptive IM algorithms. To utilize the notion of expected approximation ratio for each batch of seed selection, we then design a new non-adaptive IM algorithm \EP. Finally, we analyze the approximation guarantees and time complexity of \EP and \AG instantiated with \EP respectively.

\subsection{Instantiation using Existing Algorithms} \label{sec:batch-basic}

As shown in Algorithm~\ref{alg:frame-greedy}, \AG requires identifying a random size-$b$ seed set $S_i(\omega)$ with respect to the randomness\footnote{Usually, the random source $\omega$ indicates sampling for IM, e.g.,~reverse influence sampling \cite{Borgs_RIS_2014}.} of $\pi^{\ag}(\omega)$ from the $i$-th residual graph $G_i$, such that
\begin{equation}\label{eqn:approbatch}
\EO[\E[I_{G_i}(S_i(\omega))]]\geq \rho_b(1-\varepsilon_i){\OPT}_b(G_i),
\end{equation}
where $\E[I_{G_i}(S_i(\omega))]$ is the expected spread of $S_i(\omega)$ on $G_i$ and its expectation $\EO[\cdot]$ is over the internal randomness of the algorithm, ${\OPT}_b(G_i)$ is the maximum expected spread of any size-$b$ seed set on $G_i$. For brevity, in the rest of the paper, we use $S_i$ to represent a random set $S_i(\omega)$ obtained by a randomized policy $\pi^{\ag}(\omega)$ from the $i$-th residual graph $G_i$.

We observe that such a seed set $S_i$ could be obtained by applying the state-of-the-art algorithms (e.g., \cite{Tang_TIM_2014,Tang_IMM_2015,Nguyen_DSSA_2016,Tang_OPIM_2018}) for vanilla influence maximization (IM) on $G_i$. In particular, these algorithms are randomized, and they provide a \textit{worst-case} approximation guarantee as follows: given a seed set size $b$, a relative error threshold $\varepsilon_i^\prime$ and a failure probability $\delta_i$, they output a size-$b$ seed set $S_i$ in $G_i$ whose expected spread is $\rho_b (1- \zeta(\omega,G_i))$ times the maximum expected spread of any size-$b$ seed set on $G_i$, such that $\zeta(\omega,G_i) \le \varepsilon_i^\prime$ with at least $1 - \delta_i$ probability. Thus, we obtain that
\begin{equation}
\EO[\zeta(\omega,G_i)]\leq\varepsilon_i^\prime \cdot (1-\delta_i)+\delta_i.
\end{equation}
To ensure \eqref{eqn:approbatch}, for each pair of $(\varepsilon_i^\prime,\delta_i)$, let $\delta_i$ be a sufficient small value and $\varepsilon_i^\prime:=(\varepsilon_i-\delta_i)/(1-\delta_i)\approx \varepsilon_i$. According to Theorem~\ref{thm:frame-greedy-approx}, such an instantiation of \AG yields an expected (resp.\ worst-case) approximation ratio of $1 - \e^{\rho_b(\varepsilon - 1)}$ (resp.\ $1 - \e^{\rho_b(\varepsilon^\prime - 1)}$) where $\varepsilon=\frac{1}{r}\sum_{i=1}^r \varepsilon_i$ (resp.\ $\varepsilon^\prime=\frac{1}{r}\sum_{i=1}^r \varepsilon_i+\sqrt{{1}/{2r}\cdot\ln ({1}/{\delta})}$).

But how efficient is the above instantiation? To answer this question, we need to investigate the time complexity of the vanilla IM algorithms in \cite{Tang_IMM_2015}. The theoretical analysis in \cite{Tang_IMM_2015} shows that if we are to achieve $(\rho_b - \varepsilon_i^\prime)$-approximation on $G_i$ with at least $1 - \delta_i$ probability, then the expected computation cost is $O((b\log n_i+\log\frac{1}{\delta_i})(m_i+n_i)/{\varepsilon_i^\prime}^2)$, where $n_i$ and $m_i$ denote the number of nodes and edges in $G_i$ respectively. Since $n_i \le n$ and $m_i \le m$, the expected time required to process $G_i$ is $O((b\log n+\log\frac{1}{\delta_i})(m+n)/{\varepsilon_i^\prime}^2)$. As such, all $r$ batches of seed nodes can be identified in $O(\sum_{i=1}^r(b\log n+\log\frac{1}{\delta_i})(m+n)/{\varepsilon_i^\prime}^2)$ expected time. By setting a pair of parameters $(\varepsilon_i^\prime,\delta_i)$ in the vanilla IM algorithms as $\varepsilon_i^\prime=(\varepsilon_i-\delta_i)/(1-\delta_i)$, we can achieve an expected approximation ratio of $\rho_b(1-\varepsilon_i)$ in the $i$-th batch. This shows that the total expected time complexity for achieving the final expected (resp.\ worst-case) approximation ratio of $1-\e^{\rho_b(\frac{1}{r}\sum_{i=1}^r{\varepsilon}_i-1)}$ (resp.\ $1-\e^{\rho_b(\frac{1}{r}\sum_{i=1}^r{\varepsilon}_i+\sqrt{{1}/{2r}\cdot\ln ({1}/{\delta})}-1)}$) is $O(\sum_{i=1}^r(b\log n+\log\frac{1}{\delta_i})(m+n)/{\varepsilon_i^\prime}^2)$.

\spara{Rationale for an Improved Approach} The aforementioned instantiation of \AG is straightforward and intuitive, but is far from optimized in terms of its approximation guarantee. To explain, recall that it requires each seed set $S_i$ to achieve $\rho_b(1-\varepsilon_i^\prime)$-approximation on $G_i$ with probability at least $1 - \delta_i$, based on which it provides an overall expected approximation ratio of $1 - \e^{\rho_b(\varepsilon-1)}$ with $\varepsilon=\frac{1}{r}\sum_{i=1}^r \varepsilon_i$ where $\varepsilon_i= \varepsilon_i^\prime \cdot (1-\delta_i)+\delta_i\approx \varepsilon_i^\prime$. In other words, it imposes a stringent worst-case approximation guarantee on each seed set $S_i$. This, however, might be overly conservative. Intuitively, the expected approximation error factor $\EO[\zeta(\omega,G_i)]$ should be much smaller than the naive upper bound ${\varepsilon}_i$ deduced from the worst-case approximation. To the best of our knowledge, there is no known result for vanilla IM with tight expected approximation guarantees. This motivates us to develop a vanilla IM algorithm tailored for \AG, as we show in the following section.

\subsection{IM Algorithm with Expected Approximation} \label{sec:batch-epic}

As discussed in Section~\ref{sec:batch-basic}, the existing IM algorithms provide only a worst-case approximation guarantee, i.e., the relative error factor $\zeta(\omega,G_i)$ is no more than the input threshold $\varepsilon_i^\prime$ with high probability. To optimize the performance of \AG, we are in need of one non-adaptive IM algorithm with expected approximation guarantee $\rho_b(1-\EO[\zeta(\omega,G_i)])$ such that $\EO[\zeta(\omega,G_i)]$ has a tighter bound. In what follows, we present a new non-adaptive IM algorithm, referred to as \EP\footnote{\underline{E}xpected a\underline{p}proximation for \underline{i}nfluen\underline{c}e maximization.}, that returns a solution with expected approximation guarantee. To this end, we first introduce the concept of {\it reverse reachable sets (RR-sets)} \cite{Borgs_RIS_2014}, which is the basis of our algorithm.

\spara{RR-Sets} In a nutshell, RR-sets are subgraph samples of $G$ that can be used to efficiently estimate the expected spreads of any given seed sets. Specifically, a random RR-set of $G$ is generated by first selecting a node $v\in V$ uniformly at random, and then taking the nodes that can reach $v$ in a random graph generated by independently removing each edge $(u,v)\in E$ with probability $1-p(u,v)$. If a seed node set $S$ has large expected influence spread, then the probability that $S$ intersects with a random RR-set is high, as shown in the following equation~\cite{Borgs_RIS_2014}:
\begin{equation}\label{eqn:borgsequation11}
\E[I_G(S)]=n\cdot \Pr[R\cap S\neq\emptyset],
\end{equation}
where $R$ is a random RR-set. This result suggests a simple method for estimating the expected influence spread of any node set $S$: we can use a set $\R$ of random RR-sets to estimate the value of $\Pr[R \cap S\neq\emptyset]$ and hence $\E[I_G(S)]$. In particular, let $\Cov_{\R}(S)$ denote the number of RR-sets in $\R$ that overlap $S$. Then the value of $\E[I_G(S)]$ can be unbiasedly estimated by $n\cdot F_{\R}(S)$, where
\begin{equation}\label{eqn:estimation}
F_{\R}(S)= \Cov_{\R}(S)/|\R|.
\end{equation}
By the law of large numbers,  $n\cdot F_{\R}(S)$ should converge to $\E[I_{G}(S)]$ when $|\R|$ is sufficiently large, which provides a way to estimate $\E[I_{G}(S)]$ to any desired accuracy level. However, due to the cost of generating RR-sets, there is a tradeoff between accuracy and efficiency in any algorithms using RR-set sampling.

\spara{The \EP Algorithm} Algorithm~\ref{alg:OPIM-E} shows the pseudo-code of our \EP algorithm, which borrows the idea from the \OPIMC algorithm \cite{Tang_OPIM_2018} via (i) starting from a small number of RR-sets and (ii) iteratively increasing the RR-set number until a satisfactory solution is identified. The key difference between the two algorithms lies in the way that they compute the upper bound $\ub(S_i^o)$, \ie the fraction of RR-sets in $\R_1$ covered by $S_i^o$ in each iteration where $S_i^o$ is an optimal seed set in $G_i$. In particular, in \OPIMC, the upper bound $\ub(S_i^o)$ is ensured to be no smaller than the {\it expected} fraction of RR-sets in $\R_1$ covered by $S_i^o$ with high probability. This needs \OPIMC to provide the worst-case approximation guarantee with high probability. In contrast, in each iteration of \EP, the upper bound $\ub(S_i^o)$ is only required to be no smaller than the {\it true} fraction of RR-sets in $\R_1$ covered by $S_i^o$, based on which rigorous bounds on its expected approximation guarantee can be derived. In what follows, we discuss the details of \EP and its subroutine \MC (in Algorithm~\ref{alg:max-cover}).
\begin{algorithm}[!t]
	\setlength{\hsize}{0.94\linewidth}
	\caption{{\EP}$(G_i,b,\varepsilon_i)$}\label{alg:OPIM-E}
	\KwIn{Graph $G_i$, seed set size $b$, and error threshold $\varepsilon_i$.}
	\KwOut{A $b$-size seed set $S_i$ that provides an expected approximation guarantee of at least $\rho_b(1-\varepsilon_i)$.}
	let $\delta_i\leftarrow 0.01\varepsilon_ib/n_i$\;
	let $\varepsilon_i^\prime\leftarrow (b\varepsilon_i-\delta_in_i)/(b-\delta_in_i)$\; 
	let $\varepsilon_a\leftarrow{\varepsilon_i^\prime}/{(1-\varepsilon_i^\prime)}$\;
	let $i_{\max}\leftarrow\lceil\log_2\frac{(2+2\varepsilon_a/3)n_i}{\varepsilon_a^2}\rceil+1$ and $a_i\leftarrow \ln\frac{2i_{\max}}{\delta_i}$\;\label{alg:magim_maxiteration}
	initialize $\theta_{0}\leftarrow \frac{1}{b}\big(\ln{\frac{2}{\delta_i}}+\ln{\binom{n_i}{b}}\big)$\;\label{alg:magim_maxRRsets}
	generate two sets $\mathcal{R}_1$ and $\mathcal{R}_2$ of random RR-sets, with $\abs{\mathcal{R}_1} = \abs{\mathcal{R}_2} = \theta_0$\;\label{alg:magim_RRsets2}
	\For{$i\leftarrow1$ \KwTo $i_{\max}$\label{alg:magim_loop}}{
		$\langle S_i,\ub(S_i^o) \rangle\leftarrow \MC(\R_1,b)$\; \label{alg:magim_greedy}
		$\lb(S_i)\leftarrow\big(\sqrt{F_{\R_2}(S_i)+\frac{2a_i}{9\abs{\mathcal{R}_2}}} - \sqrt{\frac{a_i}{2\abs{\mathcal{R}_2}}}\big)^2-\frac{a_i}{18\abs{\mathcal{R}_2}}$\;\label{alg:magim_lower}
		\lIf{$\frac{\lb(S_i)}{\ub(S_i^o)}\geq \rho_b(1-\varepsilon_i^\prime)$ {\bf or} $i=i_{\max}$\label{alg:magim_stop}}{\Return{$S_i$}\label{alg:magim_return}}
		double the sizes of $\mathcal{R}_1$ and $\mathcal{R}_2$ with new RR-sets\;\label{alg:magim_doubleRRsets}
	}
\end{algorithm}

\begin{algorithm}[!t]
	\setlength{\hsize}{0.94\linewidth}
	\caption{{\MC}$(\R,b)$}
	\label{alg:max-cover}
	\KwIn{A set $\R$ of random RR-sets, and seed set size $b$.}
	\KwOut{A node set $S_i$, and an upper bound $\ub(S_i^o)$ on the fraction of RR-sets in $\R$ covered by $S_i^o$.}
	$S_i\leftarrow\emptyset$\;
	$\ub(S_i^o)\leftarrow \ub(S_i^o\mid S_i)$ which is computed by \eqref{eqn:upperbound}\;
	\For{$i\leftarrow1$ \KwTo $b$}
	{
		$u \leftarrow \arg\max_{v\in V_i}\left(\Cov_{\R}(S_i \cup \{v\} ) - \Cov_{\R}(S_i)\right)$\;
		insert $u$ into $S_i$\;
		compute $\ub(S_i^o\mid S_i)$ by \eqref{eqn:upperbound} based on the new $S_i$\;
		update $\ub(S_i^o)\leftarrow\min\{\ub(S_i^o), \ub(S_i^o\mid S_i)\}$\;
	}
	\Return $\langle S_i,\ub(S_i^o) \rangle$\;
\end{algorithm}

Based on the RR-set sampling method described previously, a simple approach for selecting $S_i$ with a large expected influence spread is to first generate a set $\R$ of RR-sets, and then invoke the \MC algorithm on $\R$. In particular, \MC uses a simple greedy approach to identify $S_i\subseteq V_i$ such that $S_i$ overlaps as many RR-sets in $\R$ as possible. Since $F_{\R}(\cdot)$ is a submodular function for any set $\R$ of RR-sets~\cite{Borgs_RIS_2014}, given any node set $S\subseteq V$ with $|S|\le b$, we know that
\begin{equation}\label{eqn:upperbound}
\ub(S_i^o\mid S)=F_{\R}(S)+\!\!\!\!\!\!\!\!\!\!\!\!\sum_{v\in\maxMC(S,b)}\!\!\!\!\!\!\!\!\!\!\!\!\left(F_{\R}(S \cup \{v\} ) - F_{\R}(S)\right)
\end{equation}
is an upper bound on $F_{\R}(S_i^o)$, where $S_i^o$ is an optimal seed set in $G_i$ and $\maxMC(S,b)$ is the set of $b$ nodes with the top-$b$ largest marginal coverage in $\R$ with respect to $S$. As a consequence, the smallest one $\ub(S_i^o)=\min_{S_i}\{\ub(S_i^o\mid S_i)\}$ during the greedy procedure ensures that
\begin{equation} \label{eqn:ubound1}
\ub(S_i^o)\geq F_{\R}(S_i^o).
\end{equation}
In addition, according to \cite{Tang_OPIM_2018}, we also have
\begin{equation} \label{eqn:ubound2}
F_{\R}(S_i)\geq \rho_b\ub(S_i^o)
\end{equation}
where $\rho_b$ is as defined in Algorithm~\ref{alg:frame-greedy}.
Putting \eqref{eqn:ubound1} and \eqref{eqn:ubound2} together yields
\begin{equation} \label{eqn:naiveapproach}
F_{\R}(S_i)\geq \rho_b\ub(S_i^o)\geq \rho_b F_{\R}(S_i^o).
\end{equation}
Thus, when $|\R|$ is large, the approximation guarantee of $S_i$ converges to $\rho_b$ according to Equation~\eqref{eqn:naiveapproach}.

To strike a balance between the quality of $S_i$ and the number of RR-sets used to derive $S_i$, \EP iterates in a careful manner as follows. In each iteration, it maintains two sets of random RR-sets $\R_1$ and $\R_2$ with $|\R_1|=|\R_2|$. It invokes \MC on $\R_1$ to identify a seed set $S_i$, and then utilizes $\R_2$ to test whether $S_i$ provides a good approximation guarantee. Initially, the cardinalities of $\R_1$ and $\R_2$ are small constants determined by the parameter $\theta_0$ in Line~\ref{alg:magim_maxRRsets} in the first iteration of \EP. Then, whenever \EP finds that the quality of the seed set $S_i$ generated in an iteration is not satisfactory, it doubles the sizes of $\R_1$ and $\R_2$. This process repeats until that a qualified solution is identified or the sizes of $\R_1$ and $\R_2$ reach $2^{i_{\max}-1}\theta_0$ which exceeds the threshold $\frac{(2+2\varepsilon_a/3) n_i}{\varepsilon_a^2b }\big(\ln{\frac{2}{\delta_i}}+\ln{\binom{n_i}{b}}\big)$ (Line~\ref{alg:magim_stop}).

As explained before, one of the main designing goals for \EP is to achieve an expected approximation ratio of $\rho_b(1-\varepsilon_i)$. \EP achieves this goal by a series of operations in each iteration, whose implications are briefly explained as follows.

In each iteration, \EP first applies \MC on $\R_1$ (Line~\ref{alg:magim_greedy}), which returns a seed set $S_i$ and an upper bound $\ub(S_i^o)$ on $F_{\R_1}(S_i^o)$, i.e.,
\begin{equation}\label{eqn:ubound}
\ub(S_i^o)\geq F_{\R_1}(S_i^o).
\end{equation}
After that, \EP uses $\R_2$ to estimate the expected spread of $S_i$ (i.e., $\E[I_{G_i}(S_i)]$). Observe that $|\R_2|F_{\R_2}(S_i)$ is a binomial random variable due to Equation~\eqref{eqn:estimation}. Accordingly, \EP uses the Chernoff-like martingale concentration bound to set a threshold $\lb(S_i)$ (Line~\ref{alg:magim_lower}) such that
\begin{equation}\label{eqn:lbound}
\E[n_iF_{\R_2}(S_i)]\geq n_i\lb(S_i)
\end{equation}
should hold with high probability. Intuitively, Equation~\eqref{eqn:lbound} implies that $n_i\lb(S_i)$ gives a sufficiently accurate lower bound on $\E[I_{G_i}(S_i)]$. After that, \EP checks whether
\begin{equation}\label{eqn:stopcondition}
\lb(S_i)/\ub(S_i^o)\geq \rho_b(1-\varepsilon_i^\prime)
\end{equation}
holds in Line~\ref{alg:magim_stop}. Intuitively, if Equation~\eqref{eqn:stopcondition} is true, then we know that $\E[n_iF_{\R_2}(S_i)]$ is no smaller than $\rho_b(1-\varepsilon_i^\prime)n_iF_{\R_1}(S_i^o)$ and it suffices to conclude our result by taking the expectation. Specifically, combining Equations~\eqref{eqn:ubound}--\eqref{eqn:stopcondition} and taking the expectation with respect to the randomness of the algorithm, we can derive a quantitative relationship between $\EO[\E [I_{G_i}(S_i)]]$ and ${\OPT}_b(G_i)$ when a seed set $S_i$ is returned:
\begin{align*}
&\EO[\E [I_{G_i}(S_i)]] \\
&\geq \EO[n_i\lb(S_i)]-\delta_i\rho_b(1-\varepsilon_i^\prime)n_i\\
&\geq \EO[\rho_b(1-\varepsilon_i^\prime)n_i\ub(S_i^o)]-\delta_i\rho_b(1-\varepsilon_i^\prime)n_i\\
&\geq \EO[\rho_b(1-\varepsilon_i^\prime)n_i F_{\R_1}(S_i^o)]-\delta_i\rho_b(1-\varepsilon_i^\prime)n_i\\
&=\rho_b(1-\varepsilon_i^\prime){\OPT}_b(G_i)-\delta_i\rho_b(1-\varepsilon_i^\prime)n_i\\
&\geq\rho_b(1-\varepsilon_i^\prime)(1-\delta_in_i/b){\OPT}_b(G_i),
\end{align*}
where the first inequality is due to the fact that \eqref{eqn:lbound} holds with high probability and $\delta_i\rho_b(1-\varepsilon_i^\prime)n_i$ is used to offset the failed scenario, and the equality is due to the martingale stopping theorem \cite{Mitzenmacher_Martingales_2005} (see details in Section~\ref{sec:exp-approx}). This proves the $\rho_b(1-\varepsilon_i)$ expected approximation ratio of $\EO[\E[I_{G_i}(S_i)]]$ as $\varepsilon_i=\varepsilon_i^\prime+(1-\varepsilon_i^\prime)\delta_in_i/b$.

It is easy to see that the expected approximation guarantee of \EP is better than those of vanilla IM algorithms, and thus instantiating \AG using \EP can lead to performance improvement for adaptive IM. Note that \EP does not provide the worst-case approximation guarantee with high probability, as against the state-of-the-art IM algorithms. The reason behind is that $n_i\ub(S_i^o)$ is likely to be smaller than ${\OPT}_b(G_i)$ though $n_i\ub(S_i^o)$ is an upper bound on $n_iF_{\R_1}(S_i^o)$ as shown in \eqref{eqn:ubound}.

\subsection{Theoretical Analysis of \EP}\label{sec:exp-approx}

Based on the discussions in Section~\ref{sec:batch-epic}, we show the details of theoretical analysis of \EP. We prove our main results for the expected approximation guarantee and the time complexity of \EP as follows.

\spara{Expected Approximation Guarantee} We establish the expected approximation guarantee of \EP in the following theorem.
\begin{theorem}\label{thm:batch_guarantee}
For any $G_i$, \EP returns a seed set $S_i$ satisfying 
\begin{equation}\label{eqn:exp-approx}
\EO[\E[I_{G_i}(S_i)]]\geq \rho_b(1-\varepsilon_i){\OPT}_b(G_i).
\end{equation}
\end{theorem}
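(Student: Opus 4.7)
My plan is to combine three ingredients established in Section~\ref{sec:batch-epic}: (a) the deterministic upper bound $\ub(S_i^o) \geq F_{\R_1}(S_i^o)$ produced by the greedy procedure inside \MC; (b) a high-probability lower bound $\E[I_{G_i}(S_i)] \geq n_i \lb(S_i)$ obtained from a Chernoff-style martingale tail bound applied to $\R_2$; and (c) the stopping rule $\lb(S_i)/\ub(S_i^o) \geq \rho_b(1-\varepsilon_i^\prime)$ enforced in Line~\ref{alg:magim_stop}. On the event that the $\R_2$-concentration succeeds in every iteration, chaining (a)--(c) yields the pointwise bound
\[
    \E[I_{G_i}(S_i)] \;\geq\; \rho_b(1-\varepsilon_i^\prime)\, n_i\, F_{\R_1}(S_i^o).
\]
Taking expectation over the algorithm's randomness $\omega$ and separately offsetting the failure event by the trivial bound $\E[I_{G_i}(\cdot)] \leq n_i$ (scaled by the failure probability $\delta_i$), then invoking $\EO[n_i F_{\R_1}(S_i^o)] = \OPT_b(G_i)$, should give the bound $\EO[\E[I_{G_i}(S_i)]] \geq \rho_b(1-\varepsilon_i^\prime)(1-\delta_i n_i/b)\OPT_b(G_i)$. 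The arithmetic identity $\varepsilon_i^\prime + (1-\varepsilon_i^\prime)\delta_i n_i/b = \varepsilon_i$ guaranteed by the settings in Lines~1--2 of \EP then delivers the claim~\eqref{eqn:exp-approx}.

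I would execute the steps in the following order. First, establish (a) directly: since $F_{\R_1}$ is submodular on $2^V$, the greedy upper bound formula~\eqref{eqn:upperbound} applied to any intermediate seed set $S$ is a valid upper bound on $F_{\R_1}(S_i^o)$, so the running minimum maintained by \MC preserves the bound. Second, derive (b) by choosing the martingale-concentration parameter $a_i = \ln(2 i_{\max}/\delta_i)$ in Line~\ref{alg:magim_maxiteration}: for any fixed seed set, $|\R_2|F_{\R_2}(\cdot)$ is a sum of i.i.d.\ Bernoullis with mean $|\R_2|\E[I_{G_i}(\cdot)]/n_i$, and the specific quadratic form of $\lb(S_i)$ in Line~\ref{alg:magim_lower} is calibrated so that the one-sided deviation fails with probability at most $\delta_i/(2 i_{\max})$. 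A union bound across the at most $i_{\max}$ iterations gives total failure probability at most $\delta_i/2$; a symmetric argument on $\R_1$ (only needed at the forced-termination iteration $i = i_{\max}$, where the stopping test may not fire) consumes another $\delta_i/2$.

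Third, combine the above with the stopping condition and take expectation. The subtle step is invoking the martingale stopping theorem to argue $\EO[n_i F_{\R_1}(S_i^o)] = \OPT_b(G_i)$: because $S_i^o$ is a deterministic set depending only on $G_i$, the indicators $\mathbf{1}[R_j \cap S_i^o \neq \emptyset]$ form an i.i.d.\ Bernoulli sequence with mean $\OPT_b(G_i)/n_i$, and $|\R_1|$ is a bounded stopping time (bounded by $2^{i_{\max}-1}\theta_0$) with respect to the filtration generated by the samples, justifying the identity. The final arithmetic uses $\rho_b(1-\varepsilon_i^\prime)(1-\delta_i n_i/b) \geq \rho_b(1 - \varepsilon_i)$ with the choices made in Lines~1--2.

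The main obstacle I expect is handling the interaction between the random stopping time and the concentration / expectation arguments. Concretely, one must (i) ensure the $\R_2$-side concentration holds \emph{simultaneously} across all iterations so that the stopping-time selected $S_i$ inherits its guarantee (this is why $a_i$ carries a $\log i_{\max}$ factor rather than $\log(1/\delta_i)$), and (ii) justify pulling the expectation over $|\R_1|$ through $F_{\R_1}(S_i^o)$ via the optional stopping theorem, taking care that $S_i^o$ is independent of $\R_1$ so that the Bernoulli increments remain a martingale difference sequence relative to the sampling filtration. Once these measurability considerations are in place, the chain of inequalities and the algebraic collapse to $\varepsilon_i$ are mechanical.
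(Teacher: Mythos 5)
Your proposal follows essentially the same route as the paper's proof: the deterministic bound $\ub(S_i^o)\geq F_{\R_1}(S_i^o)$, the $\R_2$-side concentration union-bounded over the $i_{\max}$ iterations (plus the separate $\R_1$-side argument at forced termination, each consuming $\delta_i/2$), chaining through the stopping rule, offsetting the $\delta_i$-probability failure event, and applying the martingale stopping theorem to get $\EO[n_iF_{\R_1}(S_i^o)]=\OPT_b(G_i)$ before the final substitution of $\varepsilon_i^\prime$. The key subtleties you flag --- the $\log i_{\max}$ factor in $a_i$ and the measurability/independence of $S_i^o$ from $\R_1$ for optional stopping --- are exactly the points the paper's proof handles, so the plan is sound and matches.
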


To prove Theorem~\ref{thm:batch_guarantee}, we first prove the following lemma.
\begin{lemma}\label{lemma:maxsamples}
	Let $\varepsilon_a\in(0,1)$, $\delta_a\in(0,1)$, and
	\begin{equation}\label{eqn:thetamax}
		\theta_{\max}=\frac{(2+2\varepsilon_a/3) n_i}{\varepsilon_a^2b}\big(\ln{\frac{1}{\delta_a}}+\ln{\tbinom{n_i}{b}}\big).
	\end{equation}
	If a set of random RR-sets $\R$ are generated such that $\abs{\R}\geq\theta_{\max}$, then with probability at least $1-\delta_a$, the greedy algorithm returns a solution $S_i$ satisfying
	\begin{equation}
	\E[I_{G_i}(S_i)]\geq \frac{\rho_b}{1+\varepsilon_a} n_i F_{\R}(S_i^o).
	\end{equation}
\end{lemma}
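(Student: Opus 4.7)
\textbf{Proof plan for Lemma~\ref{lemma:maxsamples}.} The high-level strategy is to combine the deterministic greedy guarantee $F_{\R}(S_i) \ge \rho_b F_{\R}(S_i^o)$ (Equation~\eqref{eqn:ubound2}) with a Chernoff concentration bound that says $F_{\R}(S_i)$ cannot exceed its expectation $\E[F_{\R}(S_i)] = \E[I_{G_i}(S_i)]/n_i$ by more than a $(1+\varepsilon_a)$ factor. Since $S_i$ is itself a random output that depends on $\R$, the concentration statement must hold uniformly over all candidate size-$b$ seed sets, which is why the bound $\theta_{\max}$ carries a $\ln \binom{n_i}{b}$ term.

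The key structural observation is that, because every seed node in $S$ is trivially activated, $I_{G_i}(S)\ge|S|=b$ deterministically, hence $\E[I_{G_i}(S)]\ge b$ and $\mu_S:=\abs{\R}\,\E[F_{\R}(S)]\ge \abs{\R} b/n_i$ for every size-$b$ seed set $S$. Since $\abs{\R}F_{\R}(S)$ is a binomial random variable with mean $\mu_S$ (each random RR-set is independently covered by $S$ with probability $\E[F_{\R}(S)]$ by Equation~\eqref{eqn:borgsequation11}), the standard multiplicative Chernoff upper tail yields
\begin{equation*}
\Pr\bigl[F_{\R}(S)\ge(1+\varepsilon_a)\E[F_{\R}(S)]\bigr]\le \exp\!\bigl(-\tfrac{\varepsilon_a^2\mu_S}{2+2\varepsilon_a/3}\bigr)\le \exp\!\bigl(-\tfrac{\varepsilon_a^2 \abs{\R} b}{(2+2\varepsilon_a/3)n_i}\bigr).
\end{equation*}

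Next I would apply a union bound over all $\binom{n_i}{b}$ size-$b$ subsets of $V_i$. Plugging in $\abs{\R}\ge \theta_{\max}=\tfrac{(2+2\varepsilon_a/3)n_i}{\varepsilon_a^2 b}\bigl(\ln(1/\delta_a)+\ln\binom{n_i}{b}\bigr)$ makes the per-set failure probability at most $\delta_a/\binom{n_i}{b}$, so with probability at least $1-\delta_a$ we have $F_{\R}(S)<(1+\varepsilon_a)\E[F_{\R}(S)]$ simultaneously for every size-$b$ set $S$; in particular this applies to the (random) greedy output $S_i$, giving
\begin{equation*}
\E[I_{G_i}(S_i)]=n_i\,\E[F_{\R}(S_i)]\ge\frac{n_i F_{\R}(S_i)}{1+\varepsilon_a}.
\end{equation*}

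Finally, I would chain this with the greedy guarantee $F_{\R}(S_i)\ge\rho_b F_{\R}(S_i^o)$ obtained in Equation~\eqref{eqn:ubound2} to conclude $\E[I_{G_i}(S_i)]\ge \tfrac{\rho_b}{1+\varepsilon_a}\,n_i F_{\R}(S_i^o)$, as desired. I do not anticipate any serious obstacle: the only subtle point is justifying the uniform lower bound $\mu_S\ge \abs{\R} b/n_i$ across all size-$b$ seed sets, without which the Chernoff exponent would degrade for sets with small expected spread and the $\ln\binom{n_i}{b}$ union-bound term could not be absorbed into $\theta_{\max}$ cleanly.
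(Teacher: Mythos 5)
Your proposal is correct and follows essentially the same route as the paper's proof: a Chernoff upper-tail bound for each fixed size-$b$ set, a union bound over all $\binom{n_i}{b}$ candidates so the guarantee transfers to the random greedy output $S_i$, and then chaining with $F_{\R}(S_i)\geq\rho_b F_{\R}(S_i^o)$. In fact you make explicit the one step the paper leaves implicit, namely that $\E[n_i F_{\R}(S)]=\E[I_{G_i}(S)]\geq b$ for every size-$b$ seed set, which is exactly what is needed for the Chernoff exponent to absorb the $\ln\binom{n_i}{b}$ term given $\abs{\R}\geq\theta_{\max}$.
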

\begin{proof}[Lemma~\ref{lemma:maxsamples}]
	According to Chernoff-like martingale concentration bound \cite{Tang_IMM_2015}, for any $S$ that is independent of $\R$, we have
	\begin{align*}
		&\Pr\big[n_i F_{\R}(S)>(1+\varepsilon_a)\E[n_i F_{\R}(S)]\big]\\
		&=\exp\left(-\frac{\varepsilon_a^2}{2+2\varepsilon_a/3}\cdot\frac{\abs{\R}\cdot\E[n_i F_{\R}(S)]}{n_i}\right)\\
		&\leq \delta_a/\tbinom{n_i}{b}.
	\end{align*}
	By the union bound, $S_i$ returned by the greedy algorithm satisfies
	\begin{equation}\label{eqn:greedysolution1}
		\Pr\big[n_i F_{\R}(S_i)>(1+\varepsilon_a)\E[n_i F_{\R}(S_i)]\big]\leq \delta_a.
	\end{equation}
	On the other hand, due to the submodularity of $F_{\R}(\cdot)$, we have
	\begin{equation}\label{eqn:greedysolution2}
	 F_{\R}(S_i)\geq \rho_b F_{\R}(S_i^o).
	\end{equation}
	Combining \eqref{eqn:greedysolution1} and \eqref{eqn:greedysolution2} completes the proof. \qed
\end{proof}

Next, we use the following martingale stopping theorem \cite{Mitzenmacher_Martingales_2005} to prove Theorem~\ref{thm:batch_guarantee}.

\begin{definition}[Stopping Time~\cite{Mitzenmacher_Martingales_2005}]
A nonnegative, integer-valued random variable $T$ is a stopping time for the sequence $\{Z_n, n\geq 0\}$ if the event $T=n$ depends only on the value of the random variables $Z_0,Z_1,\dotsc,Z_n$.
\end{definition}

\begin{lemma}[Martingale Stopping Theorem~\cite{Mitzenmacher_Martingales_2005}]\label{lemma:OptionalStoppingTheorem}
	If $Z_0,Z_1,\dots$ is a martingale with respect to $Y_1,Y_2,\dots$ and if $T$ is a stopping time for $Y_1,Y_2,\dots$, then
	\begin{equation}
	\E[Z_T]=\E[Z_0]
	\end{equation}
	whenever one of the following holds:
	\begin{itemize}
		\item the $Z_i$ are bounded, so there is a constant $c$ such that, for all $i$, $\abs{Z_i}\leq c$;
		\item $T$ is bounded;
		\item $\E[T]<\infty$, and there is a constant $c$ such that $\E[\abs{Z_{i+1}-Z_{i}}\mid Y_1,\dots,Y_i]<c$.
	\end{itemize}
\end{lemma}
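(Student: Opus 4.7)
The plan is to establish the bounded-$T$ case directly via a telescoping decomposition, then reduce the other two cases to it by truncation and dominated convergence. For the bounded-$T$ case, suppose $T \le N$ almost surely. I would write
\begin{equation*}
Z_T - Z_0 = \sum_{i=1}^{N}(Z_i - Z_{i-1})\mathbf{1}_{\{T \ge i\}},
\end{equation*}
and note that $\{T \ge i\} = \{T \le i-1\}^c$ is determined by $Y_1,\dots,Y_{i-1}$ because $T$ is a stopping time. Pulling the indicator outside the conditional expectation and invoking the martingale property $\E[Z_i - Z_{i-1} \mid Y_1,\dots,Y_{i-1}] = 0$, each term in the sum has zero expectation, giving $\E[Z_T] = \E[Z_0]$.

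For the other two cases (where $T$ may be unbounded but is still finite a.s., which is implicit for $Z_T$ to be well-defined), I would introduce the truncated stopping time $T_n := \min(T, n)$, which is a stopping time bounded by $n$, so the base case yields $\E[Z_{T_n}] = \E[Z_0]$ for every $n$. The task then reduces to passing $n \to \infty$. Since $T < \infty$ a.s., $Z_{T_n} \to Z_T$ pointwise. In the bounded-$Z_i$ case, $\abs{Z_{T_n}} \le c$ uniformly, so dominated convergence applies with the constant $c$ as the dominator, giving $\E[Z_{T_n}] \to \E[Z_T]$. For the third case, telescoping yields $\abs{Z_T - Z_{T_n}} \le W$ where $W := \sum_{i=1}^{T} \abs{Z_i - Z_{i-1}}$, and the core computation is
\begin{align*}
\E[W] &= \sum_{i=1}^{\infty}\E\bigl[\abs{Z_i-Z_{i-1}}\mathbf{1}_{\{T\ge i\}}\bigr] \\
      &= \sum_{i=1}^{\infty}\E\bigl[\mathbf{1}_{\{T\ge i\}}\,\E[\abs{Z_i-Z_{i-1}}\mid Y_1,\dots,Y_{i-1}]\bigr] \\
      &\le c\sum_{i=1}^{\infty}\Pr[T \ge i] = c\cdot\E[T] < \infty,
\end{align*}
again using that $\{T\ge i\}$ is $\sigma(Y_1,\dots,Y_{i-1})$-measurable so that the indicator can be pulled outside the conditional expectation. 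With $W$ integrable and $\abs{Z_T - Z_{T_n}} \to 0$ pointwise, dominated convergence yields $\E\abs{Z_T - Z_{T_n}} \to 0$ and therefore $\E[Z_T] = \lim_n \E[Z_{T_n}] = \E[Z_0]$.

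The main obstacle is the third case, where constructing an integrable dominating random variable is necessary to justify the limit exchange when both $T$ and the $Z_i$ may be unbounded. The measurability observation $\{T \ge i\}\in\sigma(Y_1,\dots,Y_{i-1})$ is the common thread running through every case: in the base case it kills the martingale increments on these events, and in the third case it converts the per-step bound on conditional absolute differences into a telescoping bound involving $\E[T]$. Recognizing this adaptedness property of stopping times is the conceptual crux of the proof.
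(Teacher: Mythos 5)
This lemma is quoted verbatim from Mitzenmacher and Upfal~\cite{Mitzenmacher_Martingales_2005} and the paper supplies no proof of it, so there is no in-paper argument to compare against; judged on its own, your proof is correct and is the standard one. The telescoping identity $Z_T-Z_0=\sum_{i=1}^{N}(Z_i-Z_{i-1})\mathbf{1}_{\{T\ge i\}}$ combined with the $\sigma(Y_1,\dots,Y_{i-1})$-measurability of $\{T\ge i\}$ disposes of the bounded-$T$ case, and the truncation $T_n=\min(T,n)$ followed by dominated convergence --- with the constant $c$ as dominator in the bounded-$Z_i$ case, and with the integrable dominator $W=\sum_{i=1}^{T}\abs{Z_i-Z_{i-1}}$ satisfying $\E[W]\le c\,\E[T]<\infty$ in the third case --- handles the rest. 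The only caveat is the one you already note yourself: the first case requires $T<\infty$ almost surely as an implicit hypothesis so that $Z_T$ is well defined, which is how the textbook statement is intended to be read (in the third case this follows from $\E[T]<\infty$).
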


\begin{proof}[Theorem~\ref{thm:batch_guarantee}] 
	Let $\mathcal{E}_1$ and $\mathcal{E}_2$ denote the following events:
	\begin{align*}
	&\mathcal{E}_1(S_i)\colon \E[I_{G_i}(S_i)]\geq \rho_b(1-\varepsilon_i^\prime) n_i F_{\R_1}(S_i^o),\\
	&\mathcal{E}_2(S_i)\colon \E[I_{G_i}(S_i)]\geq n_i\lb(S_i).
	\end{align*}
	Let $T$ be the stopping time (i.e., the iteration in which \EP returns $S_i$), which is bounded by $i_{\max}$. Let $\varepsilon_a={\varepsilon_i^\prime}/{(1-\varepsilon_i^\prime)}$ and $\delta_a=\delta_i/2$ for $\theta_{\max}$ defined in \eqref{eqn:thetamax}. When $T=i_{\max}$, it is easy to verify that $\abs{\R_1}=2^{i_{\max}-1}\theta_0\geq \theta_{\max}$. Hence, by Lemma~\ref{lemma:maxsamples}, we have
	\begin{equation}\label{eqn:lastiteration}
	\Pr[(T=i_{\max})\wedge\neg\mathcal{E}_1(S_i)]\leq \delta_i/2.
	\end{equation}
	On the other hand, when $T=t<i_{\max}$, let $\mathcal{S}_t$ be the set of possible node sets selected by \EP (but not necessarily returned), where each $S\in \mathcal{S}_t$ has a probability $\Pr[S]$ such that $\sum_{S\in\mathcal{S}_t}\Pr[S]=1$. Then, we have
	\begin{align*}
	&\Pr[(T=t)\wedge\neg\mathcal{E}_1(S_i)]\\
	&\leq\Pr[(T=t)\wedge\neg\mathcal{E}_2(S_i)] \\
	&\leq\sum_{S\in\mathcal{S}_t}\Pr[(T=t)\wedge\neg\mathcal{E}_2(S)]\cdot \Pr[S_i=S]\\
	&\leq\sum_{S\in\mathcal{S}_t}\delta_i/(2i_{\max})\cdot \Pr[S_i=S]\\
	&\leq \delta_i/(2i_{\max}),
	\end{align*}
	where the first inequality is because if $\mathcal{E}_2$ happens then $\mathcal{E}_1$ must also happen, the second inequality is by the fact that only a subset of the node sets in $\mathcal{S}_t$ are returned, and the third inequality is obtained from \cite{Tang_OPIM_2018} for any node set $S$ that is independent of $\R_2$. As a consequence, by a union bound,
	\begin{equation}\label{eqn:beforelastiteration}
	\Pr\bigg[\bigvee_{t=1}^{i_{\max}-1} \Big((T=t)\wedge\neg\mathcal{E}_1(S_i)\Big)\bigg]\leq \delta_i/2.
	\end{equation}
	Combining Equations~\eqref{eqn:lastiteration} and \eqref{eqn:beforelastiteration} shows that the event $\mathcal{E}_1(S_i)$ does not happen with probability at most $\delta_i$ no matter when the algorithm stops. Therefore, \EP returns a random solution $S_i$ satisfying $\E [I_{G_i}(S_i)]\geq \rho_b(1-\varepsilon_i^\prime)n_i F_{\R_1}(S_i^o)$ with at least $1-\delta_i$ probability. Thus, adding an additive factor of $\delta_i\rho_b(1-\varepsilon_i^\prime)n_i$ ensures that 
	\begin{equation}\label{eqn:allept}
	\begin{split}
	&\EO[\E[I_{G_i}(S_i)]]+\delta_i\rho_b(1-\varepsilon_i^\prime)n_i\\
	&\geq \rho_b(1-\varepsilon_i^\prime) \EO[n_i F_{\R_1}(S_i^o)].
	\end{split}
	\end{equation}
	Subsequently, the main challenge lies in how we connect $n_i F_{\R_1}(S_i^o)$ with ${\OPT}_b(G_i)$. Note that $F_{\R_1}(S_i^o)$ is also a random variable with respect to $\R_1$. At the first glance, it seems that this analysis is difficult as the stopping time is a random variable. However, fortunately, by utilizing the martingale stopping theorem \cite{Mitzenmacher_Martingales_2005}, we can bridge the gap between $n_i F_{\R_1}(S_i^o)$ and ${\OPT}_b(G_i)$ as follows.
	
	Note that $T$ is bounded within $i_{\max}$ so that it satisfies the condition of martingale stopping theorem as shown in Lemma~\ref{lemma:OptionalStoppingTheorem}. Thus, we have
	\begin{equation}\label{eqn:ost}
		\EO[n_i F_{\R_1}(S_i^o)]-{\OPT}_b(G_i)=0.
	\end{equation}
	Combining Equations~\eqref{eqn:allept} and \eqref{eqn:ost} yields:
	\begin{equation}
		\EO[\E[I_{G_i}(S_i)]]\geq \rho_b(1-\varepsilon_i^\prime)(1-\delta_in_i/b){\OPT}_b(G_i).
	\end{equation}
	By replacing $\varepsilon_i^\prime=(b\varepsilon_i-\delta_in_i)/(b-\delta_in_i)$, we can immediately acquire Equation~\eqref{eqn:exp-approx}, by which we complete the proof of Theorem~\ref{thm:batch_guarantee}.\qed
\end{proof}	

\spara{Time Complexity} The expected time complexity of \EP is given in the following theorem.
\begin{theorem}\label{thm:time-complexity}
For any $G_i$, the expected time complexity of \EP is ${O((b\log n_i+\log\frac{1}{\varepsilon_i})(m_i+n_i)/\varepsilon_i^2)}$, where $n_i$ and $m_i$ are the number of nodes and edges of $G_i$, respectively.
\end{theorem}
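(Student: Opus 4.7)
The plan is to combine the doubling structure of \EP with the Borgs--Brautbar--Chayes--Lucier cost identity for reverse-reachable sampling~\cite{Borgs_RIS_2014}. Let $\Lambda_i$ denote the expected time to generate one random RR-set in $G_i$; the standard identity states $\Lambda_i = O\!\left(m_i \cdot \OPT_1(G_i)/n_i\right)$, where $\OPT_1(G_i)$ is the maximum expected spread of any single node in $G_i$. Since singletons are admissible size-$b$ seed sets, $\OPT_1(G_i)\le \OPT_b(G_i)$, so $\Lambda_i \cdot n_i/\OPT_b(G_i) = O(m_i)$; this cancellation will carry the whole bound. Inside each iteration the dominant work is (i) generating new RR-sets to double $|\R_1|$ and $|\R_2|$ and (ii) a single call to \MC, whose running time is linear in $\sum_{R\in\R_1}|R|$ and hence of the same order as the generation cost. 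Together with $O(n_i)$ bookkeeping per iteration (which accumulates to $O(n_i i_{\max}) = O(n_i\log(n_i/\varepsilon_i))$), the total expected cost is dominated by $|\R_{\text{final}}|\cdot\Lambda_i$ plus an additive $O((n_i+m_i)\log(n_i/\varepsilon_i)/\varepsilon_i^2)$ slack.

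Next I would identify a sample-size threshold
\[
\theta^{\ast} \;=\; O\!\left(\frac{n_i\,(b\log n_i + \log(1/\varepsilon_i))}{\varepsilon_i^{2}\,\OPT_b(G_i)}\right)
\]
beyond which the stopping test $\lb(S_i)/\ub(S_i^o)\ge \rho_b(1-\varepsilon_i')$ fires with constant probability. Two ingredients are needed. (a) Once $|\R_1|\ge \theta^{\ast}$, a Chernoff--Hoeffding argument (exactly as in the proof of Lemma~\ref{lemma:maxsamples}, with a $\binom{n_i}{b}$ union bound over candidate size-$b$ sets) yields $\ub(S_i^o)\le (1+O(\varepsilon_i))\OPT_b(G_i)/n_i$ with high probability, and the greedy guarantee produces $\E[I_{G_i}(S_i)]\ge \rho_b(1-O(\varepsilon_i))\OPT_b(G_i)$. (b) The Chernoff-like martingale bound that the definition of $\lb(S_i)$ in Line~\ref{alg:magim_lower} is engineered to satisfy gives $\lb(S_i)\ge (1-O(\varepsilon_i))\E[I_{G_i}(S_i)]/n_i$ with high probability. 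Dividing the two estimates exceeds $\rho_b(1-\varepsilon_i')$ with constant probability, so \EP halts.

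Finally, the doubling rule means that past $\theta^{\ast}$ the algorithm needs at most a geometric number of additional rounds before stopping; summing over the doubling phases contributes a geometric series of the same order, so $\E[|\R_{\text{final}}|] = O(\theta^{\ast})$. Multiplying by $\Lambda_i$ and invoking the cancellation identity gives
\[
\E[\text{cost}] \;=\; O\!\left(\theta^{\ast}\cdot \Lambda_i\right) \;=\; O\!\left(\frac{(b\log n_i + \log(1/\varepsilon_i))\,(m_i+n_i)}{\varepsilon_i^{2}}\right),
\]
as claimed, where the $+n_i$ absorbs the bookkeeping slack.

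The hardest part will be step two: carefully verifying that once $|\R_1|\ge \theta^{\ast}$, the \emph{specific} upper bound $\ub(S_i^o)$ (and not merely $F_{\R_1}(S_i^o)$) concentrates within a $(1+O(\varepsilon_i))$ factor of $\OPT_b(G_i)/n_i$, because $\ub(S_i^o)$ is constructed from the top-$b$ marginals with respect to a partial greedy set and is minimized along the run. Controlling the overshoot of those marginals uniformly over all partial sets produced during \MC is the one place where a straightforward union bound is essential, and it is also the step that determines the exact form of $\theta^{\ast}$ and hence the $(b\log n_i + \log(1/\varepsilon_i))$ factor in the final bound.
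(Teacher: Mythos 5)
The paper proves this theorem by a two-line domination argument, not by a from-scratch sample-complexity analysis: it observes that for any fixed $\R_1$ and $\R_2$, the stopping condition of \EP is implied by that of \OPIMC (the upper bound $\ub(S_i^o)$ used by \EP need only dominate the realized fraction $F_{\R_1}(S_i^o)$ rather than a high-probability bound on its expectation, so the ratio test in Line~\ref{alg:magim_stop} is easier to pass), hence \EP terminates no later than \OPIMC on the same RR-set sequences and inherits \OPIMC's known expected-time bound $O\big(\tfrac{m_i+n_i}{{\varepsilon_i^\prime}^2}(b\log n_i+\log\tfrac{1}{\delta_i})\big)$ from \cite{Tang_OPIM_2018}; substituting $\delta_i=0.01\varepsilon_i b/n_i$ and $\varepsilon_i^\prime=\Theta(\varepsilon_i)$ gives the claim. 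Your proposal instead re-derives the \OPIMC-style analysis (RR-set cost identity of \cite{Borgs_RIS_2014}, a threshold $\theta^\ast$, and a doubling argument). That is a legitimate alternative route in principle, but as written it has a genuine quantitative gap and leaves its hardest step unestablished.

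The gap is in the final summation. From ``the stopping test fires with constant probability once $|\R_1|\ge\theta^\ast$'' you cannot conclude $\E[|\R_{\text{final}}|]=O(\theta^\ast)$. If the per-phase probability of \emph{not} stopping is only bounded by a constant $q$, then $\E[|\R_{\text{final}}|]\le \theta^\ast\sum_{j\ge 0}2^{j}q^{j}$, which diverges (or picks up an extra $i_{\max}=O(\log(n_i/\varepsilon_i))$ factor) unless $q<1/2$; the sample size doubles every phase while your survival probability only shrinks by a constant factor. What is actually needed is that once $|\R_1|=2^{j}\theta^\ast$ the failure probability is at most roughly $\exp(-\Omega(2^{j}))$ — i.e., the Chernoff exponent $\exp(-\Omega(|\R_1|\varepsilon_i^2\,\OPT_b(G_i)/n_i))$ must be invoked quantitatively, not just ``with high probability.'' Relatedly, the step you flag as hardest — showing that $\ub(S_i^o)$, and not merely $F_{\R_1}(S_i^o)$, is at most $(1+O(\varepsilon_i))\OPT_b(G_i)/n_i$ past the threshold — is precisely the content of the \OPIMC analysis (the standard route is $\ub(S_i^o)\le F_{\R_1}(S_i)/\rho_b$ via \eqref{eqn:ubound2} combined with a Lemma~\ref{lemma:maxsamples}-style union bound over size-$b$ sets), and it cannot be deferred if the proof is to stand on its own. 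The paper sidesteps all of this by reducing to \OPIMC; if you prefer the self-contained route, you must supply both the exponential decay of the per-phase failure probability and the concentration of $\ub(S_i^o)$ explicitly.
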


\begin{proof}[Theorem~\ref{thm:time-complexity}]
When we set the parameters $(\varepsilon_i^\prime,\delta_i)$ for \OPIMC \cite{Tang_OPIM_2018}, the expected time complexity of \OPIMC is 
\begin{align*}
	&O\big(\frac{m_i+n_i}{{\varepsilon_i^\prime}^2}(b\log n_i+\log\frac{1}{\delta_i})\big) \\
	&=O\big(\frac{m_i+n_i}{\varepsilon_i^2}(b\log n_i+\log\frac{1}{\varepsilon_i})\big).
\end{align*}
On the other hand, for any given $\R_1$ and $\R_2$, if \OPIMC stops, then \EP must also stop. This implies that \EP always finishes earlier than \OPIMC, which completes the proof.\qed
\end{proof}

\subsection{\AG Instantiated with \EP}\label{sec:exp-wor-appro}

In the following, we derive the approximation guarantees and time complexity of \AG instantiated using \EP.

Theorem~\ref{thm:batch_guarantee} indicates that \AG instantiated using \EP with parameter $\varepsilon_i$ achieves an expected approximation guarantee of at least $\rho_b(1-\varepsilon_i)$ in the $i$-th batch of seed selection. Immediately following by Theorem~\ref{thm:frame-greedy-approx} and Theorem~\ref{thm:time-complexity}, we have the following theorem.
\begin{theorem}\label{thm:expected-approx}
Suppose that we instantiate \AG using \EP with parameter $\varepsilon_i$ for the $i$-th batch of seed selection, then \AG achieves the expected approximation ratio of $1-\e^{\rho_b(\varepsilon-1)}$ where $\varepsilon=\frac{1}{r}\sum_{i=1}^{r}\varepsilon_i$, and takes an expected time complexity of $O(\sum_{i=1}^r(b\log n+\log\frac{1}{\varepsilon_i})(m+n)/\varepsilon_i^2)$.
\end{theorem}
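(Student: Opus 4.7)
The plan is to derive Theorem~\ref{thm:expected-approx} by stitching together the three preceding results (Theorem~\ref{thm:batch_guarantee}, Theorem~\ref{thm:time-complexity}, and Theorem~\ref{thm:frame-greedy-approx}), since the present statement is the natural ``instantiation'' corollary rather than a fundamentally new claim. First, I would verify that plugging \EP into Line~\ref{line:seed-identify} of Algorithm~\ref{alg:frame-greedy} satisfies the premise of the framework. Concretely, for the $i$-th batch, \AG presents \EP with the current residual graph $G_i$ (which is fully determined by the partial realization $\psi_{i-1}$ observed so far) and asks for a seed set whose expected spread is at least $\rho_b(1-\varepsilon_i)\cdot\OPT_b(G_i)$, where the expectation is over the internal randomness $\omega$ of the procedure. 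Theorem~\ref{thm:batch_guarantee} guarantees exactly this inequality for \EP on any input graph, and in particular uniformly over all possible residual graphs $G_i$, so the premise of the framework is met for every $\psi_{i-1}$.

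Second, with that premise verified, I invoke Theorem~\ref{thm:frame-greedy-approx} directly: since \AG instantiated with \EP returns an expected $\rho_b(1-\varepsilon_i)$-approximate solution in the $i$-th batch, the overall policy enjoys the expected approximation guarantee $1-\e^{\rho_b(\varepsilon-1)}$ with $\varepsilon=\tfrac{1}{r}\sum_{i=1}^{r}\varepsilon_i$. This is a one-line appeal and requires no additional argument beyond observing that the ``expected $\alpha_i$-approximate greedy policy'' abstraction from Section~\ref{sec:AGP} is precisely realized by \EP (with $\alpha_i=\rho_b(1-\varepsilon_i)$).

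For the running-time bound, I would apply Theorem~\ref{thm:time-complexity} to each call of \EP on $G_i$, which yields expected cost $O\bigl((b\log n_i+\log\tfrac{1}{\varepsilon_i})(m_i+n_i)/\varepsilon_i^2\bigr)$. Since every residual graph satisfies $n_i\leq n$ and $m_i\leq m$ (nodes and edges are only removed as the procedure progresses), replacing $n_i,m_i$ by their bounds and summing over the $r$ batches gives the stated total complexity $O\bigl(\sum_{i=1}^{r}(b\log n+\log\tfrac{1}{\varepsilon_i})(m+n)/\varepsilon_i^2\bigr)$.

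The only subtlety worth addressing explicitly is that the per-batch expectation in Theorem~\ref{thm:batch_guarantee} is taken over the algorithm's internal randomness \emph{for a fixed $G_i$}, whereas the outer expectation in Theorem~\ref{thm:frame-greedy-approx} is taken over the entire randomized policy (including previous residual graphs, which are themselves random). This is harmless because the per-batch guarantee holds pointwise in $G_i$, hence conditionally on $\psi_{i-1}$, which is exactly the hypothesis \eqref{eqn:greedy_policy} needed by the framework; nothing more than this conditional-expectation argument is required, so no genuine obstacle arises.
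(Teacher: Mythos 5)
Your proposal is correct and follows essentially the same route as the paper: the authors likewise obtain this theorem by observing that Theorem~\ref{thm:batch_guarantee} supplies the per-batch expected $\rho_b(1-\varepsilon_i)$-approximation premise (pointwise in $G_i$, hence for every partial realization), then invoking Theorem~\ref{thm:frame-greedy-approx} for the overall ratio and Theorem~\ref{thm:time-complexity} with $n_i\leq n$, $m_i\leq m$ summed over the $r$ batches for the running time. Your explicit remark on the conditional-expectation subtlety is a welcome elaboration of a step the paper leaves implicit, but it does not change the argument.
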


To achieve the expected approximation ratio of $1-\e^{\rho_b(\varepsilon-1)}$, instantiating \AG using \EP takes shorter running time compared with that of using the naive expected approximation guarantee of the existing IM algorithms. As discussed in Section~\ref{sec:batch-epic}, the intuition behind is that \EP avoids the additional estimation error on $\OPT_b(G_i)$ which is considered by all the existing IM algorithms.

In addition, Theorem~\ref{thm:frame-greedy-approx} indicates that \AG instantiated using \EP with parameter $\varepsilon_i$ achieves the worst-case approximation ratio of $1-\e^{\rho_b(\varepsilon^\prime-1)}$ with a probability of at least $1-\delta$, where $\varepsilon^\prime=\frac{1}{r} \sum_{i=1}^r \varepsilon_i + \sqrt{1/(2r)\cdot\ln (1/\delta)}$. Therefore, to achieve a predefined worst-case approximation ratio of $1-\e^{\rho_b(\varepsilon-1)}$ with a probability of at least $1-\delta$, we may decrease the parameter $\varepsilon_i$ in \EP by an additive factor of $\sqrt{1/(2r)\cdot\ln (1/\delta)}$ for every $i$.
\begin{theorem}\label{thm:worst-approx}
	Suppose that we instantiate \AG using \EP with the parameters $\varepsilon^\prime_i=\varepsilon_i-\sqrt{1/(2r)\cdot\ln (1/\delta)}$ in each batch where $\delta\in (0,1)$. Then, \AG achieves the approximation ratio $1-\e^{\rho_b(\varepsilon-1)}$ with a probability of at least $1-\delta$ where $\varepsilon=\frac{1}{r}\sum_{i=1}^{r}\varepsilon_i$, and takes an expected time complexity of $O(\sum_{i=1}^r(b\log n+\log\frac{1}{\varepsilon_i^\prime})(m+n)/{\varepsilon_i^\prime}^2)$.
\end{theorem}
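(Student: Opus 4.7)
The plan is to derive Theorem~\ref{thm:worst-approx} by combining the per-batch expected approximation guarantee of \EP (Theorem~\ref{thm:batch_guarantee}) with the worst-case approximation bound for \AG (Theorem~\ref{thm:frame-greedy-approx}), and finally to read off the running time from Theorem~\ref{thm:time-complexity}. The choice of $\varepsilon_i'$ is engineered precisely so that the additive Azuma slack of $\sqrt{1/(2r)\cdot\ln(1/\delta)}$ appearing in the worst-case bound is absorbed, leaving the user-specified target $\varepsilon=\frac{1}{r}\sum_i\varepsilon_i$.

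First, I would invoke Theorem~\ref{thm:batch_guarantee} with parameter $\varepsilon_i'$ to conclude that, when \AG is instantiated with \EP at the $i$-th batch, the returned seed set $S_i$ on the residual graph $G_i$ satisfies
\begin{equation*}
\E_\omega[\E[I_{G_i}(S_i)]] \;\geq\; \rho_b(1-\varepsilon_i')\,\OPT_b(G_i),
\end{equation*}
i.e.\ \EP produces an expected $\rho_b(1-\varepsilon_i')$-approximate solution in batch $i$. This is exactly the hypothesis that Theorem~\ref{thm:frame-greedy-approx} requires on a per-batch basis.

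Second, I would apply the worst-case half of Theorem~\ref{thm:frame-greedy-approx}: with probability at least $1-\delta$, \AG achieves the worst-case approximation ratio $1-\e^{\rho_b(\varepsilon^{\prime\prime}-1)}$, where
\begin{equation*}
\varepsilon^{\prime\prime} \;=\; \tfrac{1}{r}\sum_{i=1}^r \varepsilon_i' \;+\; \sqrt{\tfrac{1}{2r}\ln\tfrac{1}{\delta}}.
\end{equation*}
Substituting $\varepsilon_i'=\varepsilon_i-\sqrt{1/(2r)\cdot\ln(1/\delta)}$ makes the two $\sqrt{\cdot}$ terms cancel, so $\varepsilon^{\prime\prime}=\frac{1}{r}\sum_{i=1}^r\varepsilon_i=\varepsilon$, which gives the claimed approximation ratio $1-\e^{\rho_b(\varepsilon-1)}$ with probability at least $1-\delta$.

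Finally, for the running time, Theorem~\ref{thm:time-complexity} states that \EP on $G_i$ with parameter $\varepsilon_i'$ runs in expected time $O((b\log n_i+\log(1/\varepsilon_i'))(m_i+n_i)/{\varepsilon_i'}^2)$. Bounding $n_i\le n$ and $m_i\le m$ and summing over the $r$ batches yields the stated $O(\sum_{i=1}^r(b\log n+\log(1/\varepsilon_i'))(m+n)/{\varepsilon_i'}^2)$ expected running time. There is no serious obstacle here: the result is essentially a compositional corollary of Theorems~\ref{thm:batch_guarantee}, \ref{thm:frame-greedy-approx}, and \ref{thm:time-complexity}. The only subtlety worth flagging is the implicit requirement $\varepsilon_i'>0$ (and indeed $\varepsilon_i'\in(0,1)$) for \EP to make sense, which forces $\varepsilon_i>\sqrt{1/(2r)\cdot\ln(1/\delta)}$; this consistency condition is exactly the regime in which the worst-case bound of Theorem~\ref{thm:frame-greedy-approx} is informative, so the parameter choice is self-consistent.
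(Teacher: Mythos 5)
Your proposal is correct and matches the paper's own derivation: the paper likewise obtains Theorem~\ref{thm:worst-approx} by feeding \EP's per-batch expected guarantee (Theorem~\ref{thm:batch_guarantee}) with the shifted parameters $\varepsilon_i'$ into the worst-case half of Theorem~\ref{thm:frame-greedy-approx} so that the additive $\sqrt{1/(2r)\cdot\ln(1/\delta)}$ slack cancels, and reads the running time off Theorem~\ref{thm:time-complexity}. Your remark that $\varepsilon_i'>0$ forces $r>\frac{\ln(1/\delta)}{2\varepsilon_i^2}$ is exactly the caveat the paper states immediately after the theorem.
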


Note that Theorem~\ref{thm:worst-approx} requires that $\varepsilon^\prime_i=\varepsilon_i-\sqrt{1/(2r)\cdot\ln (1/\delta)}>0$. This implies that only when the number of batches $r$ is sufficiently large, i.e., $r>\frac{\ln (1/\delta)}{2\varepsilon_i^2}$, there is a valid instantiation of \AG to achieve a predefined worst-case approximation guarantee of $1-\e^{\rho_b(\varepsilon-1)}$ with probability at least $1-\delta$.

\section{Misclaims in Previous Work~\cite{Sun_MRIM_2018,Han_AIM_2018}}\label{sec:revisited}

In this section, we revisit two of the latest work proposed to address the adaptive IM problem, \ie our preliminary work~\cite{Han_AIM_2018} and Sun~\etal's work~\cite{Sun_MRIM_2018}. We aim to discuss potential issues and clarify some common misunderstandings towards this problem. Specifically, their algorithms are claimed to return a worst-case approximation guarantee with high probability. However, there exist potential theoretical issues in the analysis of the failure probability, which is elaborated as follows.

In \cite{Han_AIM_2018} (Section~4.1), it is claimed that the overall failure probability of the $i$-th batch satisfies \[ \Pr\Big[\varepsilon_i < \sum_{G_1, \ldots, G_i} ( \zeta_i \cdot \Pr[\zeta_i, G_1, \ldots, G_{i-1}] )\Big] \le \delta_i. \] Then, the failure probability of all $r$ batches is bounded by a union bound of $\sum_{i=1}^{r}\delta_i$.

Similarly, in~\cite{Sun_MRIM_2018} (Theorem 5.5) it is claimed that the proposed algorithm \adaimm achieves the worst-case approximation with $1-\frac{1}{n^l}$ probability where $l$ is a constant. They first prove that the seed set $S$ selected by \adaimm returns an approximation with at least $1-1/(n^l\cdot r)$ probability for each batch. Sun~\etal~\cite{Sun_MRIM_2018} thus claim that \adaimm achieves the approximation ratio with at least $1-1/n^l$ probability by union bound.

The theoretical guarantees of these two papers are based on Theorem A.10 in \cite{Golovin_adaptive_2011}. Through a careful examination of the proof of Theorem A.10 in \cite{Golovin_adaptive_2011}, we find that the essence is to bound the overall approximation guarantee for each batch, \ie~$X_i(\omega)\geq \rho_b(1-\varepsilon_i)$, where $X_i(\omega)$ represents the overall random approximation for the $i$-th batch of seed selection over all realizations, \ie
\begin{equation*}
X_i(\omega):=\frac{\E_{G_{i}}\big[\rho_b(1-\zeta(\omega,G_i))\cdot \OPT_b(G_i)\big]}{\E_{G_{i}}\big[\OPT_b(G_i)\big]},
\end{equation*}
Since $X_i(\omega)$ is a random variable that is likely to be smaller than $\rho_b(1-\varepsilon_i)$, these two papers \cite{Han_AIM_2018,Sun_MRIM_2018} attempt to bound the probability of $X_i(\omega)<\rho_b(1-\varepsilon_i)$ as
\begin{equation*}
\Pr[X_i(\omega)<\rho_b(1-\varepsilon_i)]\leq \delta_i.
\end{equation*}
However, as long as there exists one realization such that the seed set $S_i$ returned in the $i$-th batch does not meet the approximation of $\rho_b(1-\varepsilon_i)$, \ie~$\zeta(\omega,G_i)>\varepsilon_i$, it is possible that $X_i(\omega)< \rho_b(1-\varepsilon_i)$. On the other hand, there are exponential number $O(2^m)$ of realizations, where $m$ is the number of edges in $G$. Thus, although it holds that $\Pr[\zeta(\omega,G_i)>\varepsilon_i]\leq \delta_i$ under a given $G_i$, the probability of $X_i(\omega)<\rho_b(1-\varepsilon_i)$ can be as large as $O(2^m\cdot \delta_i)$ by the union bound. Therefore, it is intricate to bound $X_i(\omega)$, which indicates that their claims on failure probability do not hold. In other words, Theorem 4 in \cite{Han_AIM_2018} and Theorem 5.5 in~\cite{Sun_MRIM_2018} are invalid.

In this paper, we rectify the theoretical analysis of the worst-case approximation guarantee utilizing Azuma-Hoeffding inequality \cite{Mitzenmacher_Martingales_2005}. In particular, instead of bounding the probability of each individual $X_i(\omega)<\rho_b(1-\varepsilon_i)$, we directly bound the probability of $X(\omega)<\rho_b(1-\varepsilon)$, where $X(\omega)=\frac{1}{r}\sum_{i=1}^{r}X_i(\omega)$ and $\varepsilon=\frac{1}{r}\sum_{i=1}^{r}\varepsilon_i$, as $X(\omega)$ should be concentrated to its expectation $\EO[X(\omega)]$ when $r$ is sizable and $\EO[X_i(\omega)]\geq \rho_b(1-\varepsilon_i)$ can be achieved by various non-adaptive IM algorithms. 
\section{Related Work} \label{sec:relatedwork}

\subsection{Comparison with Preliminary Version} \label{sec:newcontri}

Compared with our preliminary work~\cite{Han_AIM_2018}, the current paper includes two major new contributions as follows.

First, we propose a randomized greedy policy that can provide strong theoretical guarantees for the general adaptive stochastic maximization problem, which may be of independent interest. This new solution can be adopted in many other settings apart from adaptive IM, e.g., active learning~\cite{Cuong_active_2013}, active inspection~\cite{Hollinger_active_2013}, optimal information gathering \cite{Chen_sequential_2015}, which are special cases of adaptive stochastic maximization. In particular, our proposed policy imposes far few constraints than Golovin and Krause's policy \cite{Golovin_adaptive_2011}. In fact, in some applications (e.g., adaptive IM), the requirement of Golovin and Krause's policy \cite{Golovin_adaptive_2011} is too stringent to construct such a policy whereas our proposed policy is easy to obtain. Moreover, we show that our policy can achieve a worst-case approximation guarantee with high probability, which uncovers some potential gaps in two recent studies \cite{Han_AIM_2018,Sun_MRIM_2018} and shed light on the future work of the adaptive IM problem.

Second, we improve the efficiency of algorithm \EP (Section~\ref{sec:batch}). In~\cite{Han_AIM_2018}, \EP is designed based on an idea similar to that of the \ssa algorithm in~\cite{Nguyen_DSSA_2016}. However, \ssa is rather inefficient when the input error parameter $\varepsilon$ is small, as verified in~\cite{Tang_OPIM_2018}. Therefore, we redesigned \EP based on the state-of-the-art method \OPIMC~\cite{Tang_OPIM_2018}, which is far more efficient than \ssa. Moreover, we optimize the estimation of the upper bound of $\OPT$ in \EP based on {\it martingale stopping theorem} \cite{Mitzenmacher_Martingales_2005}, which boosts the performance of \AG noticeably.

\subsection{Non-Adaptive Influence Maximization}\label{sec: related-NIM}

The IM problem under the non-adaptive setting has been extensively studied. The seminal work of Kempe~\etal~\cite{Kempe_maxInfluence_2003} shows that there is a $1-1/\e-\varepsilon$ approximation guarantee for the non-adaptive IM problem, and it proposes a monte carlo simulation algorithm to achieve this approximation ratio with high time complexity. After that, a lot of studies have appeared to improve Kempe~\etal's work in terms of time efficiency, especially for some applications \cite{LiQYM_IC_2017,LinLWWX_TOPK_2018} that require efficient algorithms to identify the top-$k$ influential set in large graphs. Among these works, Borgs~\etal~\cite{Borgs_RIS_2014} propose the RR-set sampling method for influence spread estimation, and several later studies~\cite{Tang_TIM_2014,Tang_IMM_2015,Nguyen_DSSA_2016,Tang_OPIM_2018} use this method to find more efficient algorithms for the IM problem. Moreover, the RR-set sampling method is extensively adopted in other variants of IM, e.g.,~profit maximization~\cite{Tang_profitMax_2016,Tang_profitMaxUS_2018,Tang_profitMax_2018} that optimizes a profit metric naturally combining the benefit and cost of influence spread. However, all these studies concentrate on the non-adaptive IM problem (or its variants), and hence their approximation guarantees do not hold for the adaptive IM problem.

\subsection{Adaptive Influence Maximization}\label{sec: related-AIM}

Compared with the studies on non-adaptive IM, the studies on adaptive IM are relatively few. Golovin~\etal~\cite{Golovin_adaptive_2011} derive a $(1-1/\e)$-approximation ratio under the case that only one seed node can be selected in each batch. The feedback model they consider is the same as the one described in this paper, which they call the {\em full-adoption feedback} model. In their arXiv version, they also mention another feedback model called {\em myopic feedback} model, where the feedback of a selected seed node only includes the directed neighbors activated by the seed, but does not include further activated nodes in the cascade process. They show that under the IC model full-adoption feedback is adaptive submodular but myopic feedback is not adaptive submodular. In addition, Yuan and Tang~\cite{YuanT17} propose a generalized feedback model, called {\em partial feedback} model, under which the objective is not adaptive submodular either. Chen~\etal~\cite{Chen_activeLearning_2013}, Tang~\etal~\cite{Tang_ASM_2019}, Huang~\etal~\cite{Huang_ATPM_2020}, Vaswani and Lakshmanan~\cite{Vaswani_adapIM_2016} study adaptive seed selection under the case that more than one seed nodes can be selected in each batch. Nevertheless, Chen~\etal~\cite{Chen_activeLearning_2013} and Tang~\etal~\cite{Tang_ASM_2019} aim to minimize the cost of the selected seeds under the constraint that the influence spread is larger than a given threshold while Huang~\etal~\cite{Huang_ATPM_2020} target at maximizing the profit (i.e., revenue of influence spread less the cost of seed selection), which are different goals from ours. Vaswani and Lakshmanan~\cite{Vaswani_adapIM_2016} derive an approximation guarantee ${1-\exp\left(-\frac{(1-1/\e)^2}{\eta}\right)}$ for certain $\eta>1$. Unfortunately, none of the studies listed above provide a practical algorithm to achieve the claimed approximation ratios. More specifically, Golovin~\etal~\cite{Golovin_adaptive_2011} and Chen~\etal~\cite{Chen_activeLearning_2013} assume that the expected influence spread can be exactly computed in polynomial time (which is not true due to \cite{Chen_MIA_2010}), while Vaswani and Lakshmanan~\cite{Vaswani_adapIM_2016} did not provide a method to bound the key parameter $\eta$ appearing in their approximation ratio. Sun~\etal~\cite{Sun_MRIM_2018} study the Multi-Round Influence Maximization (MRIM) problem under the multi-round triggering model, where influence propagates in multiple rounds independently from possibly different seed sets. In our adaptive IM problem, we consider a natural diffusion model that the realization of influence propagation is identical for all batches/rounds. Meanwhile, as we discussed, our analyses of approximation guarantees uncover some potential gaps in \cite{Sun_MRIM_2018}.

More recently, there are a few studies on the {\em adaptivity gap}, the	ratio between the optimal adaptive solution versus the optimal non-adaptive solution, in the context of adaptive influence maximization. Peng and Chen~\cite{Peng_adaptive_2019} show a constant adaptivity gap for adaptive influence maximization under the IC model with myopic feedback, and using this result to further show that the adaptive greedy algorithm achieves a constant approximation even though the model is not adaptive submodular. They also show in another paper~\cite{Chen_adaptivity_2019} the constant upper and lower bounds for the adaptivity gap in the IC model with full-adoption feedback for several special classes of graphs, but the adaptivity gap for the general graphs remains open. Chen et al.~\cite{Chen_greedyadaptive_2020} define the greedy adaptivity gap as the ratio between the adaptive greedy solution versus the non-adaptive greedy solution, and provide upper/lower bounds for the greedy adaptivity gap under certain influence propagation models. These studies on the adaptivity gap demonstrate the power and limitation of adaptivity in influence maximization and are complementary to our study on efficient algorithms for adaptive influence maximization. 

We also note that Seeman~\etal~\cite{Seeman_adapSeeding_2013}, Horel~\etal~\cite{Horel_adapSeeding_2015} and Badanidiyuru~\etal~\cite{Badanidiyuru_adapSeeding_2016} consider an influence maximization problem called ``adaptive seeding'', but with totally different implications from ours. More specifically, they assume that the seed nodes can be selected in two stages. In the first stage, a set $S$ can be selected from a given node set $S^\prime\subseteq V$. In the second stage, another seed set $S^+$ can be selected from the influenced neighboring nodes of $S$. The goal of their problem is to maximize the expected influence spread of $S^+$, under the constraint that the total number of nodes in $S\cup S^+$ is no more than $k$. However, both the problem model and the optimization goal of these studies are very different from ours, and hence their methods cannot be applied to our problem.

\section{Performance Evaluation} \label{sec:exp}

In this section, we evaluate the performance of our proposed approach with extensive experiments. The goal of our experiments is to measure the efficiency and effectiveness of \AG using real social networks. All of our experiments are conducted on a Linux machine with an Intel Xeon 2.6GHz CPU and 64GB RAM. 

\subsection{Experimental Setting}\label{sec:exp-setting}

\spara{Datasets} We use five real datasets in our experiments, \ie~{NetHEPT}, {Epinions}, {DBLP}, {LiveJournal}, and {Orkut} as summarized by Table~\ref{tbl:dataset}. {NetHEPT} is obtained from~\cite{Chen_NewGreedy_2009}, representing the academic collaboration networks in ``High Energy Physics-Theory'' area. The rest four datasets are available from~\cite{Leskovec_SNAP_2014}. Among them, Orkut contains millions of nodes and edges. We randomly generate 20 realizations for each dataset, and then report the average performance for each algorithm on those 20 realizations.
\begin{table}[!t]
	\centering
	\caption{Dataset details. ($\mathrm{K}=10^3, \mathrm{M}=10^6$)} \label{tbl:dataset}
	\vspace{2mm}
	\setlength{\tabcolsep}{0.5em} 
	\renewcommand{\arraystretch}{1.2}
	\begin{tabular} {l|rrrc}
		\hline
		{\bf Dataset} & \multicolumn{1}{c}{$\boldsymbol{n}$} & \multicolumn{1}{c}{$\boldsymbol{m}$} & \multicolumn{1}{c}{\bf Type}  & {\bf Avg.\ deg}  \\ \hline
		{NetHEPT}       & 15.2K			&  31.4K		& 	undirected		 &	4.18       \\ 
		{Epinions}		 & 132K			&  841K			&  	directed		 &	13.4       \\ 
		{DBLP}			 & 655K			&  1.99M		&  	undirected		 &	6.08       \\ 
		{LiveJournal}   & 4.85M			&  69.0M		&  	directed		 &	28.5        \\ 
		{Orkut}		 	& 3.07M			&  117M			&  	undirected		 &	76.2      \\ \hline 			
	\end{tabular}
\end{table}

\begin{figure*}[!t]
	\centering
	\includegraphics[height=10pt]{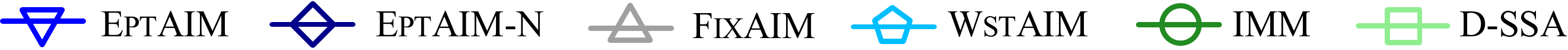}\vspace{-0.1in}\\
	\subfloat[NetHEPT]{\includegraphics[width=0.205\linewidth]{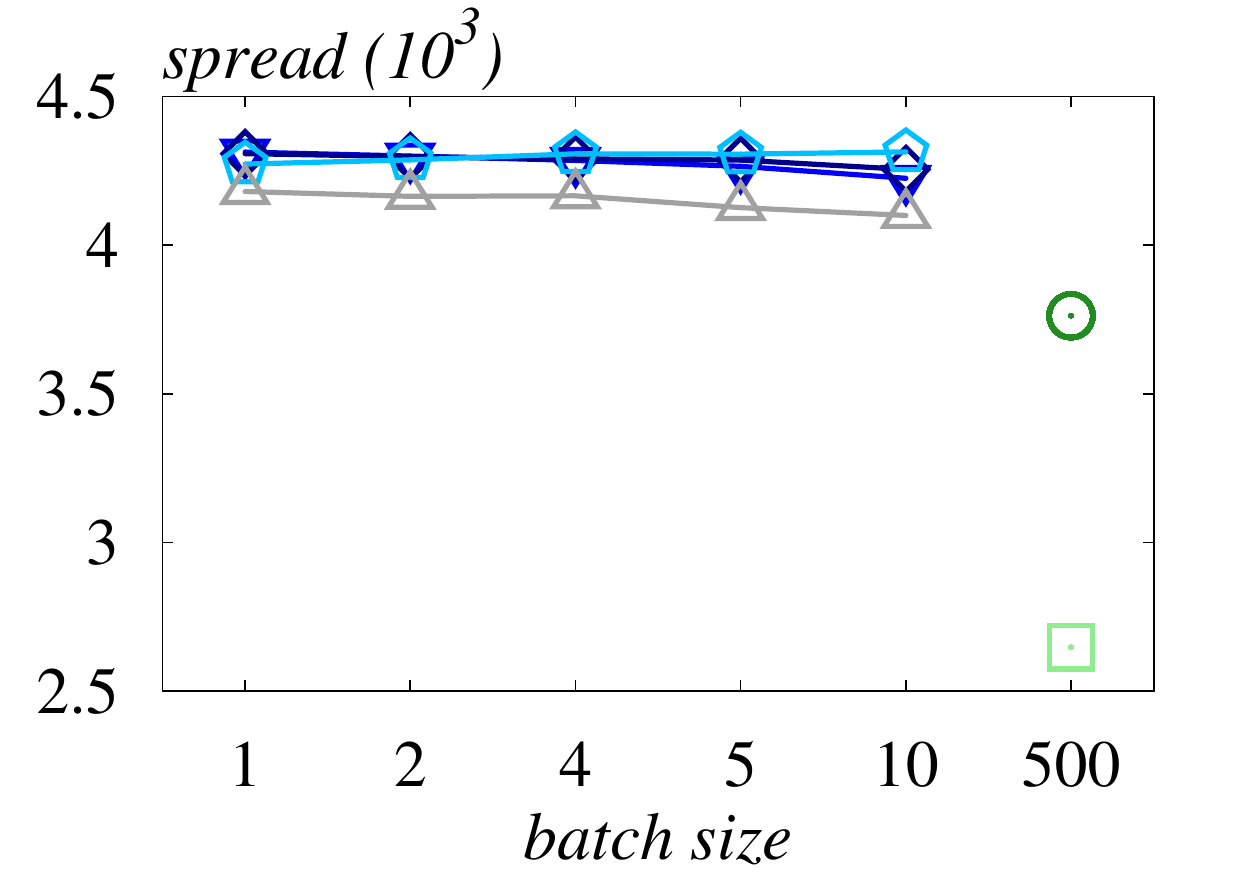}\label{fig:NetHEPT_batch_spread}}
	\subfloat[Epinions]{\includegraphics[width=0.205\linewidth]{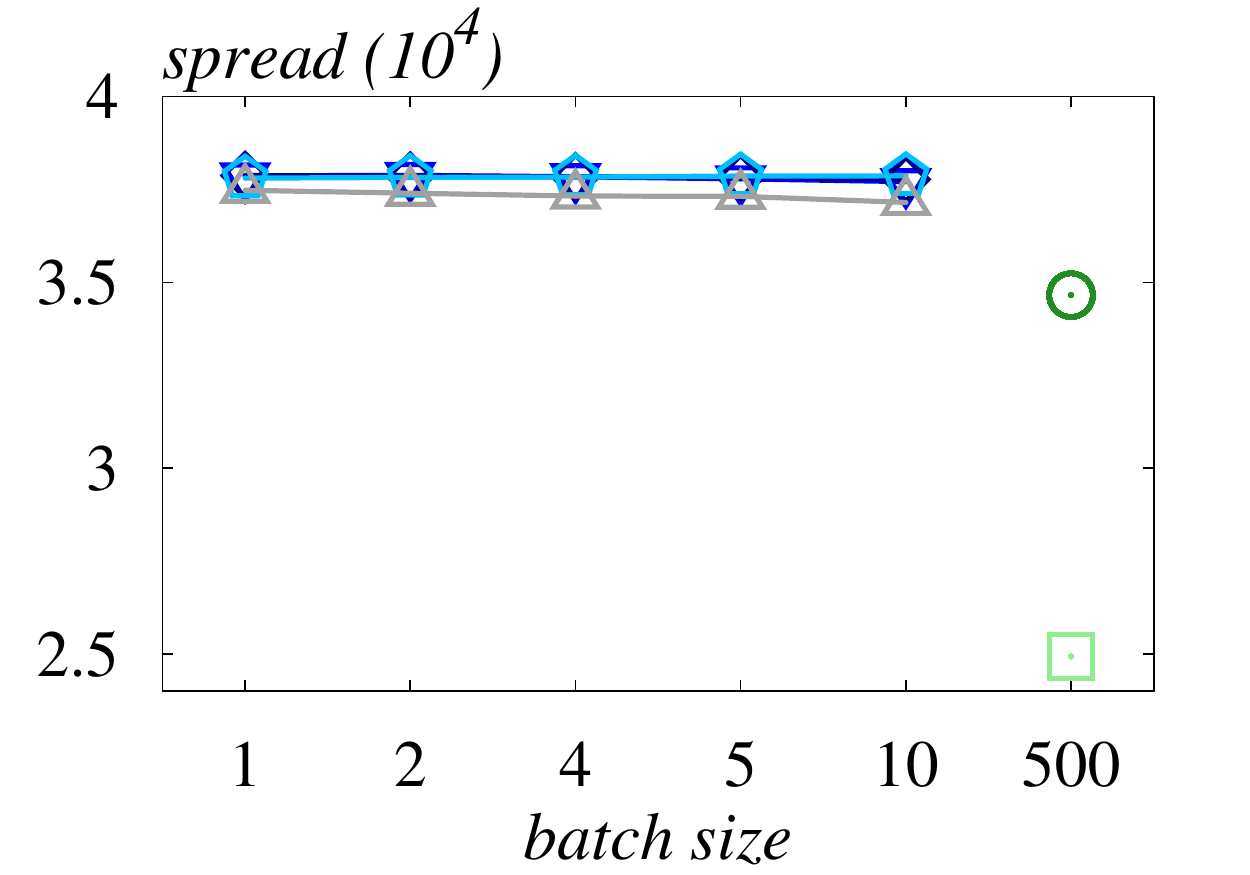}\label{fig:Epinions_batch_spread}}
	\subfloat[DBLP]{\includegraphics[width=0.205\linewidth]{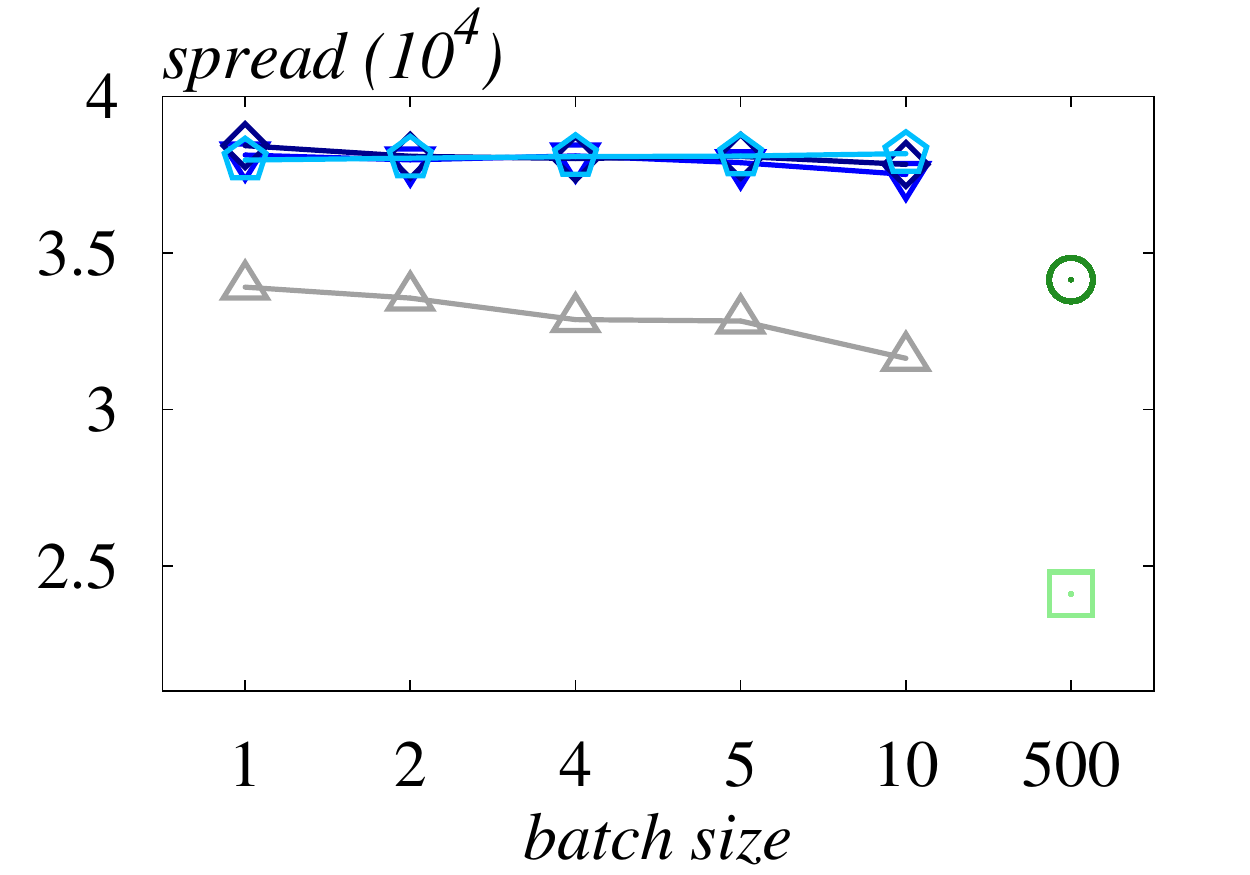}\label{fig:DBLP_batch_spread}}
	\subfloat[LiveJournal]{\includegraphics[width=0.205\linewidth]{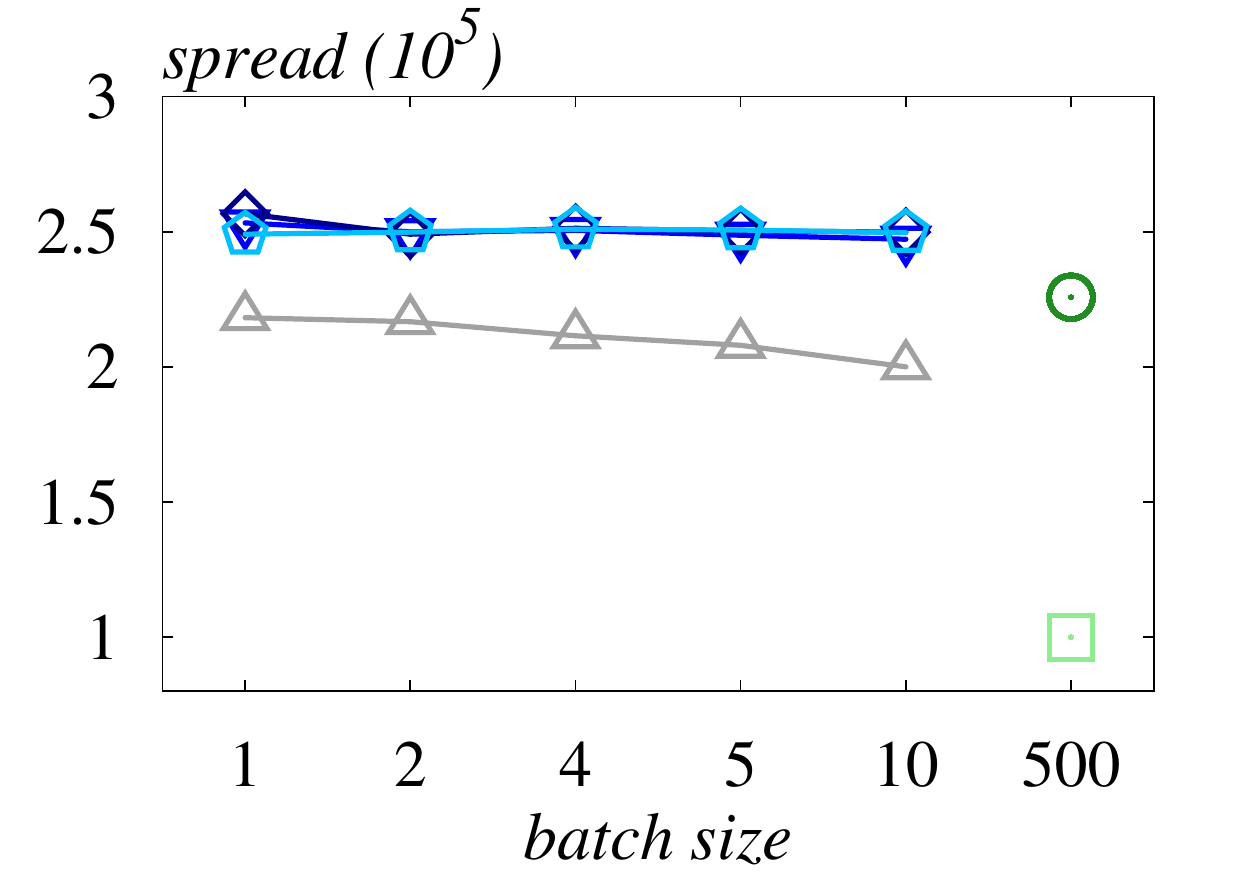}\label{fig:LiveJournal_batch_spread}}
	\subfloat[Orkut]{\includegraphics[width=0.205\linewidth]{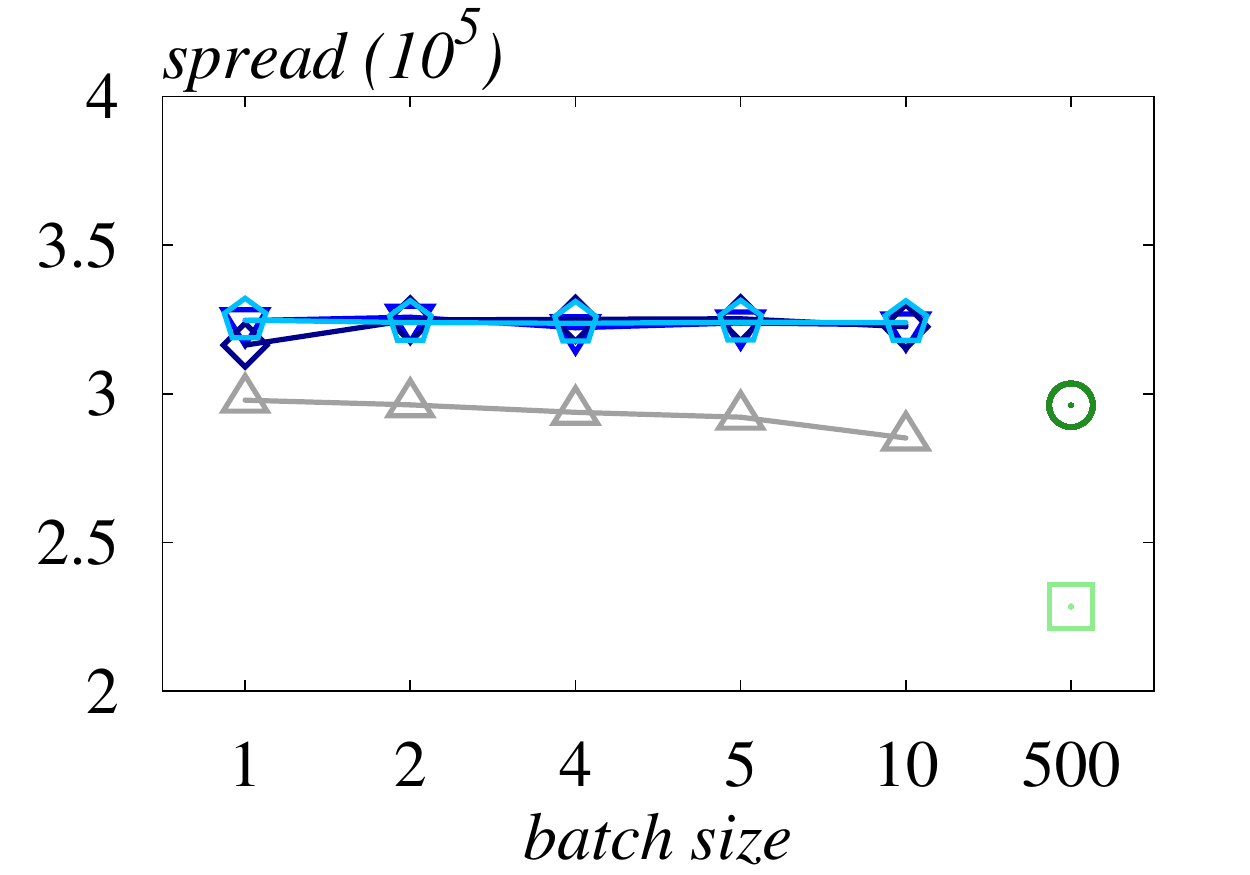}\label{fig:Orkut_batch_spread}}
	\caption{Spread vs. batch size.}\label{fig:spread-batch}
\end{figure*}

\begin{figure*}[!t]
	\centering
    \includegraphics[height=10pt]{legend}\vspace{-0.1in}\\
	\subfloat[NetHEPT]{\includegraphics[width=0.205\linewidth]{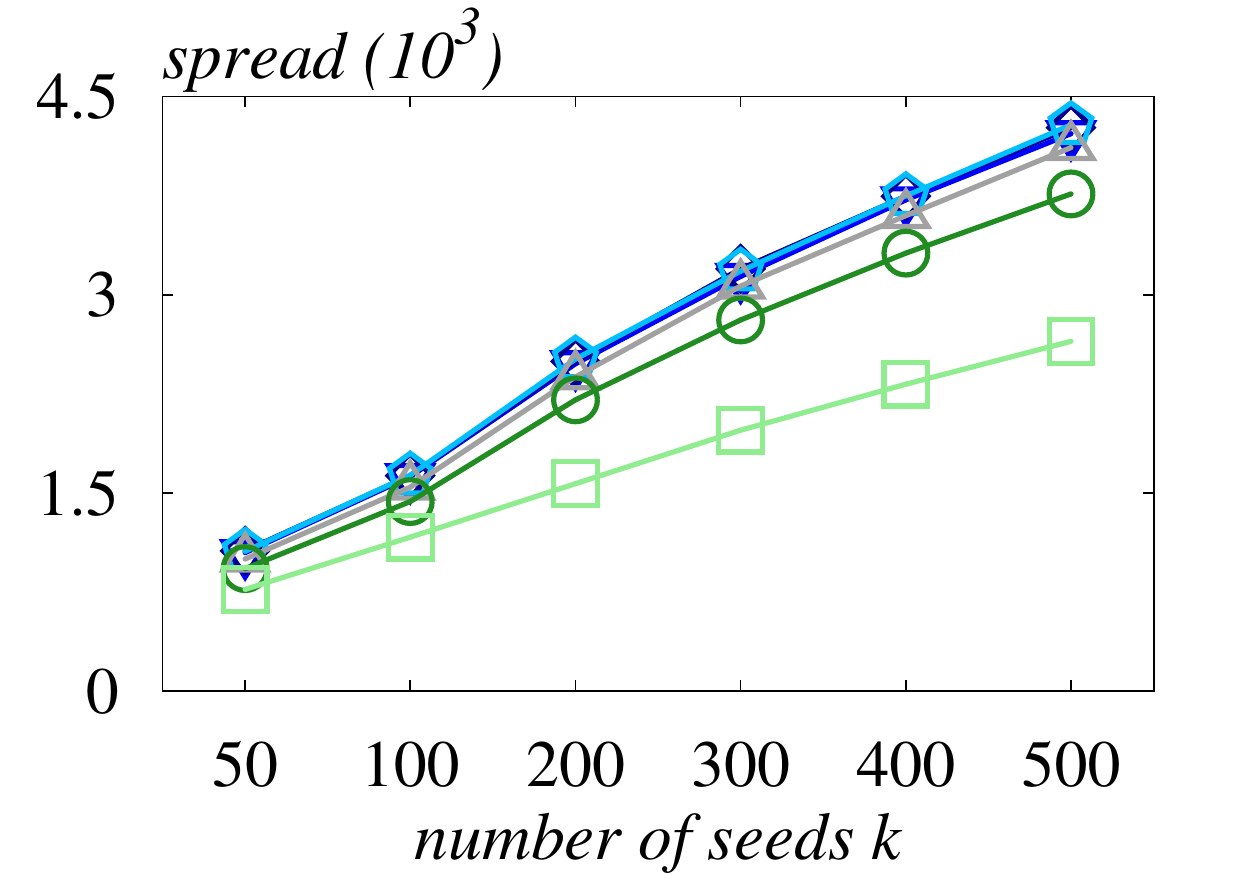}\label{fig:NetHEPT_k_spread}}
	\subfloat[Epinions]{\includegraphics[width=0.205\linewidth]{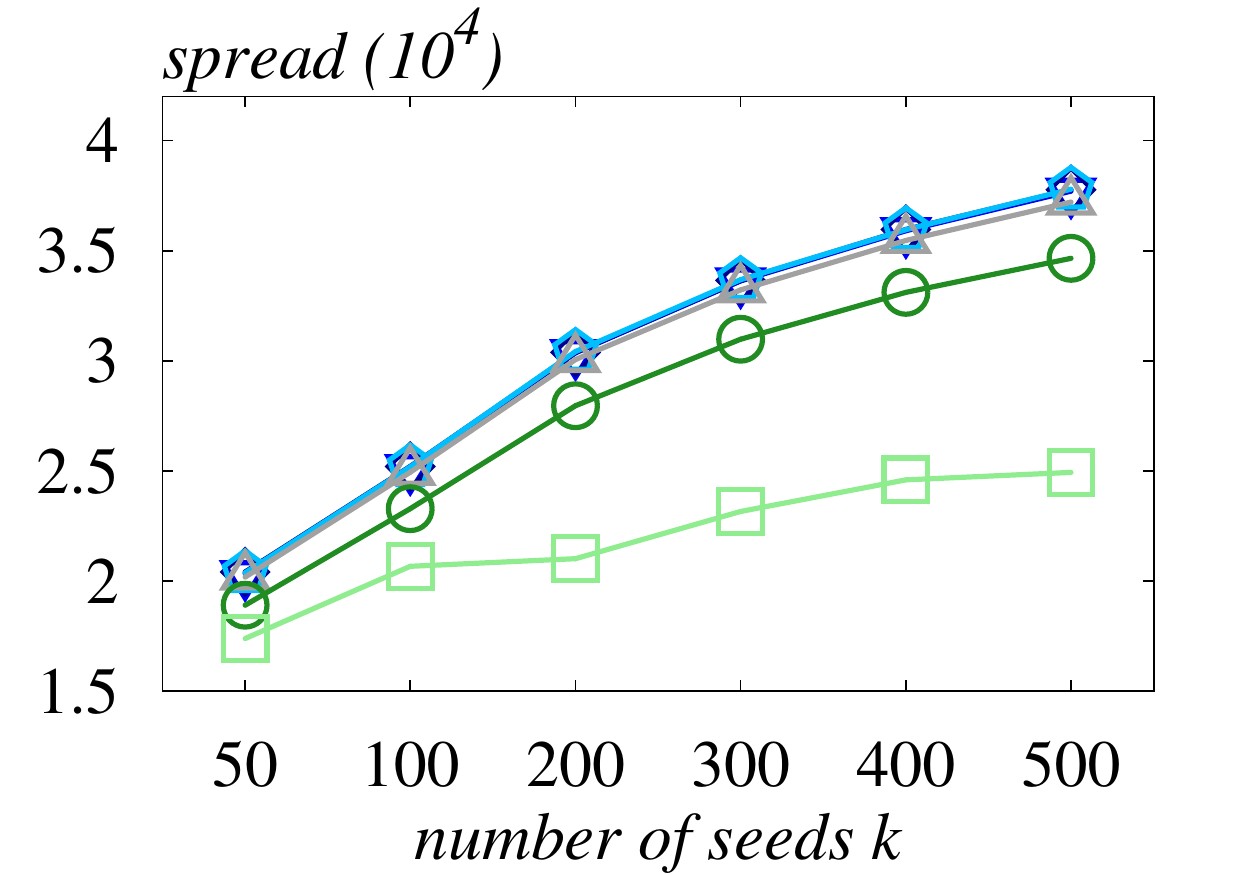}\label{fig:Epinions_k_spread}}
	\subfloat[DBLP]{\includegraphics[width=0.205\linewidth]{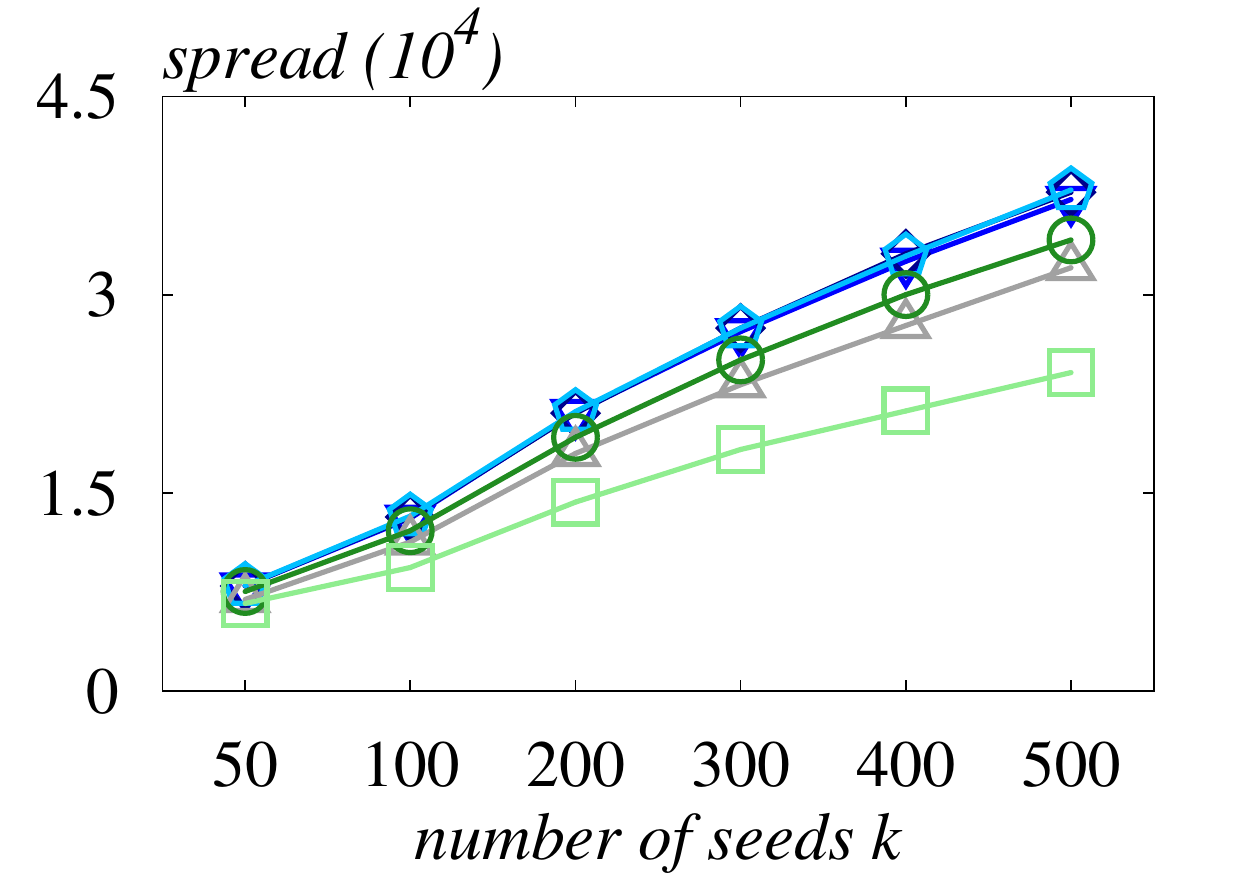}\label{fig:DBLP_k_spread}}
	\subfloat[LiveJournal]{\includegraphics[width=0.205\linewidth]{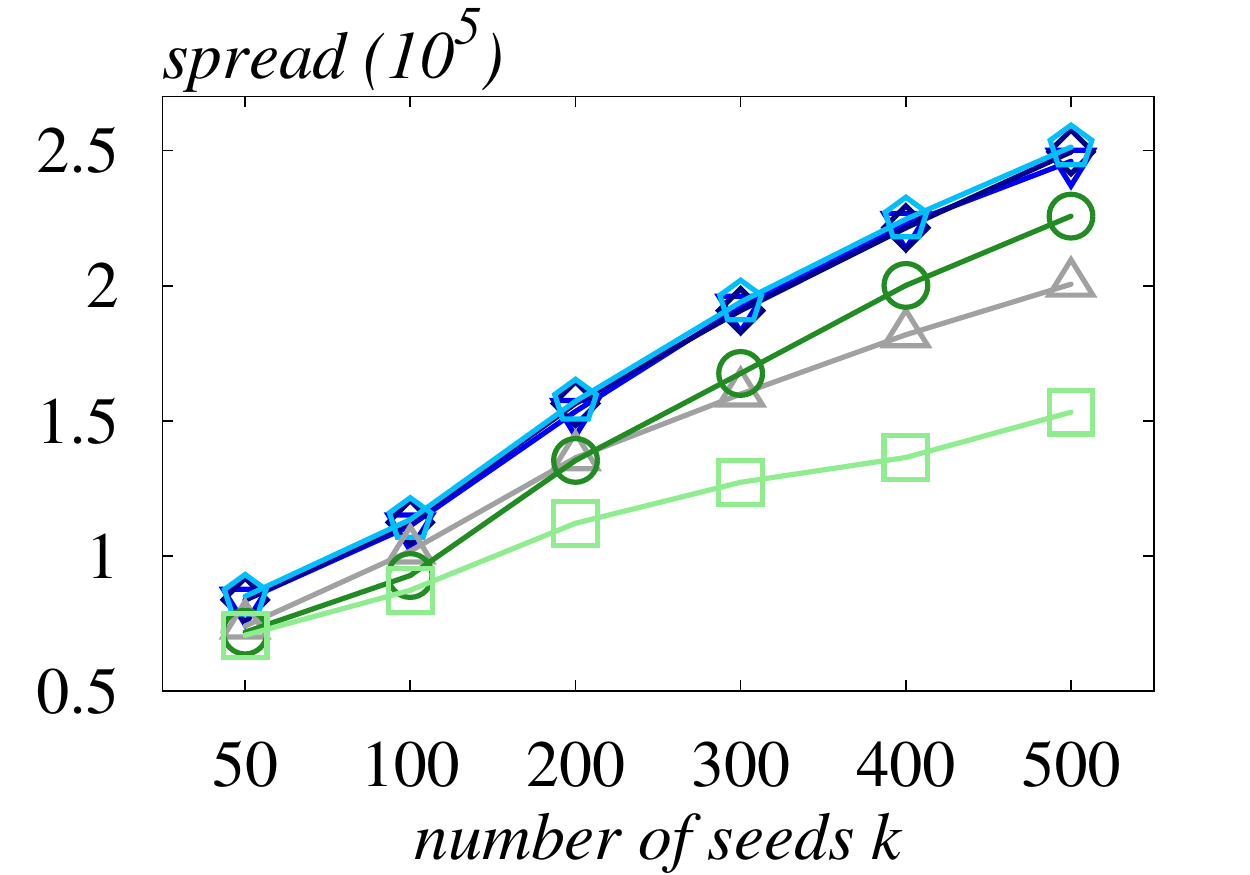}\label{fig:LiveJournal_k_spread}}
	\subfloat[Orkut]{\includegraphics[width=0.205\linewidth]{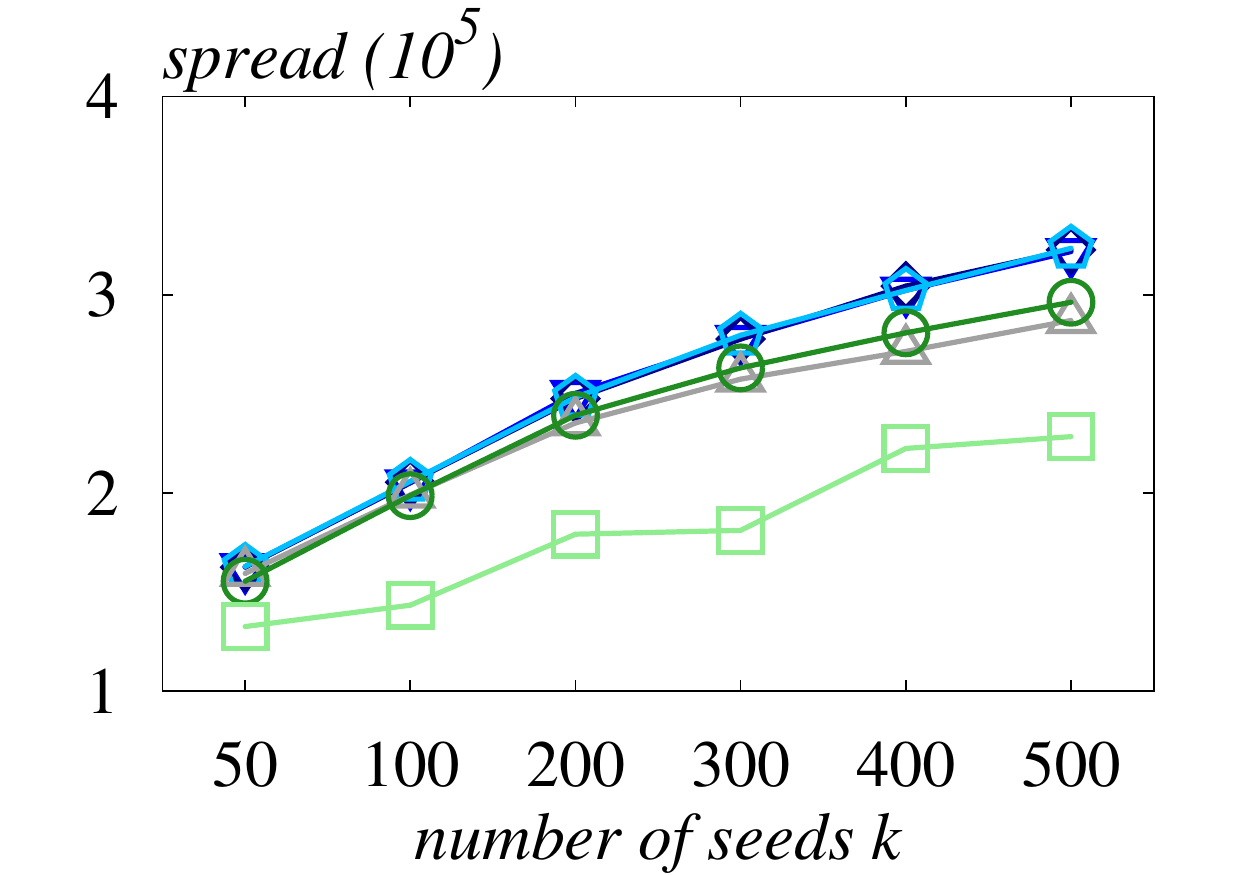}\label{fig:Orkut_k_spread}}
	\caption{Spread vs. seed size.}\label{fig:spread-seed}
\end{figure*}

\spara{Algorithms} We evaluate four adaptive algorithms, \ie \expepic, \worepic, \expopim, and \fixaim and two state-of-the-art non-adaptive algorithms, \ie \imm~\cite{Tang_IMM_2015} and \dssa~\cite{Nguyen_DSSA_2016}. \expepic is the algorithm we instantiate \AG with \EP to achieve an expected approximation ratio of $1-\e^{\rho_b(\varepsilon-1)}$ where $\rho_b=1-(1-1/b)^b$ and $b$ is the batch size. \worepic is the same implementation as \expepic but with well-calibrated parameters to acquire a worst-case approximation ratio of $1-\e^{\rho_b(\varepsilon-1)}$ with high probability. Recall that obtaining the worst-case approximation needs a more demanding requirement than the expected approximation, pointed out in Theorem~\ref{thm:worst-approx}. \expopim is a naive instantiation of \AG instantiated using the existing non-adaptive algorithm \OPIMC~\cite{Tang_OPIM_2018} directly, as introduced in Section~\ref{sec:batch-basic}. In addition, \expopim fixes the issues of \adapone \cite{Han_AIM_2018} so that it provides the correct expected approximation ratio of $1-\e^{\rho_b(\varepsilon-1)}$. By including \expopim, we could evaluate the performance improvement of \expepic against \expopim. \fixaim is a variant of \expepic that uses a fixed number of samples for each batch of seed selection. Note that \fixaim is a heuristic algorithm, which does not provide any theoretical guarantees. The purpose of \fixaim is to provide some insights on the effect of sample size on the performance of adaptive algorithms.

We also test two state-of-the-art non-adaptive IM algorithms (i.e., \imm~\cite{Tang_IMM_2015} and \dssa~\cite{Nguyen_DSSA_2016}) in our experiments. The purpose of using \dssa and \imm in our experiments is to measure the influence spread increase achieved by the adaptive IM algorithms compared with the non-adaptive IM algorithms.

\spara{Parameter Settings} We use the popular independent cascade (IC) model~\cite{Kempe_maxInfluence_2003} in our experiments. Following a large body of existing work on influence maximization~\cite{Tang_TIM_2014,Tang_IMM_2015,Nguyen_DSSA_2016,Tang_OPIM_2018,Kempe_maxInfluence_2003}, we set the propagation probability of each edge $(u,v)$ to $\frac{1}{d_{\mathrm{in}}(v)}$, where $d_{\mathrm{in}}(v)$ is the in-degree of node $v$.

We set $\varepsilon=0.5$ for the three adaptive algorithms and two non-adaptive algorithms for fair comparison and approximation errors $\varepsilon_1=\cdots=\varepsilon_r=\varepsilon$ for the three adaptive algorithms. Meanwhile, we set the failure probability of $\delta=1/n$ for \worepic, \imm, and \dssa. For \fixaim, we generate $10\mathrm{K}$ RR-sets for each batch of seed selection.

Recall that we need to select $k$ nodes in $r$ batches in adaptive IM, where $b=k/r$ nodes are selected in each batch. To see how the performance of our algorithms is affected by input parameters $k, b$ and $r$, we set these parameters according to the {\em $b$-setting} and {\em $k$-setting} explained as follows. Under the {\em $b$-setting}, we fix $k=500$ and vary $b$ such that $b \in \{1,2,4,5,10,500\}$. Under the {\em $k$-setting}, we fix $r=50$ and vary $k$ such that $k \in \{ 50, 100, 200, \cdots, 500\}$.

\subsection{Comparison of Influence Spread}\label{sec:influence}

In this section, we study the influence spread for all tested algorithms, as shown in \figurename~\ref{fig:spread-batch} and \figurename~\ref{fig:spread-seed}. In order to gain a comprehensive understanding about the efficacy of the tested algorithms, we measure their influence spreads achieved by varying the number of seed nodes $k$ and the batch size $b$. 

\figurename~\ref{fig:spread-batch} reports the influence spread obtained with $500$ seed nodes selected through different numbers of batches on the four datasets. In general, the spreads acquired by \expepic, \worepic, and  \expopim are comparable to each other but notably larger than the spreads of the baselines, including the heuristic adaptive algorithm, \ie~\fixaim, and two non-adaptive algorithms, \ie \imm and \dssa. In particular, \worepic, \imm and \dssa achieve the worst-case approximation guarantee, while \worepic obtains around $12\%$ and $60\%$ more spread than \imm and \dssa do in average, respectively. On the one hand, this can be explained by the advantage of adaptivity over non-adaptivity that adaptive algorithms could make smarter decisions based on the feedback from previous batches. On the other hand, the considerable discrepancy on the spread of \dssa exposes that \dssa sacrifices its effectiveness badly for the sake of high efficiency (referring to its running time, as shown in \figurename~\ref{fig:time-batch} and \figurename~\ref{fig:time-seed}). As with \fixaim, it achieves the smallest spreads among the four adaptive algorithms, with around $10\%$ less in average. In particular, \fixaim obtains even smaller spreads than the non-adaptive algorithm \imm on the three largest datasets, as shown in \figurename~\ref{fig:spread-batch}. This fact suggests that $10\mathrm{K}$ samples are insufficient to provide good performance.

\figurename~\ref{fig:spread-seed} shows the results of influence spreads with various number of seed nodes. We observe that (i) the spreads grow with the number of seed nodes $k$ as expected, (ii) our three adaptive algorithms achieve similar amount of spreads and outperform the heuristic adaptive algorithm \fixaim under the same $k$ and $b$ setting, which is consistent with the results in \figurename~\ref{fig:spread-batch}, and (iii) the percentage increase of spreads obtained by the adaptive algorithms over the spreads of \imm is around $10\%$ in average. This spread improvement is quite promising considering the large number of users in social networks. Meanwhile, it further confirms the superiority of adaptive algorithms on influence maximization.

\begin{figure}[!t]
	\centering
	\subfloat[Spread vs. batch size]{\includegraphics[width=0.53\linewidth]{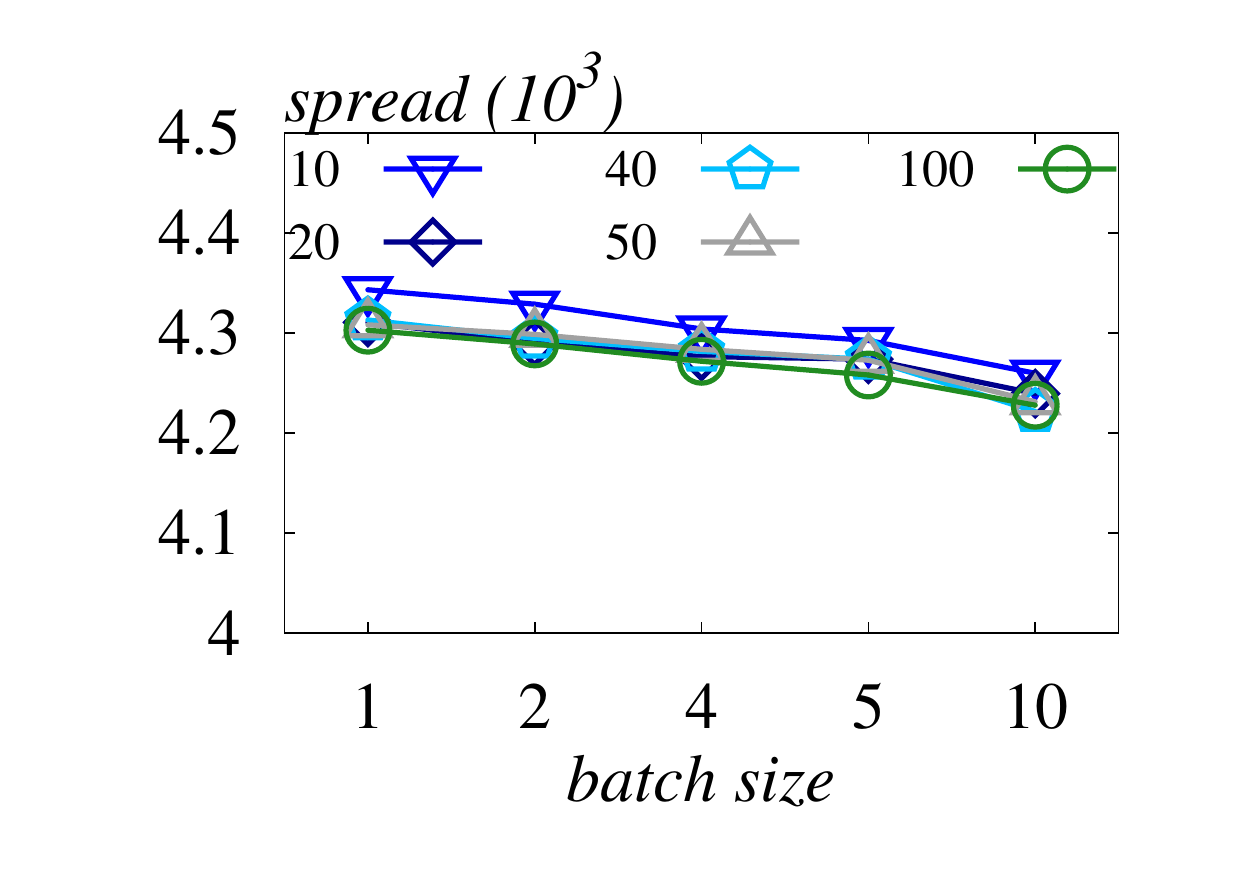}\label{fig:exp_b_spread}}
	\subfloat[Spread vs. seed size]{\includegraphics[width=0.5\linewidth]{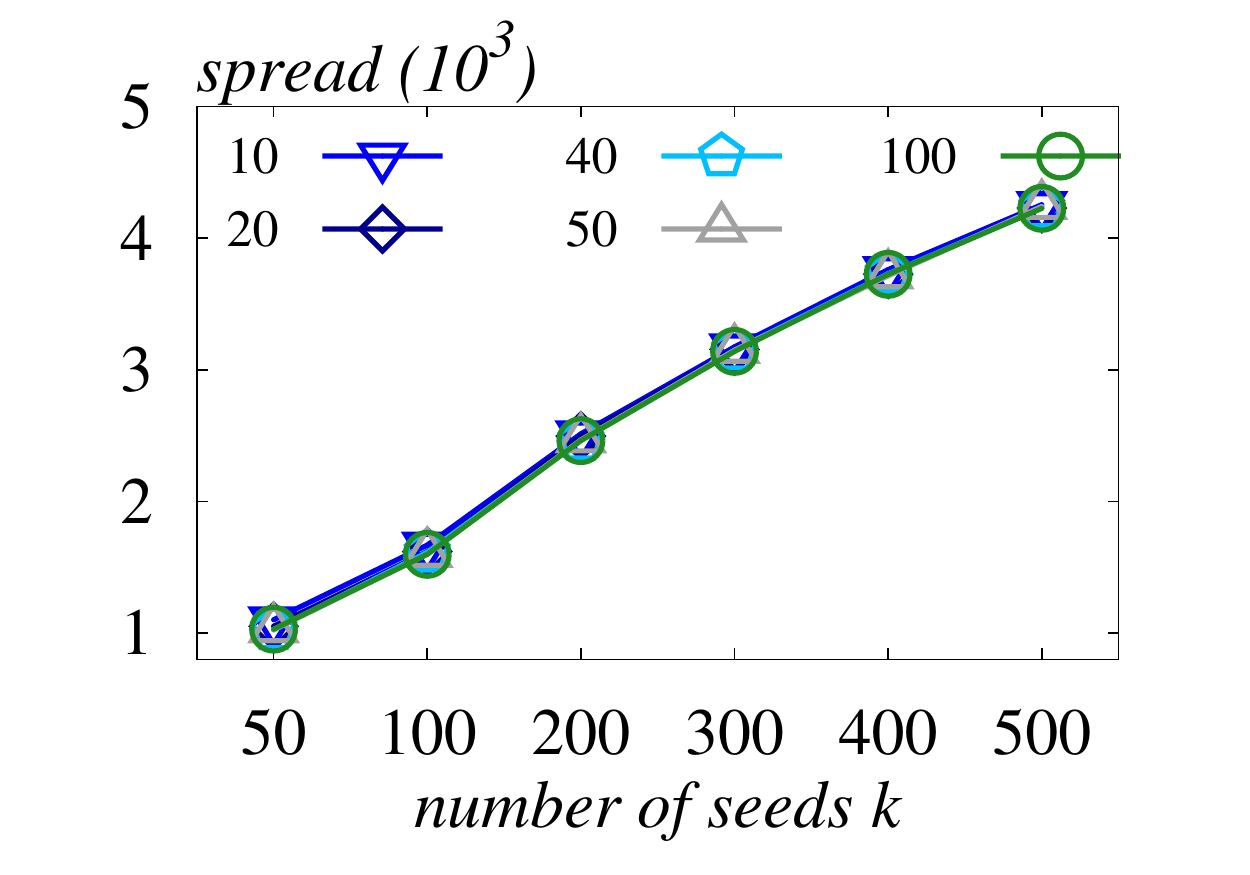}\label{fig:exp_k_spread}}
	\caption{Spread in different realization number on NetHEPT.}\label{fig:spread-realization}
\end{figure}
\figurename~\ref{fig:spread-realization} reports the average spread of our proposed \expepic algorithm under different number of realizations, including $\{10, 20, 40, 50, 100\}$, on the {NetHEPT} dataset. As shown, the average spreads of \expepic in various numbers of realizations are well-converged, especially under the {\em $k$-setting}. These results support the reliability of the results obtained through $20$ random realizations.

\begin{figure*}[!t]
	\centering
	\includegraphics[height=10pt]{legend}\vspace{-0.1in}\\
	\subfloat[NetHEPT]{\includegraphics[width=0.205\linewidth]{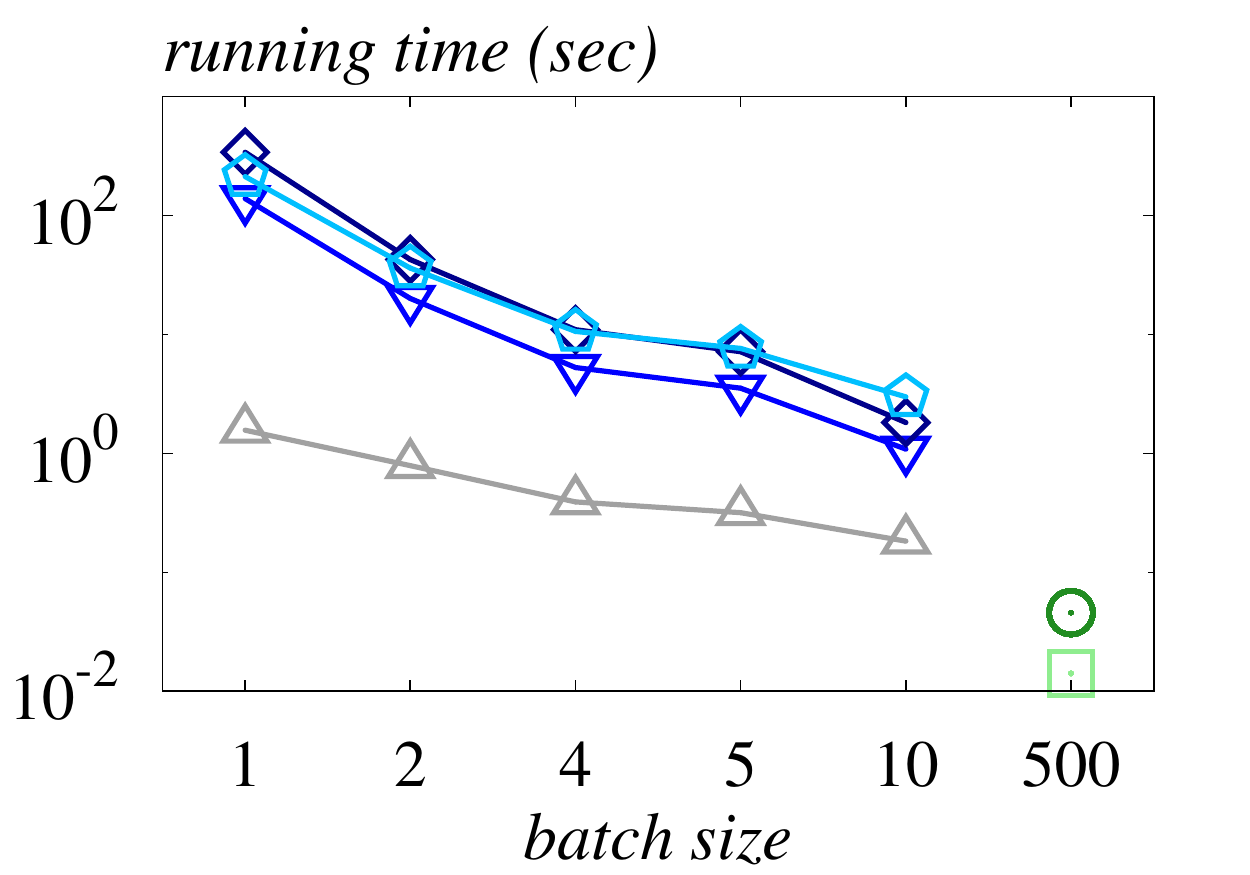}\label{fig:NetHEPT_batch_time}}
	\subfloat[Epinions]{\includegraphics[width=0.205\linewidth]{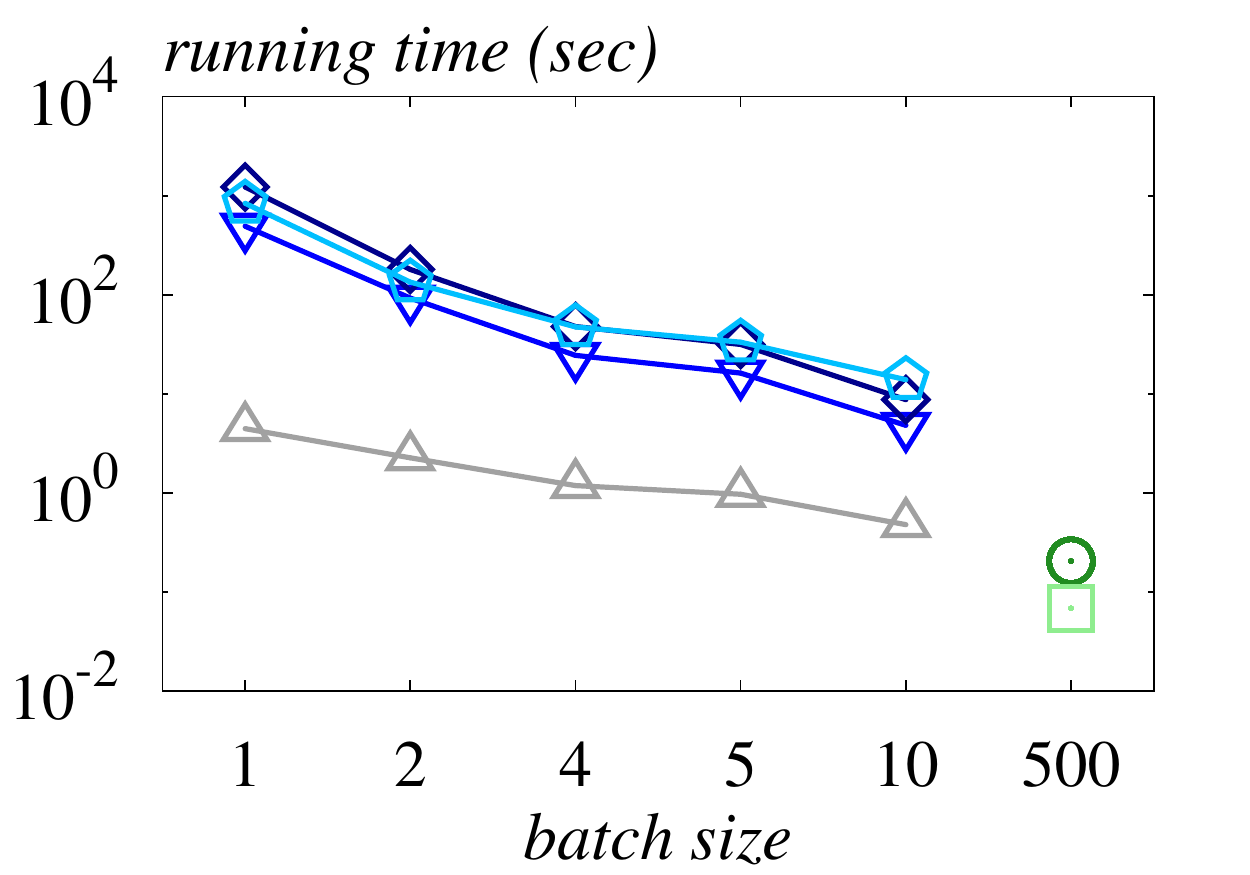}\label{fig:Epinions_batch_time}}
	\subfloat[DBLP]{\includegraphics[width=0.205\linewidth]{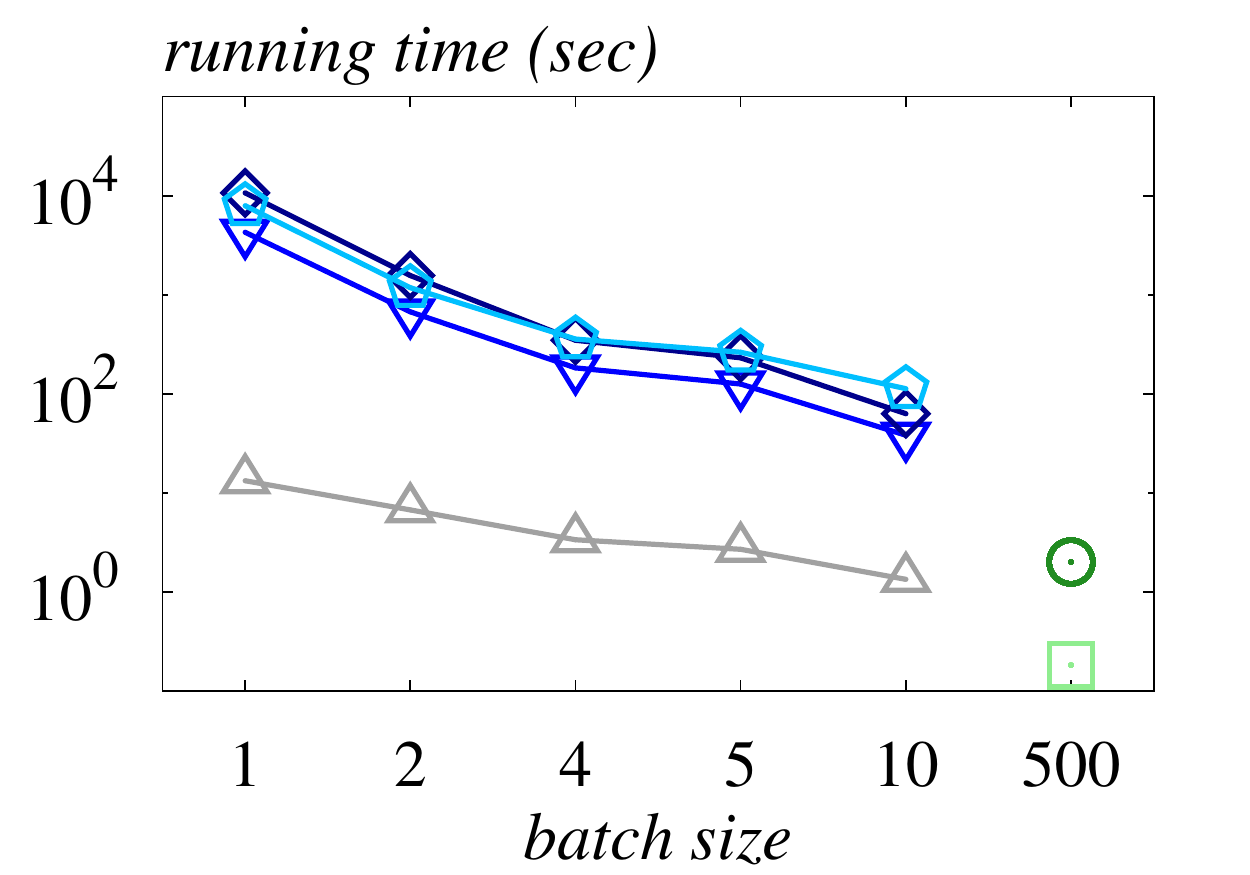}\label{fig:DBLP_batch_time}}
	\subfloat[LiveJournal]{\includegraphics[width=0.205\linewidth]{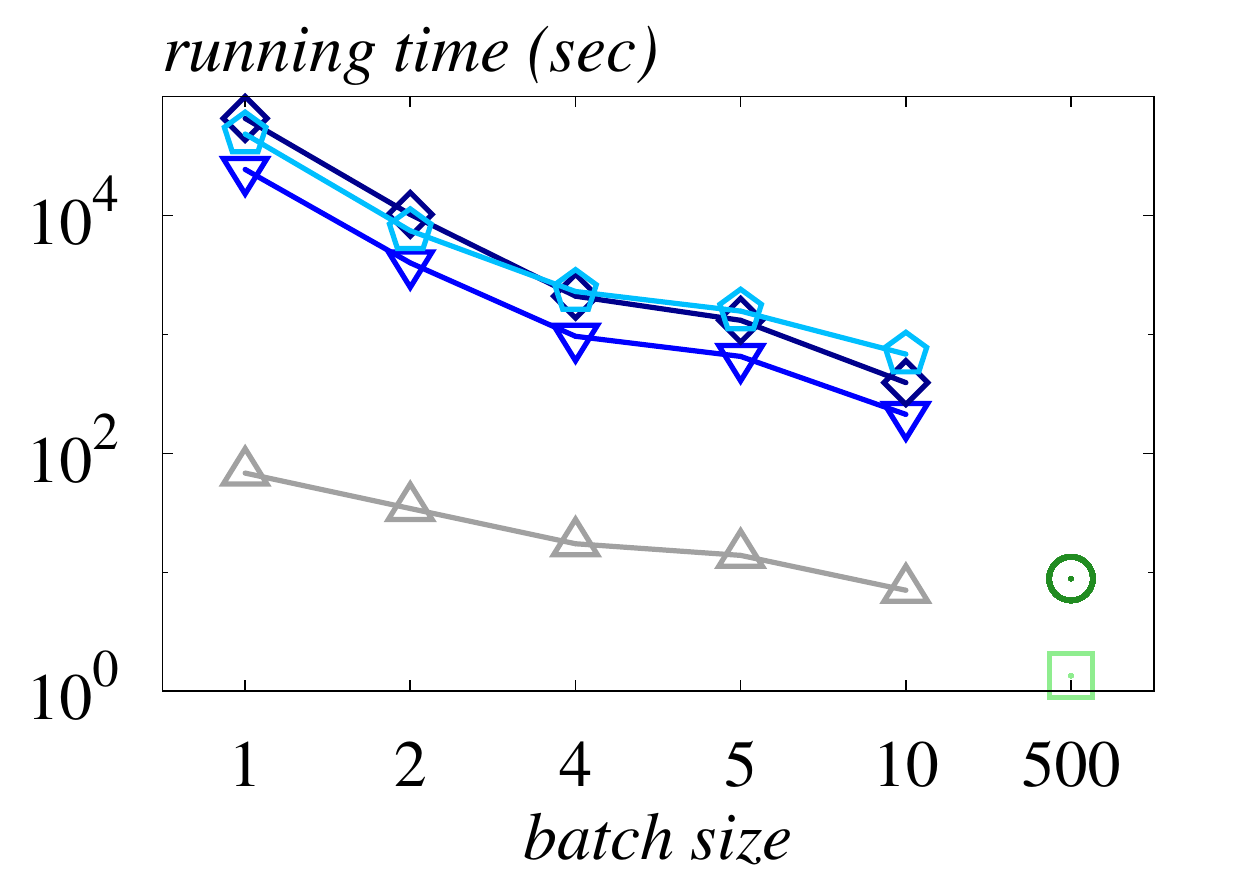}\label{fig:LiveJournal_batch_time}}
	\subfloat[Orkut]{\includegraphics[width=0.205\linewidth]{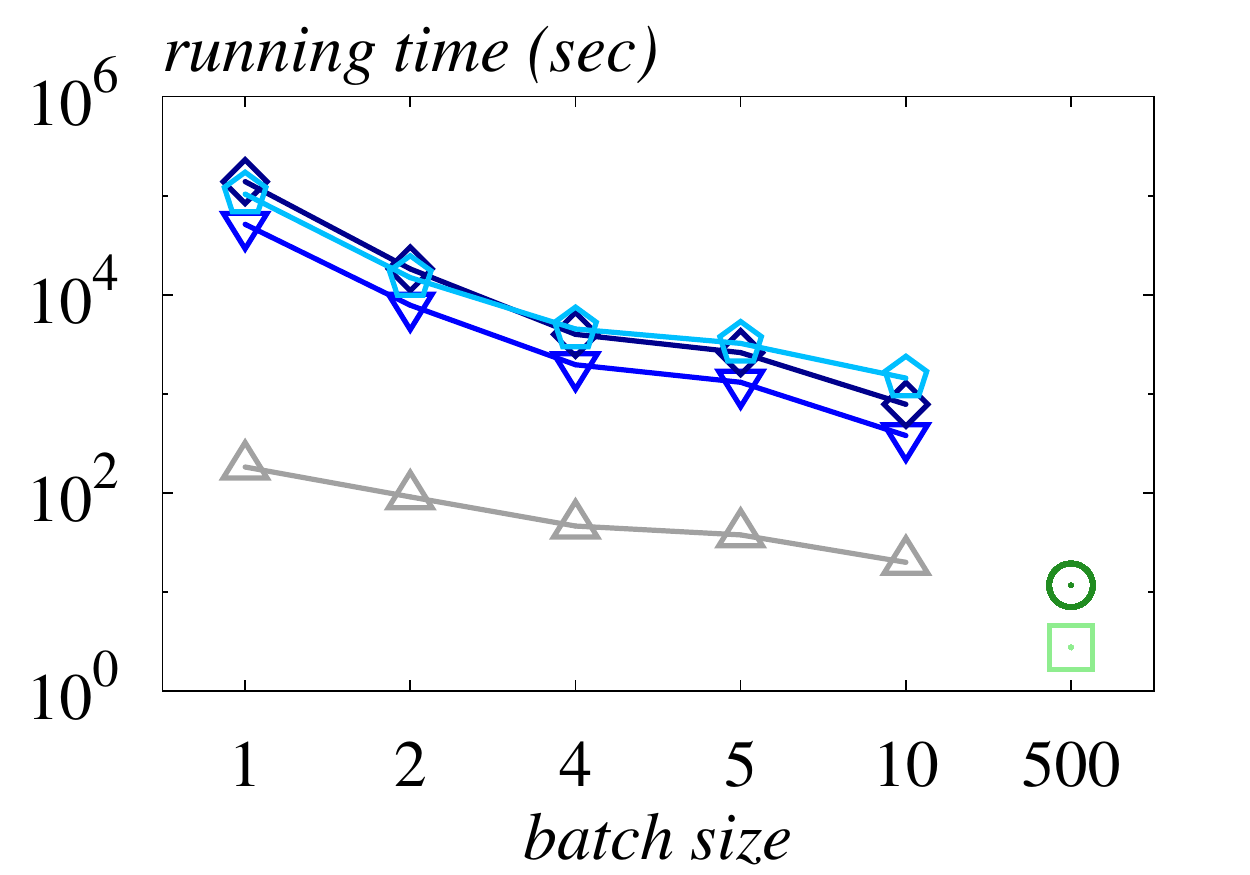}\label{fig:Orkut_batch_time}}
	\caption{Running time vs. batch size.}\label{fig:time-batch}
\end{figure*}

\begin{figure*}[!t]
	\centering
    \includegraphics[height=10pt]{legend}\vspace{-0.1in}\\
	\subfloat[NetHEPT]{\includegraphics[width=0.205\linewidth]{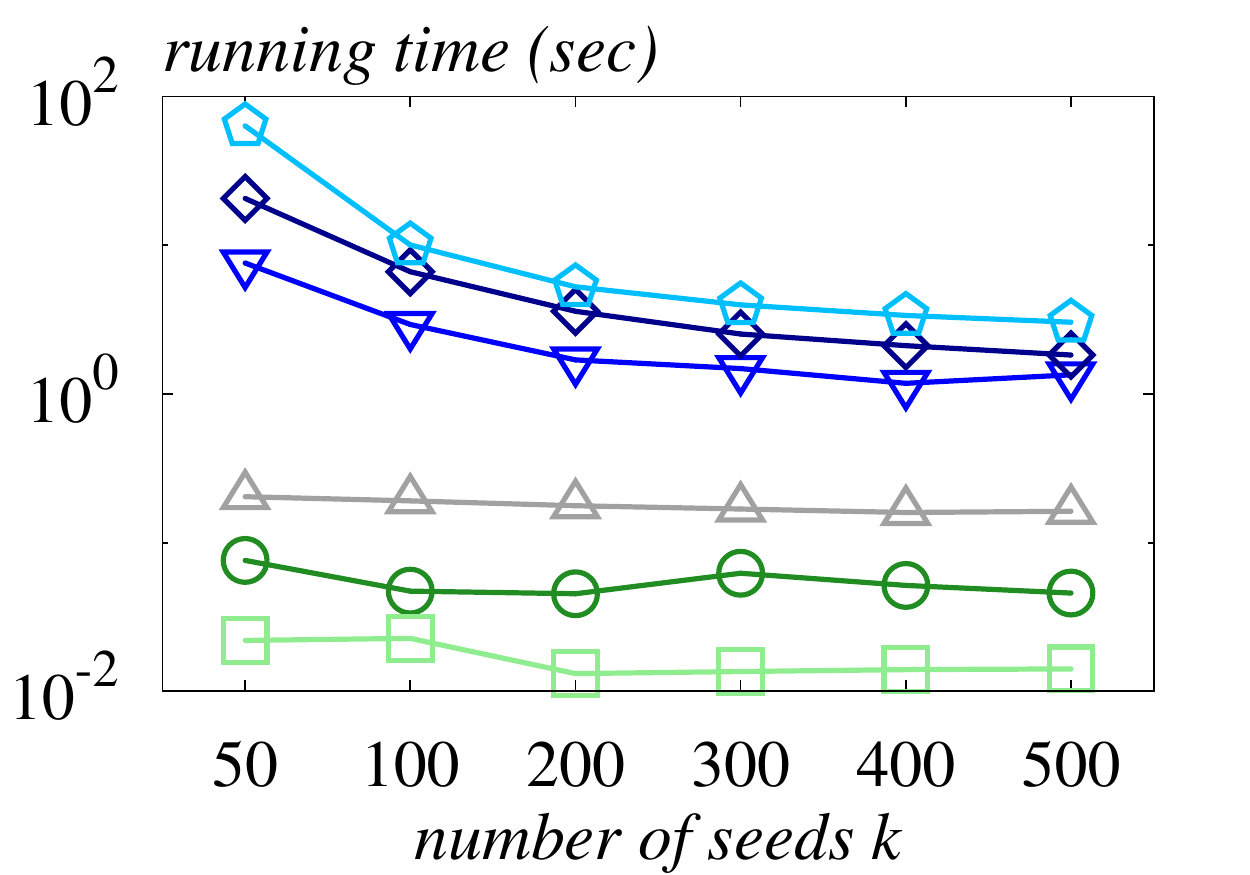}\label{fig:NetHEPT_k_time}}
	\subfloat[Epinions]{\includegraphics[width=0.205\linewidth]{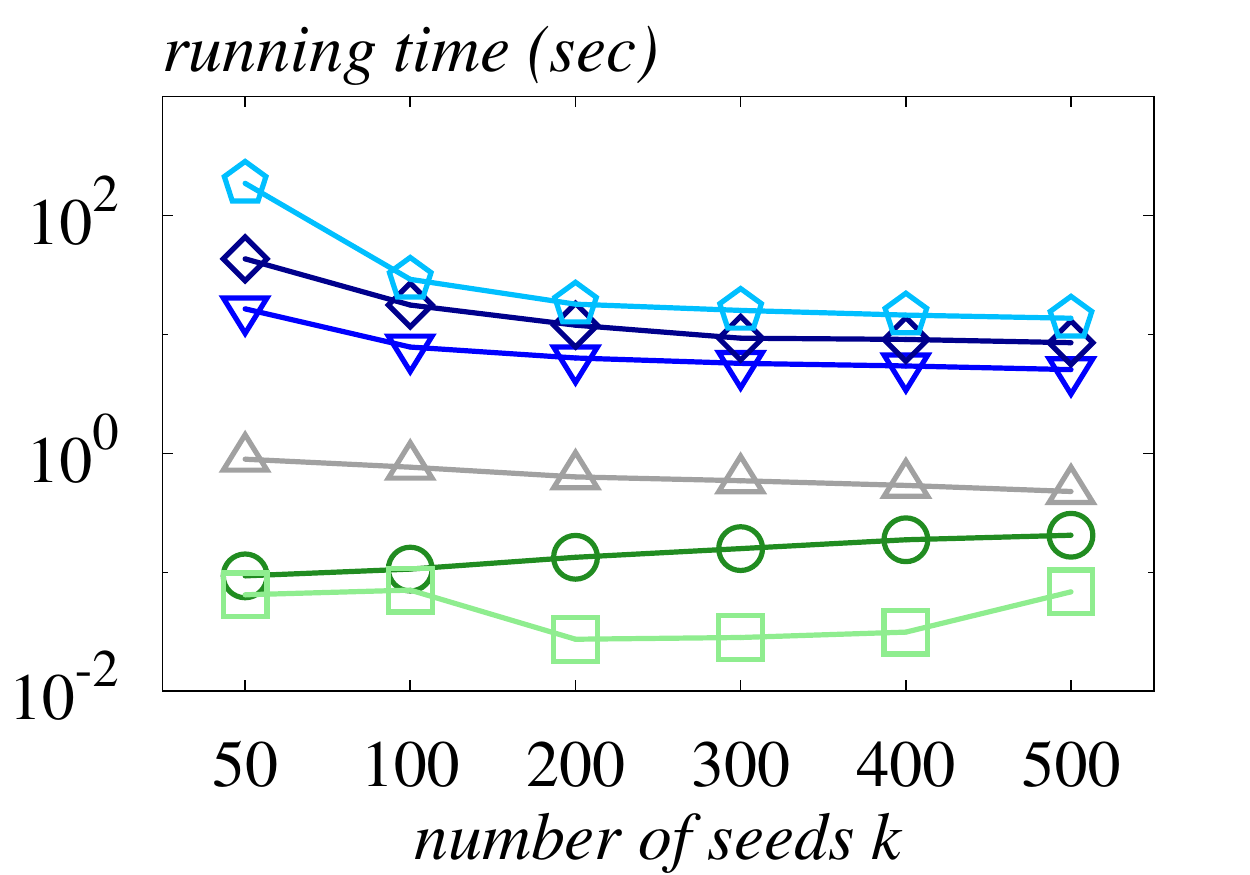}\label{fig:Epinions_k_time}}
	\subfloat[DBLP]{\includegraphics[width=0.205\linewidth]{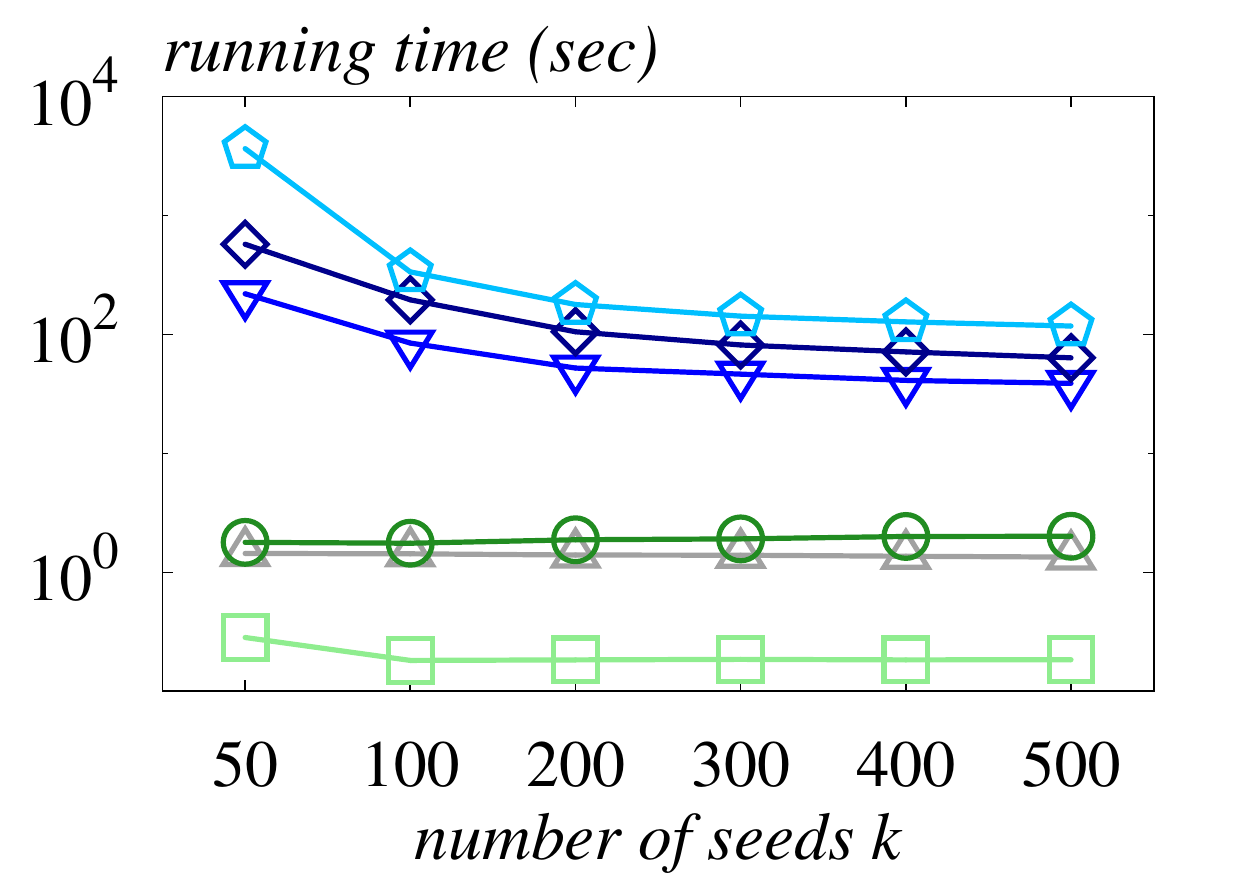}\label{fig:DBLP_k_time}}
	\subfloat[LiveJournal]{\includegraphics[width=0.205\linewidth]{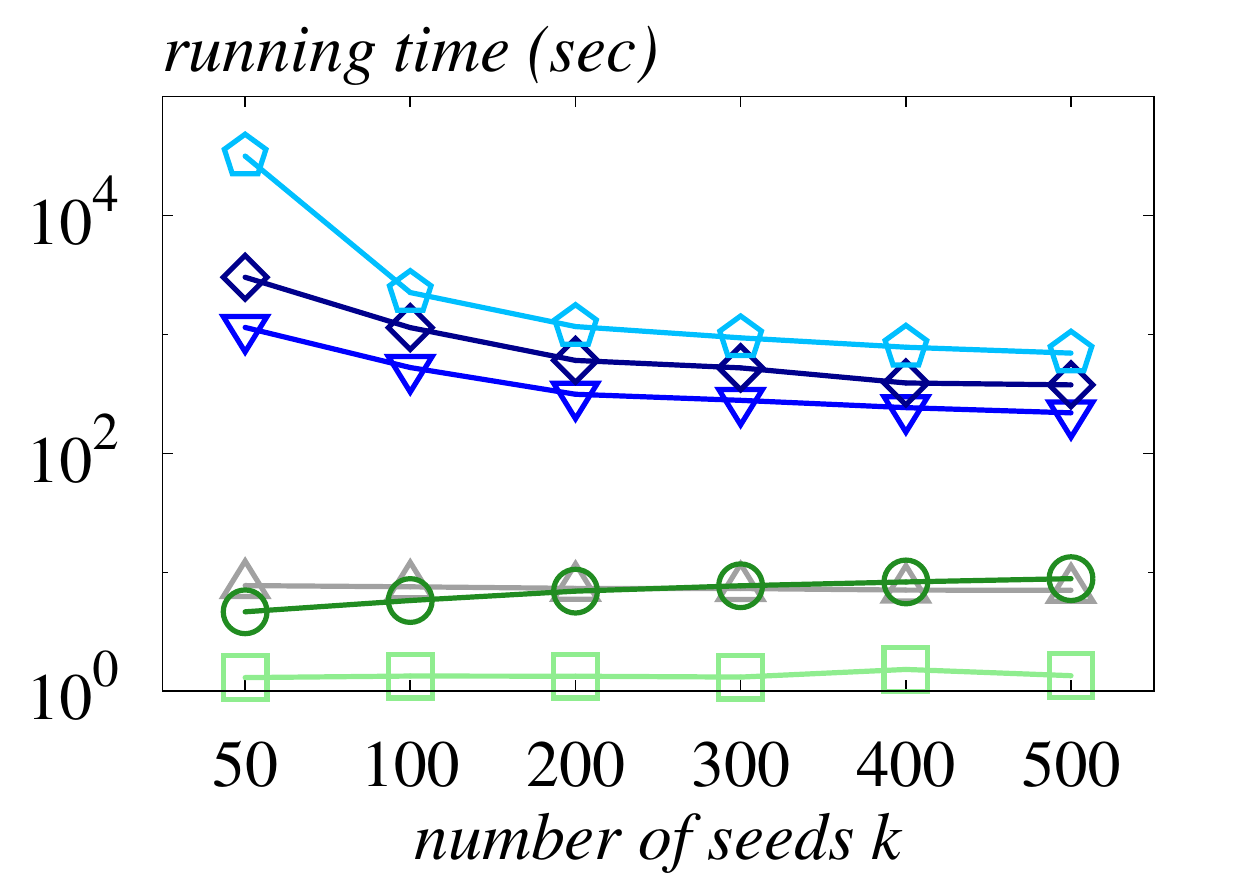}\label{fig:LiveJournal_k_time}}
	\subfloat[Orkut]{\includegraphics[width=0.205\linewidth]{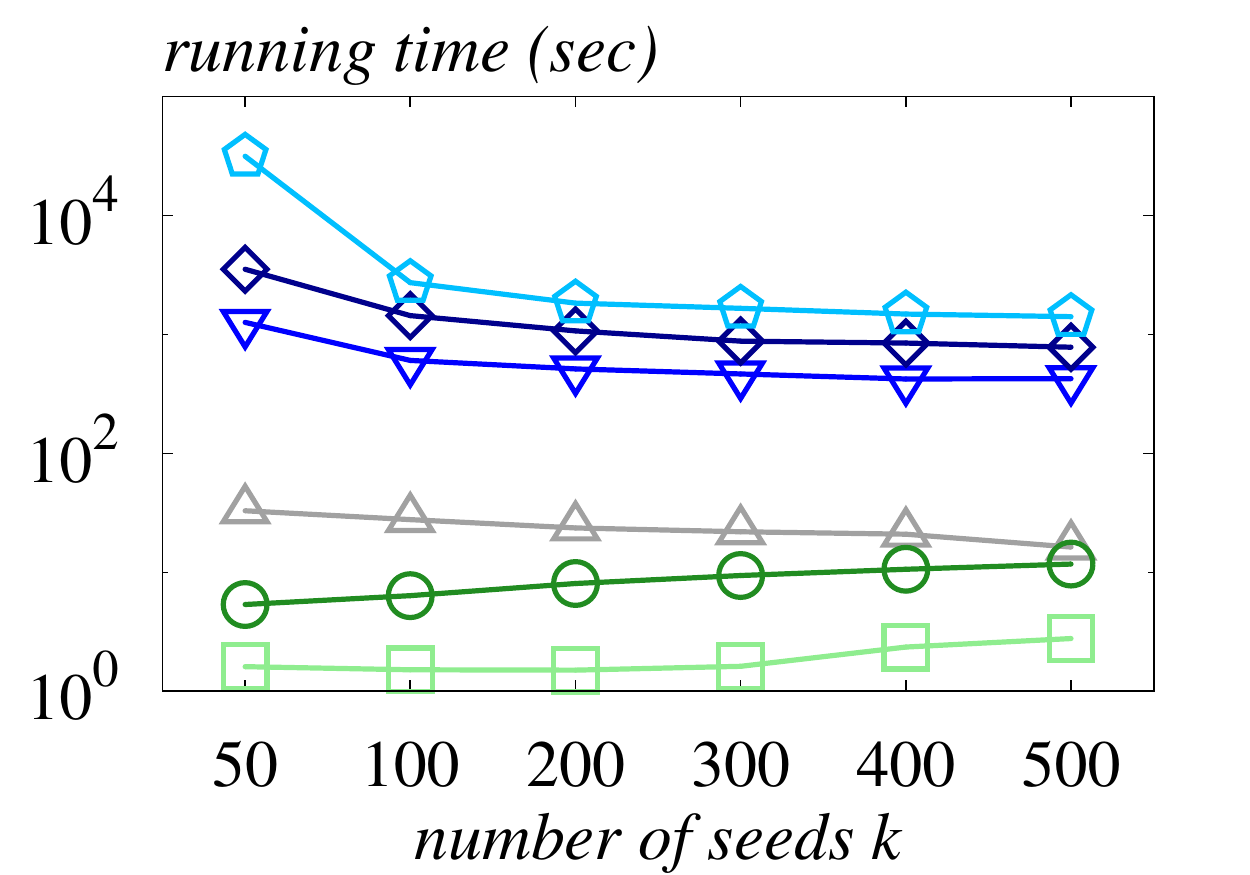}\label{fig:Orkut_k_time}}
	\caption{Running time vs. seed size.}\label{fig:time-seed}
\end{figure*}

\begin{figure*}[!t]
	\centering
    \includegraphics[height=10pt]{legend}\vspace{-0.1in}\\
	\subfloat[NetHEPT]{\includegraphics[width=0.205\linewidth]{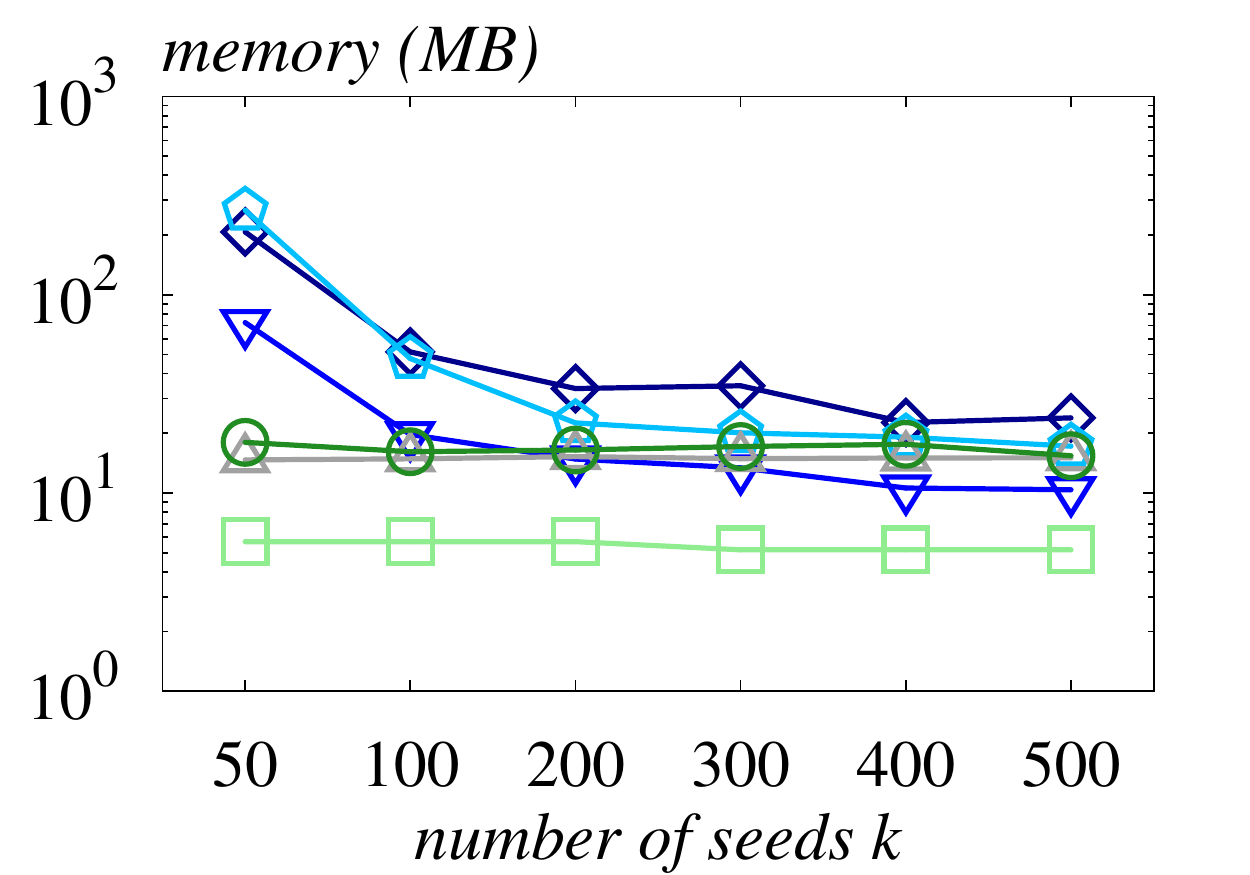}\label{fig:NetHEPT_k_mem}}
	\subfloat[Epinions]{\includegraphics[width=0.205\linewidth]{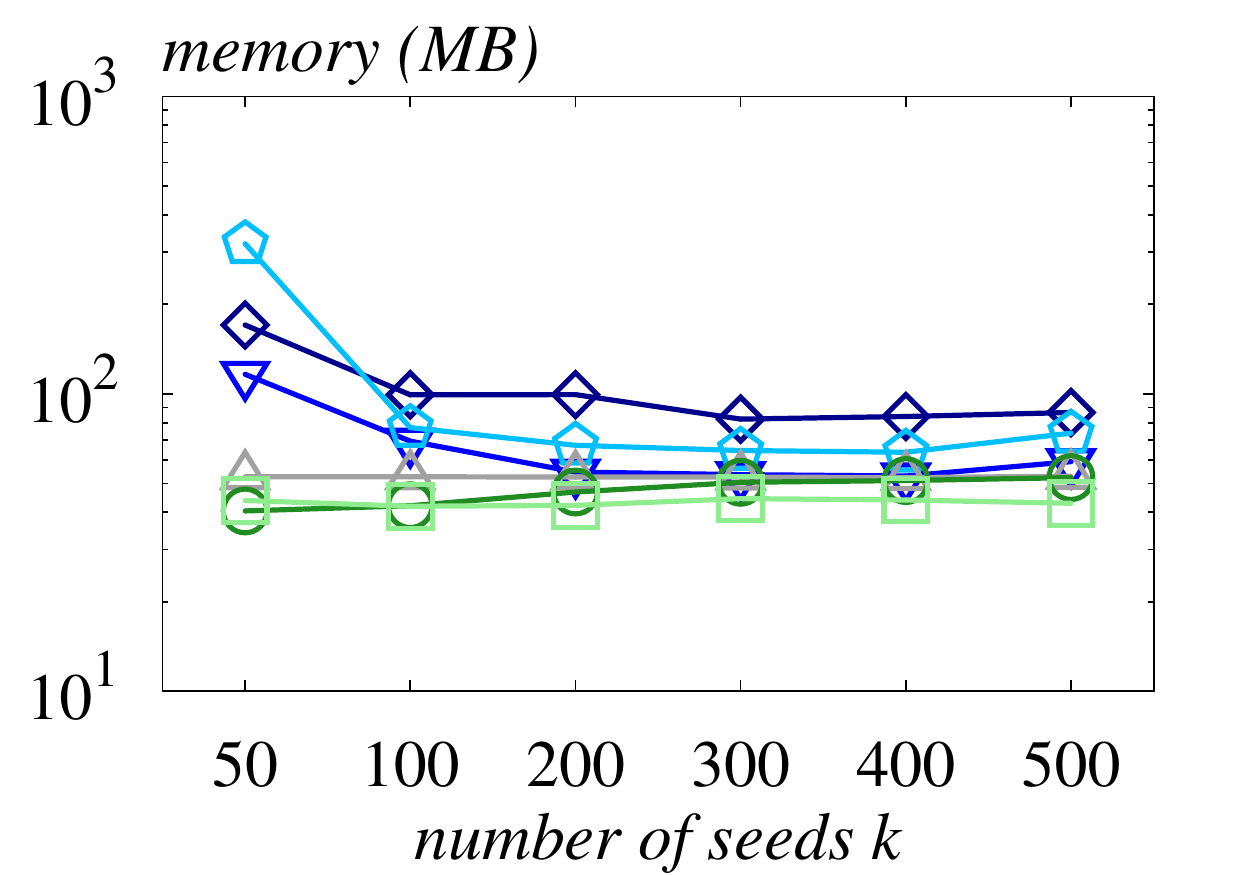}\label{fig:Epinions_k_mem}}
	\subfloat[DBLP]{\includegraphics[width=0.205\linewidth]{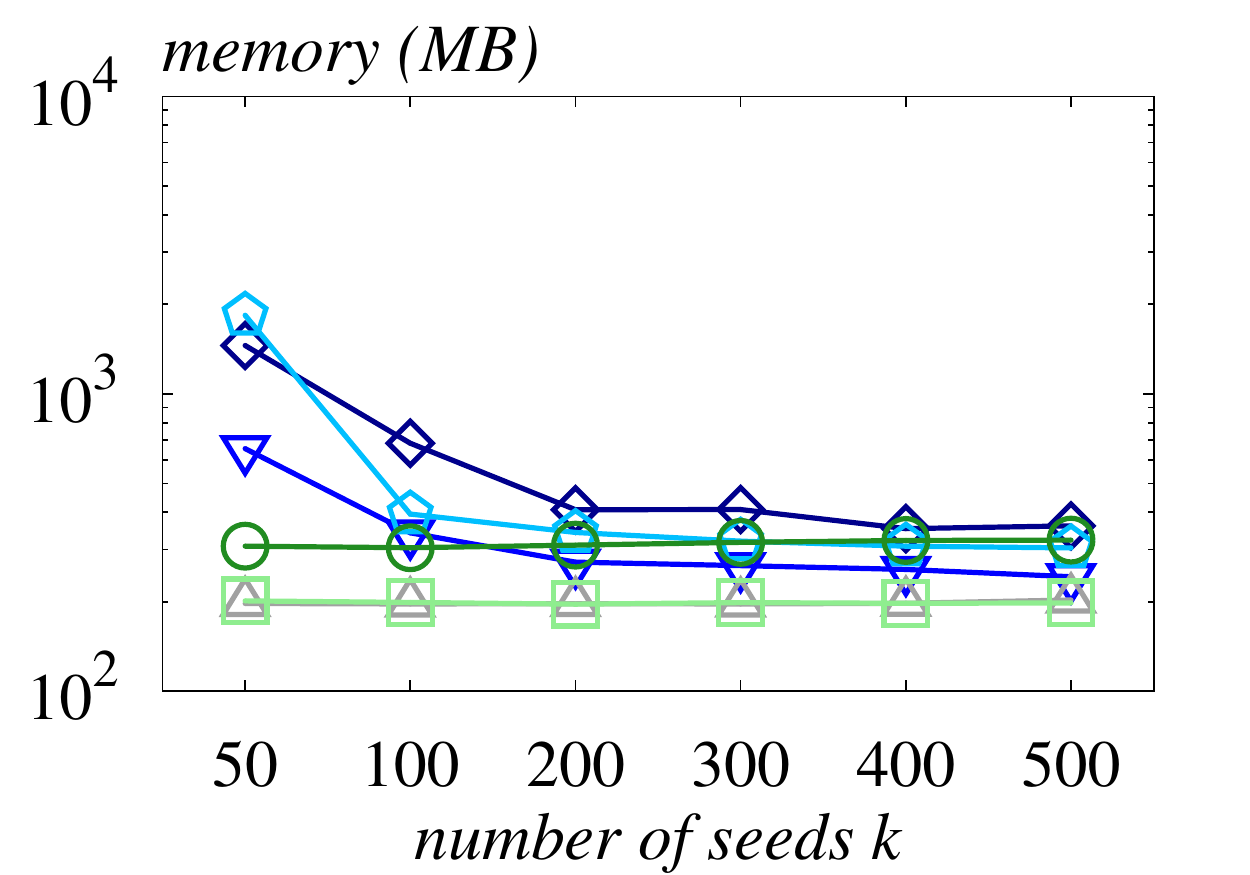}\label{fig:DBLP_k_mem}}
	\subfloat[LiveJournal]{\includegraphics[width=0.205\linewidth]{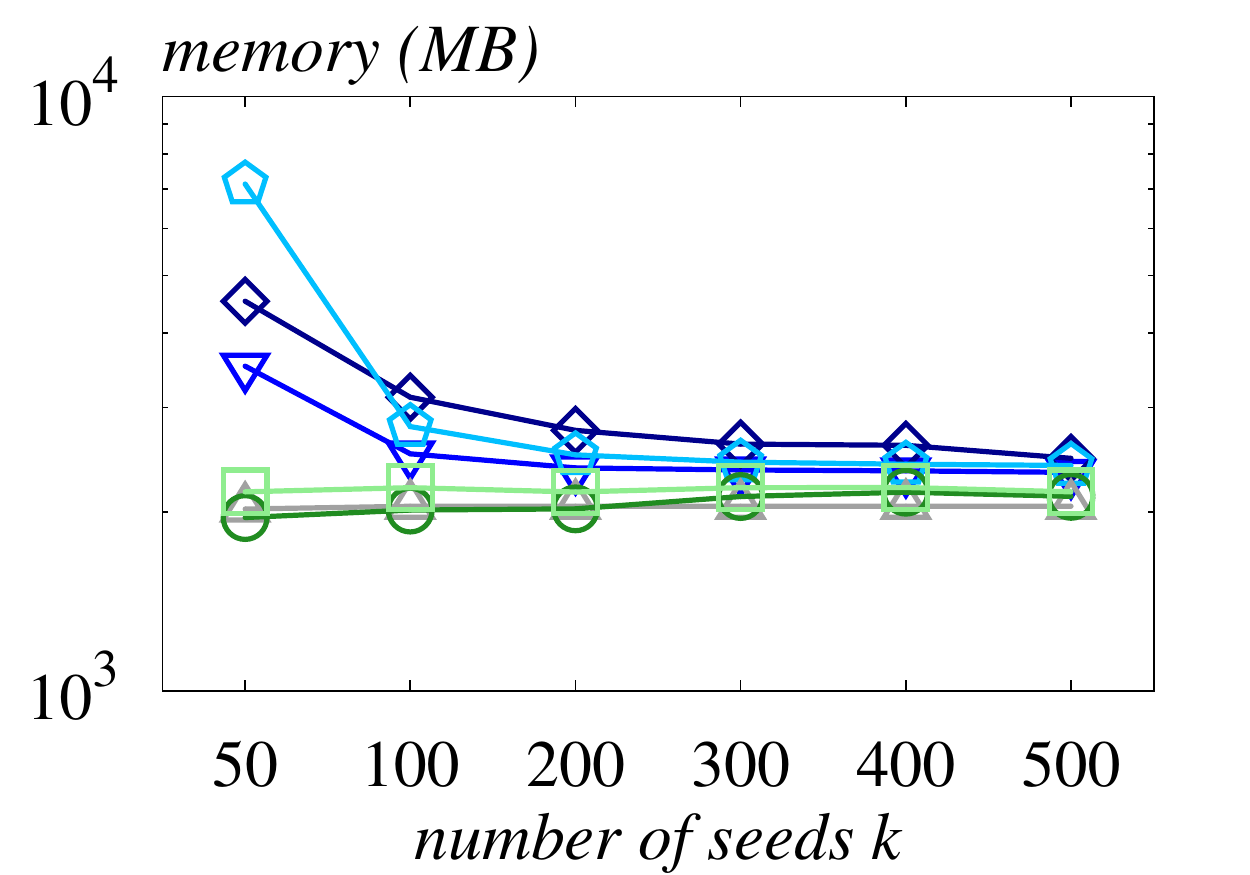}\label{fig:LiveJournal_k_mem}}
	\subfloat[Orkut]{\includegraphics[width=0.205\linewidth]{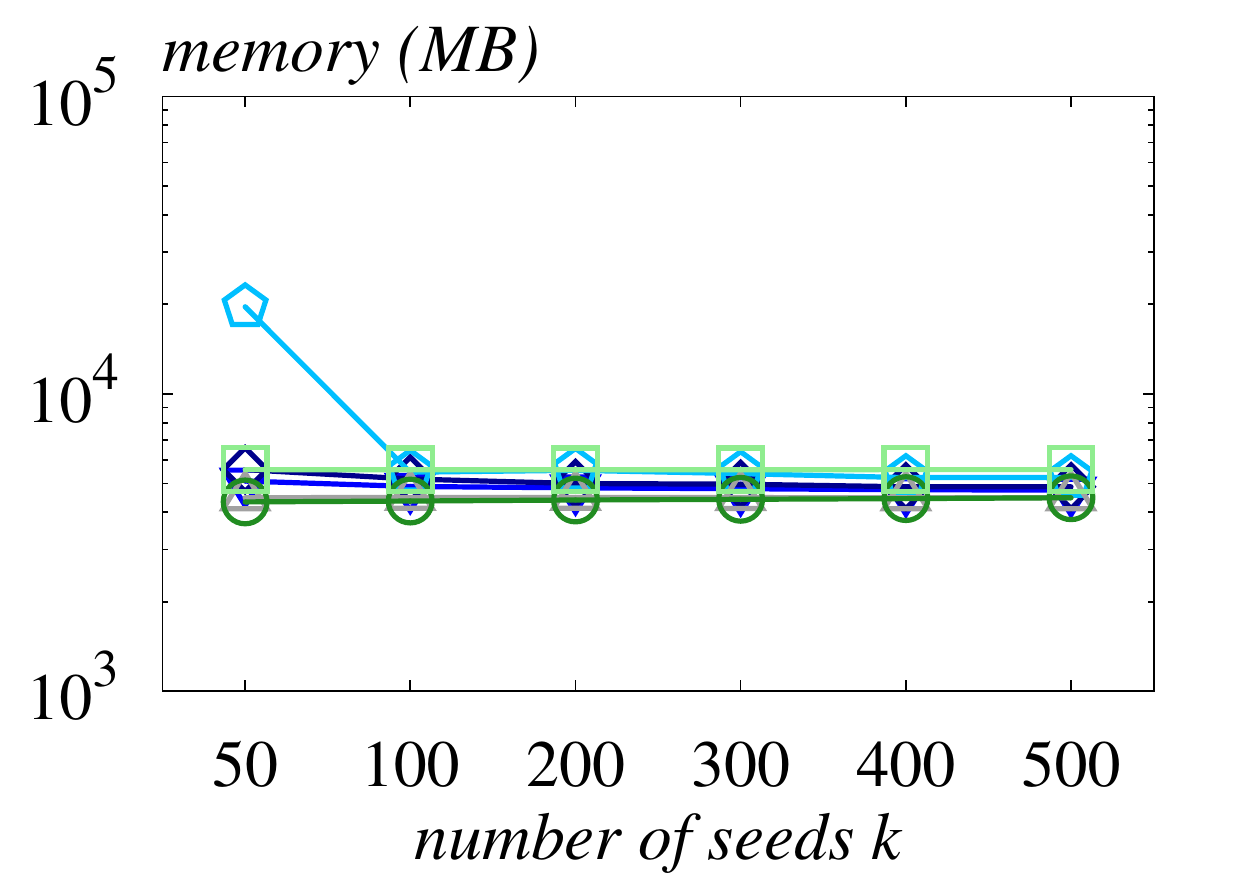}\label{fig:Orkut_k_mem}}
	\caption{Memory consumption vs. seed size.}\label{fig:memory-seed}
\end{figure*}
\subsection{Comparison of Running Time}\label{sec:running-time}

In this section, we investigate the efficiency of all tested algorithms under various seed node numbers $k$ and batch sizes $b$.

The settings of \figurename~\ref{fig:time-batch} and \figurename~\ref{fig:time-seed} follow the settings of \figurename~\ref{fig:spread-batch} and \figurename~\ref{fig:spread-seed}, respectively. In particular, \figurename~\ref{fig:time-batch} reports the running time with $k=500$ and various $b$ values under the four datasets. We observe that among the four adaptive algorithms, \fixaim surpasses the other three adaptive algorithms significantly as expected, and can even beat the non-adaptive algorithm \imm in some circumstances. This is because \fixaim generates a small number of samples (\ie~$10\mathrm{K}$) for each batch. In addition, \expepic dominates the other two adaptive algorithms on all datasets with a non-negligible advantage. Specifically, the performance gap between \expepic and \worepic tends to enlarge along the increase of the batch size $b$. This expanding gap is due to that (i) to maintain the same approximation ratio, \worepic needs to compensate for an extra factor $\sqrt{1/(2r)\cdot\ln (1/\delta)}$ on approximation error for each batch, as explained in Theorem~\ref{thm:worst-approx}, and (ii) when the seed number $k$ is fixed, this compensation factor gets larger since the batch number $r=k/b$ gets smaller. Note that \expopim runs slower than \expepic for all cases on the four datasets. When batch size $b=1$, the efficiency gap can be up to $3$ times, which demonstrates the speed improvement of our optimization in \expepic. One interesting observation is that \worepic has a slightly edge over \expopim when the batch size $b\le 5$.  

Another noticeable observation is that the running time increases along the decrease of batch size $b$. There are two main reasons. First, when the batch size $b$ becomes smaller, the marginal spread drops significantly. To maintain the same approximation, more samples are generated, which incurs considerable overhead. Second, when the number of seeds $k$ is fixed, larger $b$ value means smaller value of $r$. As mentioned, RR-sets are regenerated for each batch, and thus, a fewer number of batches leads to less sampling overhead.

\figurename~\ref{fig:time-seed} plots the running time when both seed number $k$ and batch size $b$ vary while the batch number is fixed to $r=50$. Again, \fixaim runs faster than the other three adaptive algorithms, and \imm for some cases. We also observe that the running time of \fixaim remains approximately constant, since its running time is roughly linear in the number of rounds $r$ which is a constant, \ie~$r=50$. In addition, we can see that \expepic outperforms the other two adaptive algorithms with around $1.5$--$3$ times speedup. Second, under this setting, the running time of \worepic is comparable with that of \expopim. Observe that the running time of the adaptive algorithms does not fluctuate as much as that in \figurename~\ref{fig:time-batch} when the seed size $k$ changes. This observation demonstrates that adaptive algorithms are more sensitive to the value of batch $r$ than the seed size $k$.

Note that the two non-adaptive algorithms dominate the three adaptive algorithms in efficiency, as expected. This is because non-adaptive algorithms can be seen as special adaptive algorithms with just running in one batch, which avoids enormous sampling time.

\section{Comparison of Memory Consumption}\label{sec:memory-consumption}
\figurename~\ref{fig:memory-seed} presents the memory consumptions of the tested algorithms. As shown, \fixaim and two non-adaptive algorithms, \ie \imm and \dssa, use the least memory, which remains nearly constant along with the seed size $k$. The other three adaptive algorithms, \ie \expepic, \expopim, and \worepic, consume relatively larger memory, especially for $k=50$, in which case the batch size $b=1$. Among them, \worepic needs the most memory. Observe that the memory consumptions of the adaptive algorithms approach to those of the non-adaptive algorithms when the seed size $k$ increases. To explain, the batch size $b$ increases along with the seed size $k$, which indicates that less samples would be generated for each batch of seed selection for \OPIMC \cite{Tang_OPIM_2018} based adaptive algorithms. Meanwhile, adaptive algorithms would remove all samples generated in previous batches, which could save memory significantly. Note that all memory consumptions are close on the {Orkut} dataset, since the memory taken up to store the graph itself dominates the whole memory usage.
\section{Conclusion and Future Work} \label{sec:conclude}

We have studied the adaptive Influence Maximization (IM) problem, where the seed nodes can be selected in multiple batches to maximize their influence spread. We have proposed the first practical algorithm to address the adaptive IM problem that achieves both time efficiency and provable approximation guarantee. Specifically, our approach is based on a novel \AG framework instantiated by a new non-adaptive IM algorithm \EP, which has a provable {\em expected} approximation guarantee for non-adaptive IM. Meanwhile, we have clarified some existing misunderstandings in two recent work towards the adaptive IM problem and laid solid foundations for further study. Our solution to the adaptive influence maximization is based on our general solution to the adaptive stochastic maximization problem with a randomized approximation algorithm at every adaptive greedy step, and this general solution could be useful to many other settings besides adaptive influence maximization. We have also conducted extensive experiments using real social networks to evaluate the performance of our algorithms, and the experimental results strongly corroborate the superiorities and effectiveness of our approach. 

For future work, we aim to devise new algorithms that could reuse the samples generated in previous batches to further boost the efficiency. Specifically, for unbiased spread estimation in each batch of seed selection, our current algorithms generate sufficient number of RR-sets by abandoning all samples generated in previous batches. The reason behind is that reusing the ``old'' samples generated in previous batches could incur bias for spread estimation, which will affect seed selection. To tackle this issue, we aim to develop new techniques to fix or bound the bias by sample reuse, which is expected to boost the efficiency remarkably.

\begin{acknowledgements}
This research is supported by Singapore National Research Foundation under grant NRF-RSS2016-004, by Singapore Ministry of Education Academic Research Fund Tier 1 under grant MOE2017-T1-002-024, by Singapore Ministry of Education Academic Research Fund Tier 2 under grant MOE2015-T2-2-069, by National University of Singapore under an SUG, by National Natural Science Foundation of China under grant No.61772491 and No.61472460, and by Natural Science Foundation of Jiangsu Province under grant No.BK20161256.
\end{acknowledgements}

\balance
\bibliographystyle{spbasic}      
\bibliography{reference}   

\end{sloppy}

\end{document}